\def\myMSFigureScale{0.25}
\def\myLineScale{1}
\def\BState{\State\hskip-\ALG@thistlm}
\algnewcommand{\Or}{\textbf{or}\,}
\algnewcommand\algorithmicswitch{\textbf{switch}}
\algnewcommand\algorithmiccase{\textbf{case}}
\newcommand{\set}[1]{\{#1\}}
\newcommand*{\defeq}{\mathrel{\rlap{\raisebox{0.3ex}{$\m@th\cdot$}}\raisebox{-0.3ex}{$\m@th\cdot$}}=}
\begin{document}

%\begin{frontmatter}

\title{
 CBS-Budget (CBSB): A Complete and  Bounded Suboptimal Search
 for Multi-Agent Path Finding 
}
\author{\authorblockN{Jaein Lim}
	\authorblockA{School of Aerospace Engineering\\
		Georgia Institute of Technology\\
		Atlanta, Georgia 30332--0150\\
		Email: jaeinlim126@gatech.edu}
	\and
	\authorblockN{Panagiotis Tsiotras}
	\authorblockA{School of Aerospace Engineering\\
		Institute for Robotics \& Intelligent Machines\\
		Georgia Institute of Technology\\
		Atlanta, Georgia 30332--0150\\
		Email: tsiotras@gatech.edu}}

%\author[inst1]{Jaein Lim\fnref{myfootnote}}
%\fntext[myfootnote]{Corresponding author. E-mail addresses: jaeinlim126@gatech.edu (J. Lim) tsiotras@gatech.edu (P. Tsiotras)}
%\affiliation[inst1]{organization={The Daniel Guggenheim School of Aerospace Engineering,\\ Georgia Institute of Technology},%Department and Organization
%            city={Atlanta},
%            postcode={30332}, 
%            state={Georgia},
%            country={USA}}
%
%\author[inst1,inst2]{Panagiotis Tsiotras}
%
%\affiliation[inst2]{organization={The Institute for Robotics Intelligent Machines,\\ Georgia Institute of Technology},%Department and Organization
%            city={Atlanta},
%            postcode={30332}, 
%            state={Georgia},
%            country={USA}}

\maketitle

\begin{abstract}
Multi-Agent Path Finding (MAPF) is the problem of finding a collection of collision-free paths for a team of multiple agents while minimizing some global cost, such as the sum of the time travelled by all agents, or the time travelled by the last agent. 
Conflict Based Search (CBS) is a leading complete and optimal MAPF solver which lazily explores the joint agent state space, using an admissible heuristic joint plan. 
Such an admissible heuristic joint plan is computed by combining individual shortest paths found without considering inter-agent conflicts, and which becomes gradually more informed as constraints are added to individual agents' path planning problems to avoid discovered conflicts. 
In this paper, we seek to speedup CBS by finding a more informed heuristic joint plan which is bounded from above.
We first propose the budgeted Class-Ordered A* (bCOA*), a novel algorithm that finds the shortest path with minimal number of conflicts that is upper bounded in terms of length. 
Then, we propose a novel bounded-cost variant of CBS, called CBS-Budget (CBSB) by using a bCOA* search at the low-level search of the CBS and by using a modified focal search at the high-level search of the CBS.
We prove that CBSB is complete and bounded-suboptimal.
In our numerical experiments, CBSB finds a near optimal solution for hundreds of agents within a fraction of a second.  
CBSB shows state-of-the-art performance, 
comparable to Explicit Estimation CBS (EECBS), an enhanced recent version of CBS.
On the other hand,
CBSB is easier to implement than EECBS, since only two priority queues at the high-level search are needed as in Enhanced CBS (ECBS). 

\end{abstract}
\IEEEpeerreviewmaketitle

%\begin{keyword}
%%% keywords here, in the form: keyword \sep keyword
%Multi-Agent Path Finding \sep 
%Conflict Based Search \sep
%Bounded Suboptimality
%
%%% PACS codes here, in the form: \PACS code \sep code
%% \PACS 0000 \sep 1111
%%% MSC codes here, in the form: \MSC code \sep code
%%% or \MSC[2008] code \sep code (2000 is the default)
%% \MSC 0000 \sep 1111
%\end{keyword}
%
%\end{frontmatter}
\section{Introduction}

% MAPF is important
Multi-Agent Path Finding (MAPF) is the problem of finding collision-free paths for a team of mobile agents 
from their individual start locations to their individual goal locations, 
while minimizing the sum of their travel times or the time travelled by the last agent. 
Inspired from numerous real-world applications such as warehouse logistics~\cite{Wurman2008}, automated valet parking~\cite{Okoso2019}, video games~\cite{Silver2005}, robotics~\cite{bennewitz2002} and traffic management~\cite{dresner2008,Ho2019}, MAPF has drawn significant attention recently from various research communities.
Solving MAPF with theoretical guarantees is imperative for operating many agents in the same cluttered environment, where uncoordinated plans or prioritized plans can result in deadlocks.

% MAPF is a difficult problem to solve. 
One approach to find a collision-free path minimizing some global cost is to search in the joint agent search space, that is, in the Cartesian product of the individual agents’ search spaces~\cite{ryan2008,standley2010}. 
However, the branch factor of the joint search space grows exponentially with the number of agents, prohibiting the use of a regular heuristic search algorithm like A*. 
For example, for a grid world with 30 agents, where each agent can either move up, down, left, right, or wait,
the joint state has about $9.31 \times 10^{20}$ neighbor states; even enumerating those neighbors to begin the search becomes intractable~\cite{felner2017}.  

% Previous work on lazily generating the neighbors. 
Previous complete and optimal MAPF solvers seek to explore only the neighbors of a joint state that could potentially yield an optimal solution~\cite{Wagner2011, Felner2012, Standley2011, Solovey2016, shome2020}. 
Conflict Based Search (CBS)~\cite{Sharon2015} is a state-of-the-art optimal MAPF solver that utilizes a lazy algorithm to delay generation of irrelevant neighbors. 
CBS delays the generation of irrelevant neighbors by decomposing the joint search space to multiple individual path planning problems.
CBS first computes the shortest path for each individual agent without considering other agents.
The combination of the shortest paths is an admissible heuristic plan of the joint space which does not overestimate the true optimal solution cost.
CBS then checks if the combination of the shortest paths, or the heuristic plan is indeed conflict free.
When the heuristic plan is found to be invalid because of inter-agent conflicts between paths, then CBS generates two 
new heuristic plans by only replanning for the two conflicting agents one at a time, with the additional constraint to avoid conflict at the low-level search\footnote{In case a conflict of three or more agents occurs, CBS picks any two agents to resolve the conflict at a time.
The remaining conflicts are resolved in the subsequent searches.}.
CBS continues to choose the heuristic plan with the minimum cost for examination of inter-agent conflicts until a conflict-free plan is found.
This minimal generation of admissible heuristic plans makes the exploration of the joint state space efficient and complete. 

% CBS achieves this minimal exploration by performing a best-first search in two levels.
% At the high-level, CBS constructs a binary constraint tree (CT), where
% each CT node contains a set of constraints that are used to resolve conflicts and a set of shortest paths, one for each agent, which satisfies the constraints of the node. 
% Each CT node also stores the cost that is either the sum of the paths length, or the longest path length, depending on the problem objective.
% When CBS expands a CT node with the minimum cost, CBS checks if there are any conflicts in the set of paths.
% If there is a conflict, then CBS generates two children CT nodes, where each child CT node gets an additional constraints for one of the two conflicting agents.
% At the low-level, CBS finds the new shortest paths for the individual agents with additional constraints. 
% When a CT node without any conflict is expanded, then the optimal solution is found, terminating the search.

% Shortcomings of CBS
CBS may still become intractable, however, if there exist many infeasible plans due to conflicts whose cost is yet lower than the optimal solution.
As a best-first search, CBS eliminates all those plans shorter than the optimal solution until the optimal solution is found. 
Recent improvements have made significant progress on CBS by increasing the lower bound of the optimal solution more quickly, that is, by finding a more informed admissible plan using domain-specific knowledge~\cite{Boyarski2015a, Li2020, Zhang2020, Li2019, Boyarski2020b, Felner2018}.
Solving the MAPF problem optimally is, however, NP-hard~\cite{Yu2013}, and the problem still remains unsolvable for a large number of agents with reasonable computational resources. 

% Bounded suboptimal solver comes to rescue!
If a bounded-suboptimal solution is allowed, then CBS can be modified to solve large problem instances. 
The main idea is to generate more informed, possibly inadmissible, and yet bounded, heuristic joint plans that could result in a better progress during search.
This improvement is the main topic of the paper. 

% ECBS
Enhanced CBS (ECBS)~\cite{Barer2014} is a recent variant of CBS that 
computes a bounded-suboptimal heuristic plan using a focal search. 
Focal search~\cite{Pearl1982} maintains two priority queues, one for bounding the solution cost using an admissible heuristic and the other for finding an informed and possibly inadmissible path using another heuristic. ECBS uses a focal search in both the low and high-level search of CBS to generate a more informed heuristic plan, that is a bounded-suboptimal plan with a fewer number of conflicts.

% Shortcomings of ECBS - EECBS
Recently, a further improvement to ECBS was made by replacing the high-level focal search of ECBS with Explicit Estimation Search (EES)~\cite{Li2021}. 
Focal search becomes inefficient when the two heuristics are negatively correlated. EES~\cite{Thayer2011} mitigates the negative correlation between the two heuristics of focal search, using a third heuristic which may be potentially inadmissible.
Explicit Estimation CBS (EECBS)~\cite{Li2021} uses EES in the high-level search of CBS to avoid the high-level focal search being stuck locally by considering the potential cost increment upon resolving a conflict.  
% Shortcomings of EECBS
Maintaining three different priority queues requires more memory usage and adds complexity to the algorithm. 
In fact, the number of frontier nodes (i.e., the number of joint plans) is exponential in the depth of the CBS search (i.e., the number of resolved conflicts)~\cite{Boyarski2020}, which makes memory limit 
more of a bottleneck than the time limit in CBS implementations.
% For this reason, CBS experiments are always executed with short time limits as the available memory may be exhausted. 
In addition, incorporating modern implementations of CBS to EECBS is not trivial, as specific rules need to be carefully considered,
depending on which priority queue a heuristic joint plan is chosen, 
adding more complexity to the algorithm. 

% Our contribution
In this paper, we propose a novel and simple bounded-cost variant of CBS, called CBS-Budget (CBSB). 
CBSB uses a budgeted Class-Ordered A* (bCOA*) search on the low-level of CBS to produce a more informed and still bounded heuristic plan.
Similarly to focal search, bCOA* finds the shortest path for each agent with minimal inclusion of conflicts with other agents and whose cost is upper bounded either by a given budget or by the shortest path length.
The budget for each agent is incrementally updated after each search if no shorter path exists, making the budget-induced suboptimality bound more informative.
Unlike focal search, bCOA* only uses a single priority queue by leveraging the ideas of colored planning~\cite{Wooden2006, Lim2021a, Lim2021b},
providing a tighter lower bound than focal search, since bCOA* guarantees to return the shortest path if no path shorter than a given budget exists.
This property of bCOA* allows the high-level search of CBSB to explore more informative heuristic plans according to the number of conflicts, leading to a faster computation time to find a bounded suboptimal solution.
We prove the theoretical properties of bCOA* and the completeness and bounded-suboptimality of CBSB. 
In our experiments on standard MAPF benchmarks~\cite{Stern2019}, CBSB shows state-of-the-art performance compared to the modern implementations of complete and bounded-(sub)optimal MAPF solvers such as CBS, ECBS, and EECBS.

\section{Preliminaries and Relation to Prior Work}
In this section, we formalize the definition of MAPF and provide some background on the baseline CBS algorithm and its improved variants. 

\subsection{Multi-Agent Path Finding (MAPF)}

MAPF is the problem of finding a set of collision-free paths for multiple agents in the same graph. Formally, MAPF is defined by a graph $G=$($V,E$) and a set of $m$ agents $\set{a_1,…,a_m}$, where each agent $a_i$ has a start vertex $s_i\in V$ and a target vertex $t_i\in V.$ 
Time is discretized, and at each timestep, every agent can either move to an adjacent vertex or wait at the current vertex. 
Without loss of generality, we assume that all actions have uniform cost. 
The agents remain at their target vertices after completion.
% If an agent reaches its target and wait for a while and then moves away from its target, then the waiting time is also counted in the total cost of the agent's path.

A path $p_i$ for agent $a_i$ is a sequence of adjacent vertices from $s_i$ to $t_i$. The cost of a path $p_i$ is its length. 
We say that two agents have a conflict (or equivalently, a collision) if they occupy the same vertex at the same timestep or if they move along the same edge at the same timestep. 
A plan is a set of individual agent paths $\set{p_1,…,p_m}$, and a MAPF solution is a plan that is conflict free. 
The cost of a plan is the sum of its paths lengths. 
An optimal solution is a solution whose cost is minimal.\footnote{Our algorithm is not limited to this \textit{flowtime} formulation. The cost of a plan can be defined as the longest path length to find the optimal \textit{makespan} plan.}

\subsection{Conflict Based Search (CBS)}

Conflict Based Search (CBS)~\cite{Sharon2015} is a two-level MAPF solver. 
At the low level, each agent finds the shortest path on a time-expanded graph (TEG)~\cite{surynek2017} with some variant of A*.
Each vertex in TEG is a pair of location and time, and each edge in TEG represents a move from a vertex at one timestep to another vertex at the next timestep or a wait at a vertex for one timestep. 
Each agent may be subject to a set of constraints that prohibit this agent from occupying a vertex or from moving through an edge at a certain timestep.
The constraints are generated at the high-level to avoid a conflict between the two conflicting agents. 
Specifically, at the high level, CBS performs a best-first search by constructing a binary constraint tree (CT), 
where each CT node $N$ contains a plan (possibly with conflicts) consisting of the individual paths computed at the low level, along with the corresponding sum of the costs, denoted as $cost(N)$. 
When CBS chooses a CT node with the minimum $cost$ to expand, CBS checks if a conflict occurs in the plan of that node. 
If a conflict exists, then CBS imposes constraints on the two conflicting agents one at a time in each of the two children of the CT node to avoid the conflict. 
With this additional constraint added, the low-level CBS search finds the next shortest path satisfying the constraint for the agent in the child CT node. 
CBS chooses the next best node prioritized with the best possible cost, until no conflict is found.  

In essence, CBS generates an admissible heuristic plan by finding the shortest path of individual agents, and if the plan is not conflict free, then CBS generates the next best plan by replanning the individual shortest paths of the conflicting agents with an additional constraint.
The two-level search of CBS produces an admissible heuristic plan that becomes more informed as conflicts are discovered.
Each CT node contains a heuristic plan, and a child CT node contains a more informed heuristic plan with additional constraints.
The generated plans are prioritized according to their $f=cost+h$ values and the CT node with the lowest $f$ is chosen for expansion using an A*-like search, where $cost$ is the current plan cost and $h$ is an admissible heuristic cost-increment to resolve the conflicts in the CT node. 
Hence, the chosen CT node for expansion always contains an admissible heuristic plan, ensuring the optimality of CBS.

\subsection{Modern Improvements}

\textit{Prioritizing Conflicts}~\cite{Boyarski2015a}:
The $cost$-value of a CT node monotonically increases
along the depth of CT, because each individual agent finds a path with monotonically increasing length with an additional constraint at the low level search.
If one can find a conflict that would result in a large $cost$-value increase, then CBS can make good progress toward the goal CT node without sacrificing optimality. 
This observation has motivated prioritization of conflicts so as to select a conflict that would maximally increase the $cost$-value of a child CT node. 
The potential cost increase of resolving a conflict can be determined by building a Multi-Valued Decision Diagram (MDD)~\cite{Sharon2013} for each agent, an acyclic directed graph that consists of all shortest paths of the agent.

\textit{Symmetry Reasoning}~\cite{Li2020}:
Imposing a more informed constraint to resolve a conflict can benefit progress of CBS. 
Symmetry reasoning is a technique that identifies repeated collisions between two agents due to symmetric paths. 
Domain-specific knowledge can be used to generate symmetry-breaking constraints to generate a more informed and admissible heuristic plan. 

\textit{Mutex Propagation}~\cite{Zhang2020}:
Symmetry reasoning identifies three types of conflicts (rectangle, corridor, target) and generates hand-crafted constraints. 
Mutex propagation is a technique that automates the generation of symmetry-breaking constraints using MDDs. 
Each level of MDD contains all the reachable states at that time that belong to the shortest paths to the goal satisfying the constraints.   
Two nodes at the same level of MDDs that belong to two different agents are mutex, if no bypassing conflict-free paths exist. Propagating mutex along the MDD can generate a set of more informed constraints by identifying the unreachable set of states in the MDDs. 

\textit{Weighted Dependency Graph (WDG) Heuristic}~\cite{Li2019}:
WDG heuristic is an admissible heuristic for the high-level search of CBS. Vertices in WDG represent agents and edges represent two agents that are dependent.
The dependency of two agents is determined by solving a two-agent MAPF instance for the two agents. 
If the sum-of-cost of the subproblem is larger than their current sum-of-cost, then the two agents are dependent. 
The edge weights are assigned the difference between the two sum-of-costs. 
As a result, the edge-weighted minimum vertex cover of the graph better approximates the true distance-to-go of a CT node.

\textit{Bypassing Conflicts}~\cite{Boyarski2015b}:
%
% CBS resolves a conflict by splitting a CT node into two children CT nodes where each child CT node is added with a constraint preventing the conflict found in its parent CT node. 
Bypassing conflict (BP) is a conflict-resolution technique that avoids splitting a CT node into two children CT nodes 
by simply replacing the plan of the parent CT node with a new plan found
in the child CT node if the new plan is better.
Specifically, when expanding a CT node $N$ and generating a child CT node $N'$, BP checks if the child CT node $N'$ has the same cost as the parent CT node $N$ and the number of conflicts in $N'$ is less than the number of conflicts in $N.$ 
If so, then BP replaces $N$ with $N'$ and discards all children CT nodes.
Otherwise, $N$ is split as before. 
Bypassing conflicts without splitting to children nodes produces smaller CTs and results in a reduced runtime.

\subsection{Bounded-Suboptimal Variants}

Finding an optimal solution of MAPF remains difficult for instances with large numbers of agents, even with these modern improvements. 
CBS still cannot find a solution for hundreds of agents within reasonable time or memory limits.
If a bounded-suboptimal solution is allowed, then large-scale MAPF instances can be solved. 
Some of the recent variants of CBS can find a bounded-suboptimal solution much faster than CBS are discussed below.

\subsubsection{Enhanced CBS (ECBS)}

ECBS~\cite{Barer2014} replaces the A* search of CBS with focal search, both at the high-level and at the low-level CBS searches in order to compute a more informed and bounded-suboptimal plan. 
Focal search, or A$^*_\epsilon$~\cite{Pearl1982}, is a bounded-suboptimal search algorithm that maintains two priority queues: OPEN and FOCAL. 
OPEN is a regular A* priority queue that sorts nodes according to their admissible cost estimate $f.$
FOCAL contains a subset of OPEN nodes whose $f$ values are within a user-specified bound
from the current best cost estimate, i.e., FOCAL$=\set{n\in$~OPEN~$: f(n)\leq w\, f(best_f)},$ where $best_f$ is the top node in OPEN with minimum $f$-value and $w > 1$ is a user-specified suboptimality factor.
FOCAL sorts the nodes with a different heuristic $d,$ which may be inadmissible, but potentially more informative. 

Focal search always expands a node from FOCAL with the minimum $d$ value in FOCAL,
which removes this node from both FOCAL and OPEN.
The neighbors of the expanded node are inserted to OPEN and to FOCAL if their $f$ values are within the suboptimality bound, i.e., if $f \leq w\,f(best_f)$.
This search propagation continues until a goal node is expanded.
Since the minimum $f$ value in OPEN is always a lower bound on the optimal solution cost, any node in FOCAL does not overestimate the optimal solution cost $f^*$ by more than $w$. 
Hence, when focal search returns a solution, the cost is at most $w\, f^*.$

ECBS utilizes focal search with the same suboptimality factor $w$ at both the low and the high levels of CBS.  
At the low level, each agent uses a focal search with heuristic $d$ that counts the number of collisions with other paths in the current CT node $N.$
Hence, the low level ECBS finds a path satisfying the constraints in $N$ with the minimum number of collisions with other agents, whose cost is no larger than the optimal solution cost by $w$. 
The low level ECBS also returns a minimum $f_{\mathrm{min}}^i$ value in OPEN for agent $a_i$, which is a lower bound on the cost of the shortest path for $a_i.$ 
The $f_{\mathrm{min}}^i$ value is used to obtain a lower bound at the high level ECBS search. 
At the high level, ECBS prioritizes the CT nodes in the OPEN list 
according to their $lb(N) = \sum_{i=1}^m f_\mathrm{min}^i(N)$ values. 
The high level FOCAL list contains a subset of nodes in OPEN, whose $cost(N) \leq w \, lb(N),$ sorted according to the number of conflicts in $N,$ or $h_c(N).$ 
Since the minimum $lb$ value in OPEN is a lower bound on the optimal solution cost, FOCAL results in a solution whose cost is at most $w\, cost^*.$ 

\subsubsection{Explicit Estimation CBS (EECBS)}

EECBS~\cite{Li2021} improves upon ECBS by addressing the drawbacks of the focal search in the high-level search of ECBS~\cite{Li2021}. 
The number of conflicts in a CT node $N,$ $h_c(N)$ is negatively correlated to the cost of the node $cost(N)$, because when a conflict is resolved in the child node, the cost of the child CT node will likely increase.
Hence, when the focal search chooses a CT node to expand, then the children nodes will not likely be put in the FOCAL list because of their high costs, despite the smaller number of conflicts.
Instead, a neighboring CT node from the FOCAL list with approximately the same number of conflicts will be chosen.
Consequently, the lower bound rarely increases before all nodes in FOCAL are exhausted.
Hence, if the solution is not within the initial suboptimal bound, ECBS may experience difficulty finding a bounded solution~\cite{Li2021}. 

Explicit Estimation Search (EES)~\cite{Thayer2011} addresses this negative correlation issue of the focal search by introducing a third priority queue. The additional priority queue takes the potential cost increment into consideration to remove the negative correlation between OPEN and FOCAL. 
Formally, EES maintains three lists: CLEANUP, OPEN, and FOCAL. 
CLEANUP is a regular A* open list which sorts its elements according to an admissible heuristic $f$. 
OPEN is another regular A* open list that sorts its elements according to a more informed, and possibly inadmissible, heuristic $\hat{f}$. 
FOCAL contains a subset of nodes in OPEN which does not overestimate the minimum $\hat{f}$ value in OPEN by more than a suboptimality factor $w$ and sorts them according to another potentially inadmissible heuristic $d$. 
EES expands a node $best_d$ from FOCAL only if $f(best_d) \leq w f(best_f).$ If not, then EES selects $best_{\hat{f}}$ from OPEN and expands it only if $f(best_{\hat{f}}) \leq w f(best_f)$; otherwise, EES expands $best_f$ from CLEANUP.

EECBS replaces the high-level focal search with EES.
EECBS uses an admissible heuristic $lb=\sum_{i=1}^m f_\mathrm{min}^i$ to sort the CT nodes in CLEANUP.
EECBS also uses an inadmissible heuristic $\hat{f}(N) = cost(N)+\hat{h}(N)$ to sort a CT node $N$ in OPEN, where $\hat{h}(N)$ is some approximated cost increment to resolve the conflicts in $N.$  
EECBS uses an additional inadmissible heuristic $h_c$, the number of conflicts, to sort CT nodes in FOCAL, which contains a subset of OPEN whose $\hat{f}(N)\leq w\, \hat{f}(best_{\hat{f}}).$
EECBS first tries to expand a node $best_{h_c}$ from FOCAL 
with minimum $h_c$ value 
if $cost(best_{h_c}) \leq w\, lb(best_{lb}).$
If not, then EECBS tries to expand a node $best_{\hat{f}}$ from OPEN
with minimum $\hat{f}$ value
if $cost(best_{\hat{f}}) \leq w\, lb(best_{lb}).$
Otherwise, EECBS expands $best_{lb}$ from CLEANUP, increasesing the lower bound. 

The bounded-suboptimality of ECBS/EECBS comes from the fact that a chosen CT node is always bounded above by $w\, lb(best_{lb}),$
where $lb(best_{lb})$ is the lower bound of the optimal solution cost. 
However, the $lb(best_{lb})$ value is not always a tight lower bound of the optimal solution cost, since the value is computed based on the sum $\sum_{i=1}^m f_\mathrm{min}^i,$ where $f_\mathrm{min}^i$ is the lowest $f$-value in OPEN of the low-level search of agent $a_i.$ 
When the low-level focal search expands a goal node from FOCAL and returns a path, $f^i_\mathrm{min}$ may not necessarily be the shortest path length of $a_i.$
Such a gap between $lb(best_{lb})$ and the true optimal solution cost prohibits the high-level search of ECBS/EECBS from exploring the heuristic plans having a lower number of conflicts, i.e., $h_c$, even though those paths may, in fact, be near optimal.

% EECBS shows state-of-the-art performance among the existing complete bounded-suboptimal MAPF solvers.
% However, EECBS is difficult to implement because of its complexity. 
% Also, incorporating the recent CBS improvements to EECBS is not trivial, as specific rules need to be carefully applied depending on which priority queue node is selected for expansion~\cite{Li2021}.  
% The algorithmic complexity of EECBS stems from the fundamental property of EES, which attempts to approximate the true lower bound very tightly. 

% However, in many MAPF problem instances, approximating the lowest possible cost may not benefit the search, as such paths may incur conflicts.
% For example, both ECBS and EECBS find a near-optimal solution even with an arbitrarily large suboptimality factor (see Figure~\ref{cbsb:f:cost_comparison}). 
% This indicates that a conflict-minimal plan often better approximates the true optimal plan than a shorter conflicting plan. 
% The gains obtained by closely approximating the true lower bound incrementally may not 
% outweigh the overhead of maintaining multiple priority queues. 

\section{CBS-Budget (CBSB)}

% To this end, we fix a cost upper bound $B$ to find a conflict-minimal plan faster, instead of spending computational resources approximating lower bounds incrementally. 
% Adapting to such bounded-cost search (BCS) framework from the previously discussed bounded-suboptimal search (BSS) framework has two notable advantages.
% %
% First, a BCS algorithm can find a solution much faster using any inadmissible heuristic without estimating the true lower bounds incrementally, as the task is to find any solution within a given cost bound as fast as possible. 
% %
% Second, a BCS algorithm is much simpler to implement than BSS-based algorithms, as BCS does not need additional priority queues to approximate the lower bound. 
% BCS has one fundamental drawback, however, that is, it may fail to find a solution for an arbitrary small cost bound, even if a solution exists. 

In this work we leverage the colored planning framework~\cite{Wooden2006, Lim2021a, Lim2021b} to tighten 
the gap between the estimated lower bound and the optimal solution cost. 
To this end, 
we propose a new method called budgeted COA* (bCOA*) that finds the shortest path with minimal number of conflicts whose cost is still bounded from above either by a given budget or else it 
finds the shortest path if no such path exists.
% Precisely,
% given a budget $B$, bCOA* finds the shortest path with minimal inclusion of conflicts with other existing paths at most $B$ long, if one exists, and otherwise, bCOA* finds the shortest path.
bCOA* builds upon our previous Class-Ordered A* (COA*) algorithm~\cite{Lim2021a} which is summarized next.

\subsection{Class-Ordered A* (COA*)}

COA* operates on a weighted colored graph, where the edges are partitioned into a finite set of colors with a total order
and finds the shortest path with minimal inclusion of low-ranked colored edges.
COA* finds such a path by expanding nodes in a similar manner as regular A* but using a different ordering rule to find the newly defined optimal path, that is, the shortest path with minimal inclusion of low-rank colored edges. 

Given a weighted colored graph and a start vertex and a goal vertex, COA* begins the search by initiating a search tree with the start vertex as the root node. 
COA* maintains a single priority queue OPEN, where OPEN prioritizes paths according to their constitute colors and then by their length.
COA* incrementally grows the optimal search tree toward the goal vertex, where a branch of the search tree approximates the shortest path with minimal inclusion of low-rank colored edges from the root to the leaf node. 
In essence, COA* builds a search tree similar to A*, but the search tree of COA* approximates the shortest path with minimal inclusion of low-rank colored edges to the expanded node,
rather than just the shortest path to the expanded nodes.

\begin{figure*}[thpb]
    \def\coaExampleSize{0.18}
	\centering
	\begin{subfigure}{\coaExampleSize\textwidth}
		\includegraphics[width=\myLineScale\linewidth]{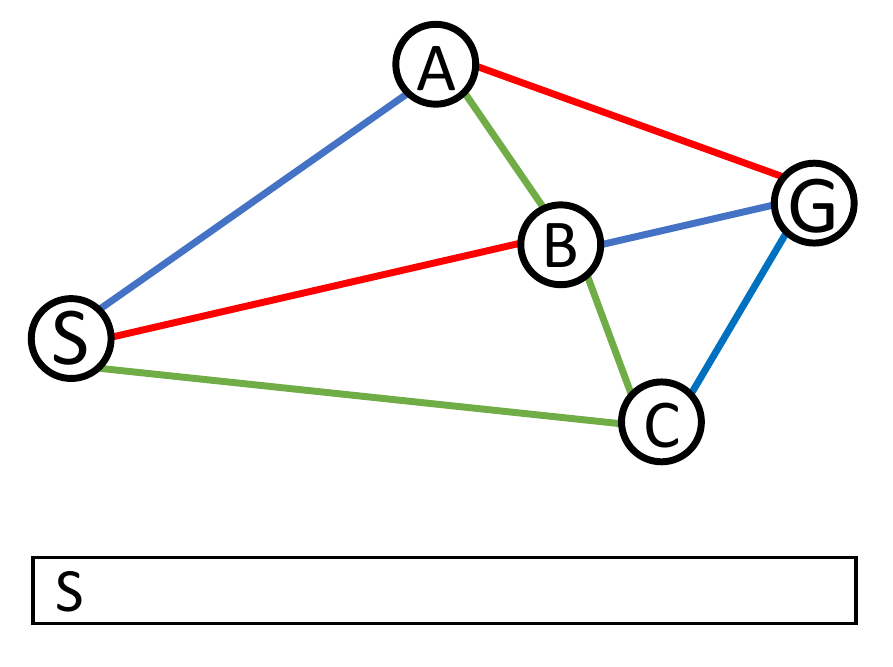}
	    \caption{}
	\end{subfigure}
	\begin{subfigure}{\coaExampleSize\textwidth}
		\includegraphics[width=\myLineScale\linewidth]{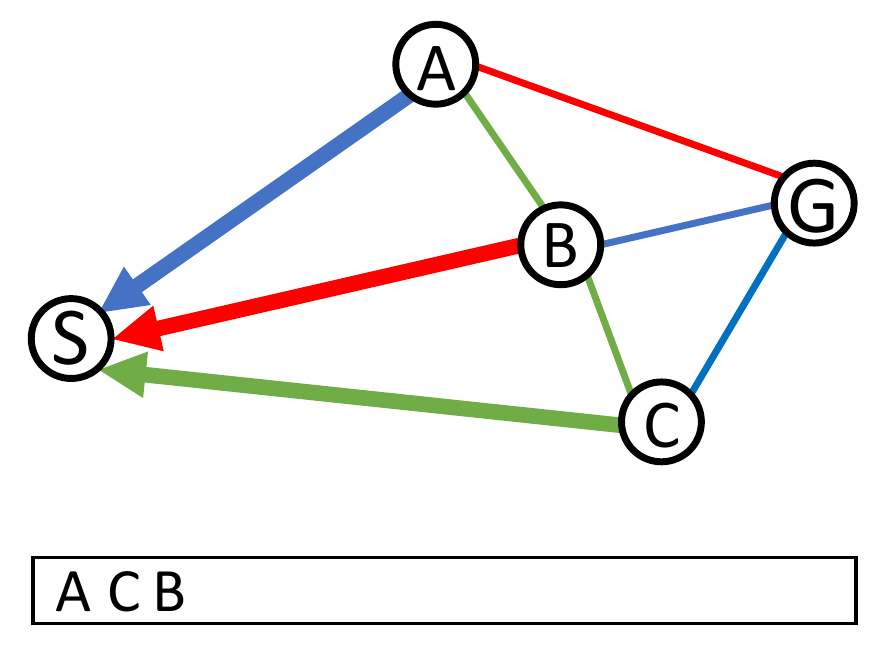}
	    \caption{}
	\end{subfigure}
	\begin{subfigure}{\coaExampleSize\textwidth}
		\includegraphics[width=\myLineScale\linewidth]{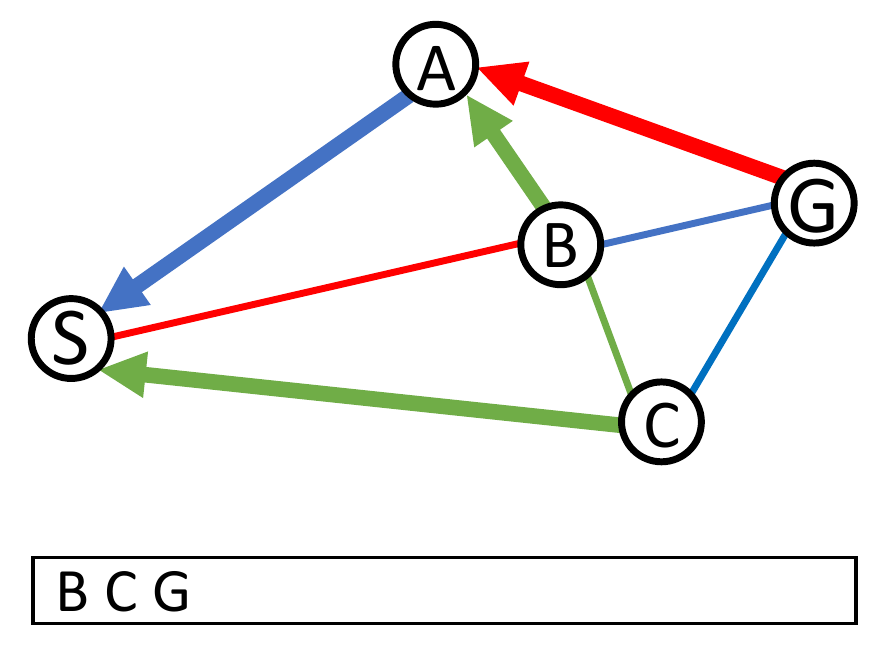}
		\caption{}	
	\end{subfigure}
	\begin{subfigure}{\coaExampleSize\textwidth}
		\includegraphics[width=\myLineScale\linewidth]{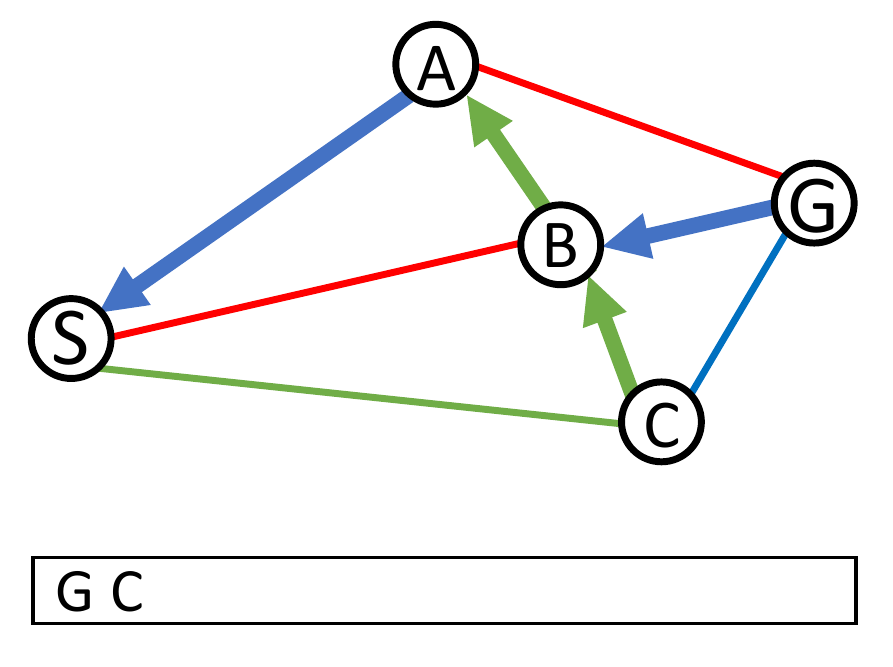}
		\caption{}
	\end{subfigure}
	\begin{subfigure}{\coaExampleSize\textwidth}
		\includegraphics[width=\myLineScale\linewidth]{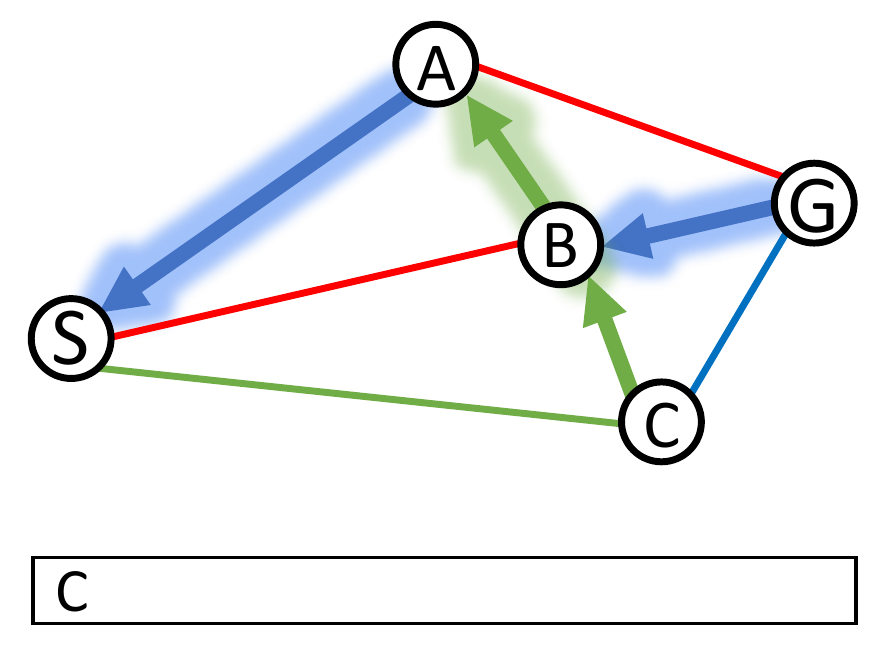}
		\caption{}
	\end{subfigure}
	\caption{Search propagation of COA* on the weighted colored graph from left to right, where COA* incrementally builds an optimal search tree to find the shortest path with minimal inclusion of red, green, and blue edges from S to G (best viewed in color). The arrowed edges are the current search tree in which the arrows are pointing to parent nodes. The box represents the OPEN list of COA* where each letter represents the frontier node of each path explored. COA* propagates the search by expanding the optimal path candidates from the OPEN list.}
	\label{cbsb:f:coastar}
\end{figure*}

Figure~\ref{cbsb:f:coastar} visualizes an example of the COA* search on a weighted colored graph, where the edge set is partitioned into three classes, namely, blue, green, and red. The length of edge is the edge weight. COA* finds the shortest path with minimal inclusion of red, green, and then blue edges from the start vertex S to the goal vertex G.
The search begins by putting the start vertex S to the OPEN list which is shown in the box below the graph (Figure~\ref{cbsb:f:coastar}a).
When S is expanded from the OPEN list, three paths are discovered and they are sorted according to their constitute colors in the OPEN list (Figure~\ref{cbsb:f:coastar}b).
The path to A is expanded next, discovering a better path to B, as the new path contains shorter red edges than the previous path. The search tree is updated and the two newly discovered paths are inserted into the OPEN list (Figure~\ref{cbsb:f:coastar}c).
When the path to B is expanded next, the paths to the two neighboring nodes are updated: the path to C is updated as the new path contains shorter green edges, and the path to G is updated as the new path contains shorter red edges (Figure~\ref{cbsb:f:coastar}d).
When the path to G is expanded, there does not exist any better path to G, halting the search. The optimal path is retrieved by tracing back to the root node (Figure~\ref{cbsb:f:coastar}e).

\subsection{Budgeted COA* (bCOA*)}

bCOA* is a special case of COA*, where expanded paths belong to either one of the two colors: \textit{class}-1 or \textit{class}-2 as follows:
A path belongs to \textit{class}-1 if the path is conflict-free or does not exceed the given budget in terms of its length; 
otherwise the path belongs to \textit{class}-2.  
bCOA* finds the shortest path with minimal inclusion of paths in \textit{class}-2 and then in \textit{class}-1. 

To illustrate the basic idea behind bCOA*, 
consider, for example, the 2D grid world in Figure~\ref{cbsb:f:bcoastar}a where an agent tries to find a path from S to T. 
The black cells are obstacles where no passing paths exists, 
and the blue cell indicates another agent sitting at that location, so a conflict will occur when crossing that cell.
Any path crossing the blue cell will be considered as of being \textit{class}-2.
For simplicity of description, assume $h=0$ for all cells.
bCOA* with an infinite budget (Figure~\ref{cbsb:f:bcoastar}b) builds a search tree from S to T and finds the shortest path that has no conflict.
Now suppose that bCOA* is given a budget $B$=2 (Figure~\ref{cbsb:f:bcoastar}c).
In this case,
bCOA* considers the paths that are more than 2 steps away from S as \textit{class}-2, that is, any path passing through the green cells will be considered as \textit{class}-2. 
As a result, bCOA* with $B$=2 builds a different search tree, and the resulting path goes through the blue cell, because that path is the shortest path with minimal inclusion of \textit{class}-2 paths.
Likewise, when bCOA* is given zero budget, then bCOA* finds the shortest path from S to T
as no \textit{class}-1 paths exist.

\begin{figure}[thpb]
    \def\bcoaExampleSize{0.22}
	\centering
	\begin{subfigure}{\bcoaExampleSize\textwidth}
		\includegraphics[width=\myLineScale\linewidth]{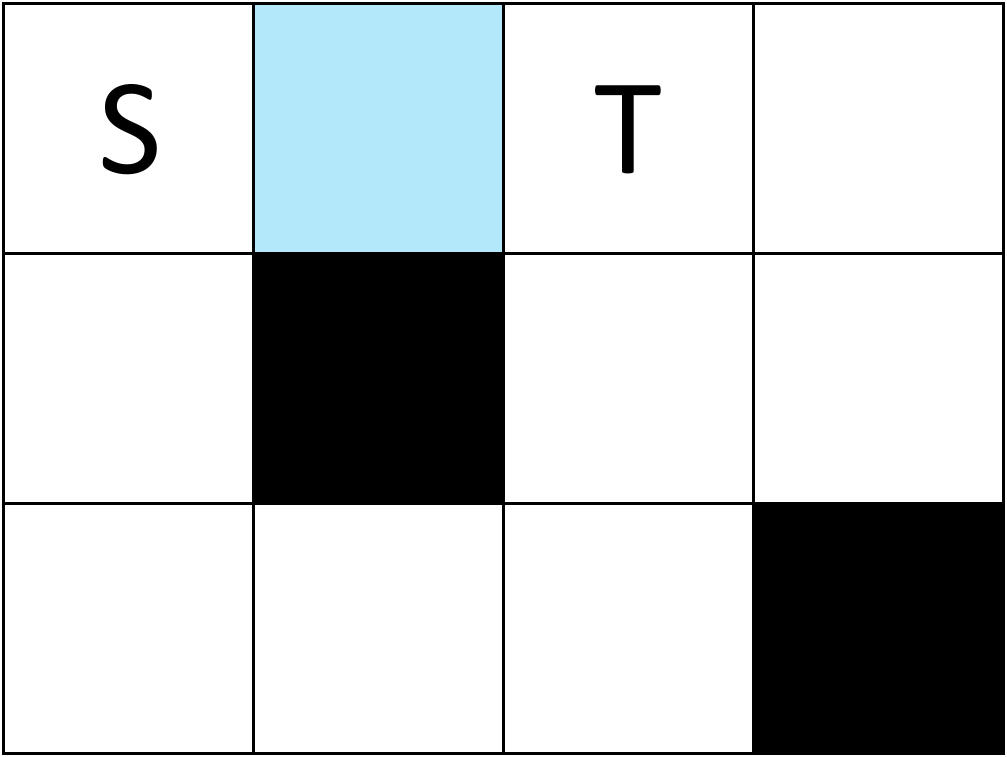}
	    \caption{A grid world}
	\end{subfigure}
	\begin{subfigure}{\bcoaExampleSize\textwidth}
		\includegraphics[width=\myLineScale\linewidth]{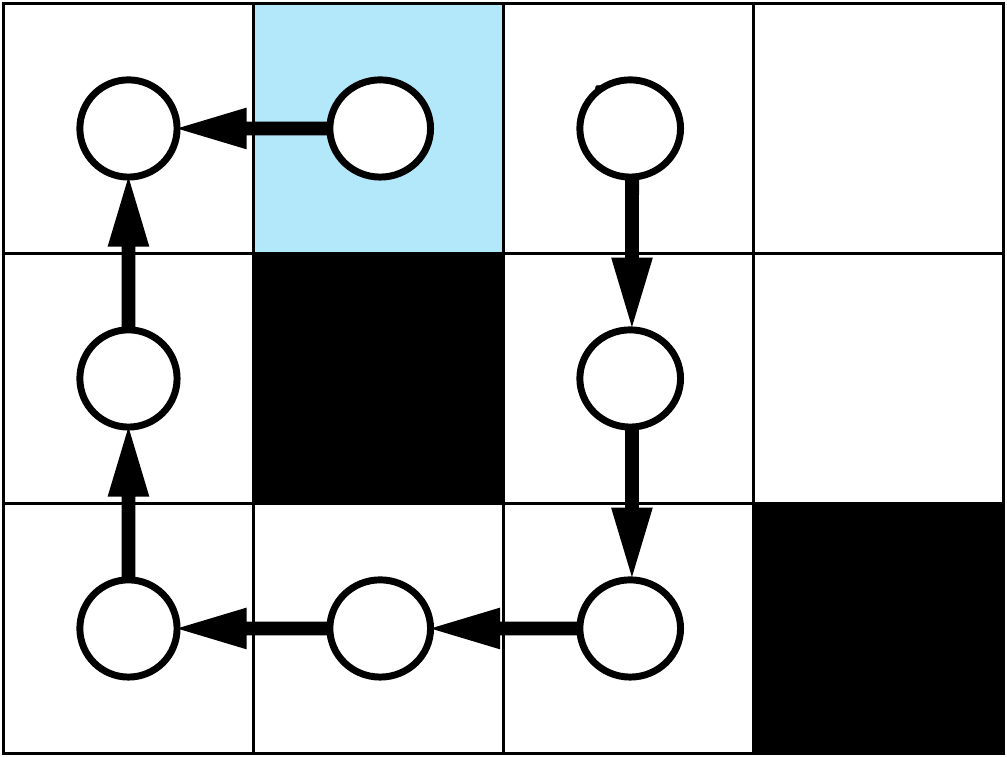}
	    \caption{$B=\infty$}
	\end{subfigure}
	\begin{subfigure}{\bcoaExampleSize\textwidth}
		\includegraphics[width=\myLineScale\linewidth]{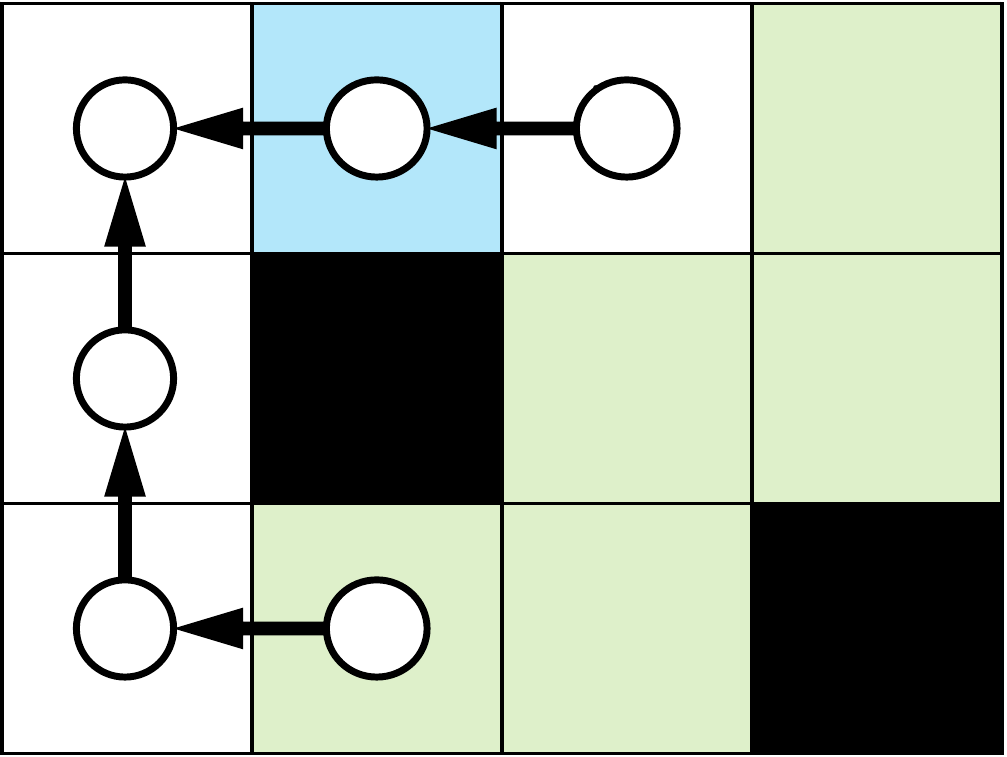}
		\caption{$B$=2}	
	\end{subfigure}
	\begin{subfigure}{\bcoaExampleSize\textwidth}
		\includegraphics[width=\myLineScale\linewidth]{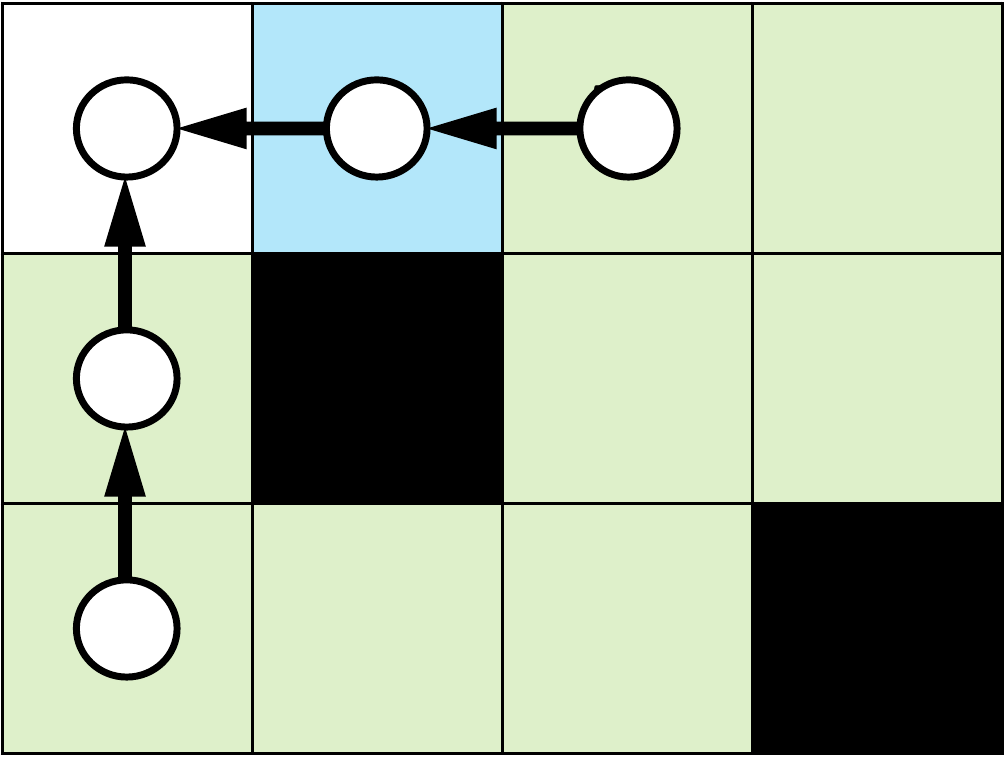}
		\caption{$B$=0}
	\end{subfigure}
	\caption{A 2D grid environment with three bCOA* search trees with different budget values. A path can be retrieved by following the arrows (parent nodes).}
	\label{cbsb:f:bcoastar}
\end{figure}

Given a budget $B$, bCOA* returns a solution having the minimal number of conflicts that is at most $B$-long, if such a path exists. 
If no solution shorter than $B$ exists, then bCOA* returns the shortest path.
Hence the solution returned by bCOA* with a budget $B$ is bounded above by $\max(B,f^*),$ where $f^*$ is the shortest path length in the graph.
Before we prove the properties of bCOA*,
let us first examine how the properties of bCOA* can benefit CBS.  

\subsection{Details of the CBSB algorithm}

The proposed algorithm, CBS-Budget (CBSB), uses the budgeted COA* (bCOA*) on a time-expanded graph (TEG) at the low-level search of CBS and 
uses a modified focal search at the high-level search of CBS,
in order to expand a bounded-suboptimal heuristic plan with the minimum number of conflicts.
% At the low-level of CBSB, each agent is given a budget $b_i$ and bCOA* finds for each agent the shortest path with minimal inclusion of conflicts that is at most $b_i$ long if such a path exists. If no solution shorter than $b_i$ exists, then bCOA* returns the shortest path with length $f_i^*$. 
% Hence, when $b_i$ does not overestimate the shortest path by more than a suboptimality factor, the path returned by bCOA* is also bounded suboptimal. 
The modified high-level focal search of CBSB is different than the high-level focal search of ECBS.
ECBS uses a lower bound, namely, $lb$-value to approximate the suboptimal upper bound, i.e., $w\, lb(best_{lb}) \leq w\, cost^*,$
where $best_{lb}$ is the top CT node of OPEN of the ECBS high-level focal search.
On the other hand, CBSB uses a more informed suboptimal upper bound, called $b$-value which never over-approximates the true optimal solution cost by more than $w$, i.e., $b(best_b)\leq w\,cost^*,$
where $best_b$ is the top node CT of OPEN of the CBSB high-level focal search.
% The $b$-value for each CT node of CBSB monotonically increases along with the CT similar to the $lb$-value for each CT node of ECBS/EECBS, serving as a suboptimal upper bound. However, the $b$-value provides a more informative bound than $lb$-value.

To achieve this more informative suboptimal upper bound, $b(best_b),$  a CT node $N$ of CBSB maintains a list $\set{b_i(N)}_{i=1}^m,$ where $b_i(N)$ is a budget given for agent $a_i$ to find a path using bCOA*.
The $b$-value of $N$ is simply the sum of these individual budgets, i.e., $b(N)=\sum_{i=1}^m b_i(N).$
The root CT node of CBSB is initially assigned $b_i = w\, \hat{f_i},$ where $\hat{f_i}$ is an admissible heuristic cost estimate and $w$ is a user-specified suboptimality factor.
Therefore, $b_i$ does not over-approximate the shortest path of agent $a_i$ with length $f_i^*,$ i.e., $b_i = w\,\hat{f_i} \leq w\, f_i^*.$
These $b_i$ values are incrementally updated based on the result of the low-level search of CBSB, which then are used for the low-level search at the children CT nodes.
% After bCOA* with $b_i$ finds a path for $a_i,$
% $b_i$ is updated with the result of the low-level search for the children CT nodes.

At the low level,
bCOA* using the values $b_i$ computes the shortest path with minimal inclusion of conflicts, that is 
at most $b_i$-long for agent $a_i$,
while satisfying the constraints, if one exists;
otherwise, bCOA* finds the shortest path that satisfies the constraints.
Suppose bCOA* finds a path for agent $a_i$ with length $f_i.$
If $b_i < f_i$, then $f_i=f_i^*,$ which we will prove in~Section~\ref{cbsb:subsection:analysis}.
Hence, if $b_i < f_i$ after the low-level search of agent $a_i$, 
CBSB updates $b_i$ for the child CT node with $w\,f_i,$
so that $b_i$ becomes more informed and yet still never over-approximates the shortest path satisfying the current constraints by more than $w$, i.e., $b_i \leq w\,f_i^*.$
Then $b=\sum_i^m{b_i}$ for the current CT node with updated $b_i$'s is computed, which never overestimates the optimal cost subject to the constraints in that CT node.

At the high level, CBSB maintains two priority queues: OPEN and FOCAL. OPEN is a regular A* open list that sorts CT nodes according to their $b$ values. The minimum $b$ value of OPEN, or $b(best_b)$ never overestimates the true optimal solution cost by more than $w$ factor. 
FOCAL contains a subset of OPEN whose cost estimate does not exceed the minimum $b$ of OPEN and sorts according to the number of conflicts, i.e.,
FOCAL$=\set{n\in$~OPEN~$: cost(n)+h(n) \leq b(best_b)}.$
CBSB always chooses a node from FOCAL with the minimum number of conflicts.
Hence, when CBSB expands a goal node $N$, then the cost of the node does not exceed the optimal solution cost by more than a $w$ factor, i.e., $cost(N) \leq b(best_b)\leq w\,cost^*.$
Note that in case $w=1,$ the $b$-value of each CT node is equal to the $cost$-value since $b_i=f_i^*$ for all agents, so FOCAL of CBSB always expands the same node as head(OPEN), and CBSB reduces down to CBS.

\begin{algorithm}
	\caption{High-level search of CBSB\textcolor{blue}{-BP}}\label{cbsb:a:high-level}
	\begin{small}
	\begin{algorithmic}[1]
% 	    \Procedure{GenerateRoot}{$ $}
% 	        \State $R\gets$ a new CT node
% 			\For {\textbf{all } agent $a_i$}
% 				\State $R.b_i\gets w\,\hat{f_i}$;
% 				\State $R.p_i\gets$ \textsc{FindPath($R.b_i$)};
% 				\State $R.b_i\gets \max{\set{R.b_i, w\, |p_i|}}$
% 			\EndFor
% 			\State $R.b\gets \sum_i^m{R.b_i}$
% 			\State \Return $R$;
% 	    \EndProcedure
	    \Procedure{PushNode}{$N$}
	        \State OPEN $\gets$ OPEN $\cup$ $\set{N}$;
	        \If{$cost(N) + h(N) \leq b$(head(OPEN))}
	            \State FOCAL $\gets$ FOCAL $\cup$ $\set{N}$;
	        \EndIf
	    \EndProcedure	    
	    \Procedure{SelectNode}{$ $}
	        \If{$b_\mathrm{min} <$ $b$(head(OPEN))}
	            \State $b_\mathrm{min} \gets$ $b$(head(OPEN));
	            % \State FOCAL $\gets \varnothing$;
	            \For{\textbf{all } $N\in$ OPEN}
	                \If{$cost(N)+h(N) \leq b_\mathrm{min}$}
	                    \State FOCAL $\gets$ FOCAL $\cup$ $\set N$;
	                \EndIf 
	            \EndFor 
	        \EndIf
	        \State $N\gets$ head(FOCAL);
	        \State FOCAL $\gets $ FOCAL $\backslash \set{N}$;
	        \State OPEN $\gets $ OPEN $\backslash \set{N}$;
	        \State \Return $N;$
	    \EndProcedure	    
		\Procedure{Main}{$ $}
            \State $R\gets$\textsc{GenerateRoot}();
            \State \textsc{PushNode}($R$);
            \While{OPEN is not empty}
                \State $N\gets$\textsc{SelectNode}();
                \If{$N.conflicts$ is empty} \label{cbsb:al:terminate}
                    \State \Return $N.paths$;
                \EndIf
                \State $conflict\gets$ \textsc{ChooseConflict}($N$); \label{cbsb:a:chooseconflict}
                \State $constraints\gets$ \textsc{ResolveConflict}($conflict$);
                \label{cbsb:a:resolveconflict}
                \For{\textbf{each } $constraint$ in $constraints$}
                    \State $N'\gets$ \textsc{GenerateChild}($N, constraint$);
                    \If{\textcolor{blue}{$cost(N')\leq b_\mathrm{min}$ and $h_c(N')<h_c(N)$}} \label{cbsb:al:bypass-begin}
                        \State \textcolor{blue}{$N.paths \gets N'.paths$;}
                        \State \textcolor{blue}{$N.conflicts \gets N'.conflicts$;}
                        \State \textcolor{blue}{Go to Line~\ref{cbsb:al:terminate};} \label{cbsb:al:bypass-end}
                    \EndIf
                    \State \textsc{PushNode}($N'$); 
                \EndFor 
            \EndWhile
		\EndProcedure
	\end{algorithmic}
	\end{small}
\end{algorithm}

The pseudo code of the high-level CBSB is provided in Algorithm~\ref{cbsb:a:high-level}.
CBSB is similar to CBS in the way it chooses a conflict and imposes constraints on the two children CT nodes (Algorithm~\ref{cbsb:a:high-level} Line~\ref{cbsb:a:chooseconflict}-\ref{cbsb:a:resolveconflict}). 
CBSB is different from CBS in the way a CT node is prioritized for expansion.
When a new CT node is generated, CBSB pushes the node to OPEN and also to FOCAL if the cost estimate is lower than the minimum $b$-value in OPEN.
When choosing a CT node for expansion, CBSB first updates FOCAL according to the current minimum $b$-value in OPEN, and then selects a node from FOCAL with the minimum number of conflicts.
% At the low-level search, CBSB uses a current $b_i$ to compute a path using bCOA*, and then updates $b_i$ and $b$-value accordingly.
CBSB repeats until the chosen node contains no conflict.

\begin{algorithm}
	\caption{Low-level search of CBSB}\label{cbsb:a:low-level}
	\begin{small}
	\begin{algorithmic}[1]
	    \Procedure{GenerateRoot}{$ $}
	        \State $R\gets$ create a new CT node
			\For {\textbf{all } agent $a_i$}
				\State $R.b_i\gets w\,\hat{f_i}$;
				\State $R.p_i\gets$ \textsc{FindPath($a_i, R.b_i$)};
				\If{$R.b_i < |R.p_i|$}
				    $R.b_i \gets w\, |R.p_i|$;
				\EndIf
			\EndFor
			\State $R.b\gets \sum_i^m{R.b_i}$
			\State \Return $R$;
	    \EndProcedure
	    \Procedure{GenerateChild}{$N, constraint$}
	        \State $N' \gets $ copy $N$;
            \State $N'.constraints$ $\gets$ $N'.constraints \cup constraint$;
            \State $N'.p_k\gets$ \textsc{FindPath}($a_k, N'.b_k, N'.constraints$);
            \If{$N'.b_k < |N'.p_k|$}
                $N'.b_k \gets w\, |N'.p_k|$;
            \EndIf
            \State $N'.b\gets \sum_i^m{N'.b_i}$;
	        \State \Return $N'$;
	    \EndProcedure	    
	   % \Procedure{GenerateChildren}{$N, constraints$}
	   %     \State $N_1, N_2 \gets $ copy $N$;
	   %     \For {\textbf{all } $N_j,$ where $j =1,2$}
	   %         \State $N_j.constraints$ $\gets$ $N_j.constraints \cup constraints.k$;
	   %         \State $N_j.p_k\gets$ \textsc{FindPath}($a_k, N_j.b_k, N_j.constraints$);
	   %         \If{$N_j.b_k < |N_j.p_k|$}
	   %             $N_j.b_k \gets w\, |N_j.p_k|$;
	   %         \EndIf
	   %         \State $N_j.b\gets \sum_i^m{N_j.b_i}$
	   %     \EndFor 
	   %     \State \Return $(N_1, N_2)$
	   % \EndProcedure
	\end{algorithmic}
	\end{small}
\end{algorithm}

\subsection{CBSB with Bypassing Conflicts}
Bypassing Conflict (BP)~\cite{Boyarski2015b} is a technique that can be used on top of CBSB (Algorithm~\ref{cbsb:a:high-level} Line \ref{cbsb:al:bypass-begin}-\ref{cbsb:al:bypass-end}) to reduce the size of the CT by avoiding 
the splitting of CT nodes. 
This improvement is denoted as CBSB-BP and is shown in blue in Algorithm~\ref{cbsb:a:high-level}.
Since CBSB finds a bounded suboptimal path, we can relax the conditions of accepting bypasses. 
Specifically, when expanding a CT node $N$ and generating a child CT node $N',$ CBSB-BP checks if $cost(N')<b_\mathrm{min}$ and $h_c(N')<h_c(N)$, where $b_\mathrm{min}$ is the $b$-value of the top node of OPEN and $h_c$ is the number of conflicts. 
Note that $cost(N')<b_\mathrm{min}$ ensures that the paths in $N'$ is still bounded suboptimal.
If so, then CBSB-BP adopts the paths of $N'$ and the conflicts of $N'$ for $N$.
Otherwise, CBSB-BP splits $N$ as before. 

% Example
\subsection{Example}
Let us momentarily, skip the lines in blue (Algorithm~\ref{cbsb:a:high-level} ~Line~\ref{cbsb:al:bypass-begin}-\ref{cbsb:al:bypass-end}) and let us describe the basics of CBSB using the example shown in Figure~\ref{cbsb:f:cbsb-example-env}.
Figure~\ref{cbsb:f:cbsb-example-env} shows an MAPF instance where the bears need to swap their positions, that is, agent 1 at A needs to reach C, and agent 2 at C needs to reach A. 
\begin{figure}[thpb]
	\centering
		\includegraphics[width=0.7\linewidth]{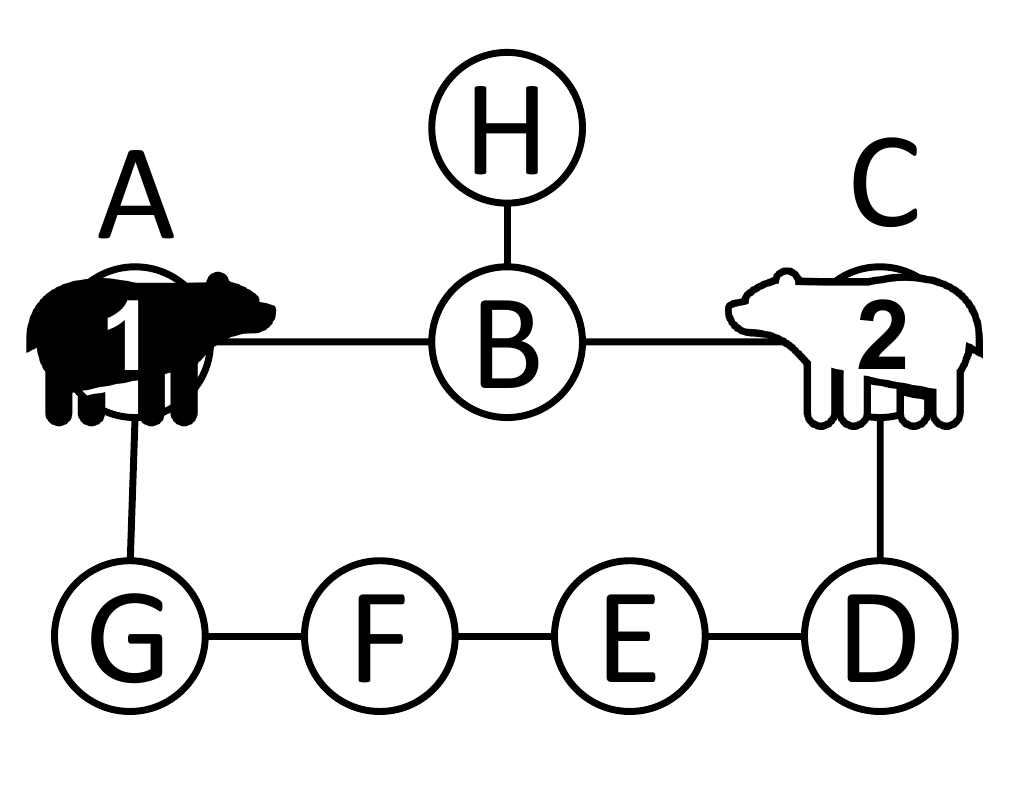} 
	\caption{An example of MAPF instance with 2 agents. The two bears must swap their positions.}
    \label{cbsb:f:cbsb-example-env}
\end{figure}

% \begin{figure}[thpb]
% 	\centering
% 		\includegraphics[width=0.6\linewidth]{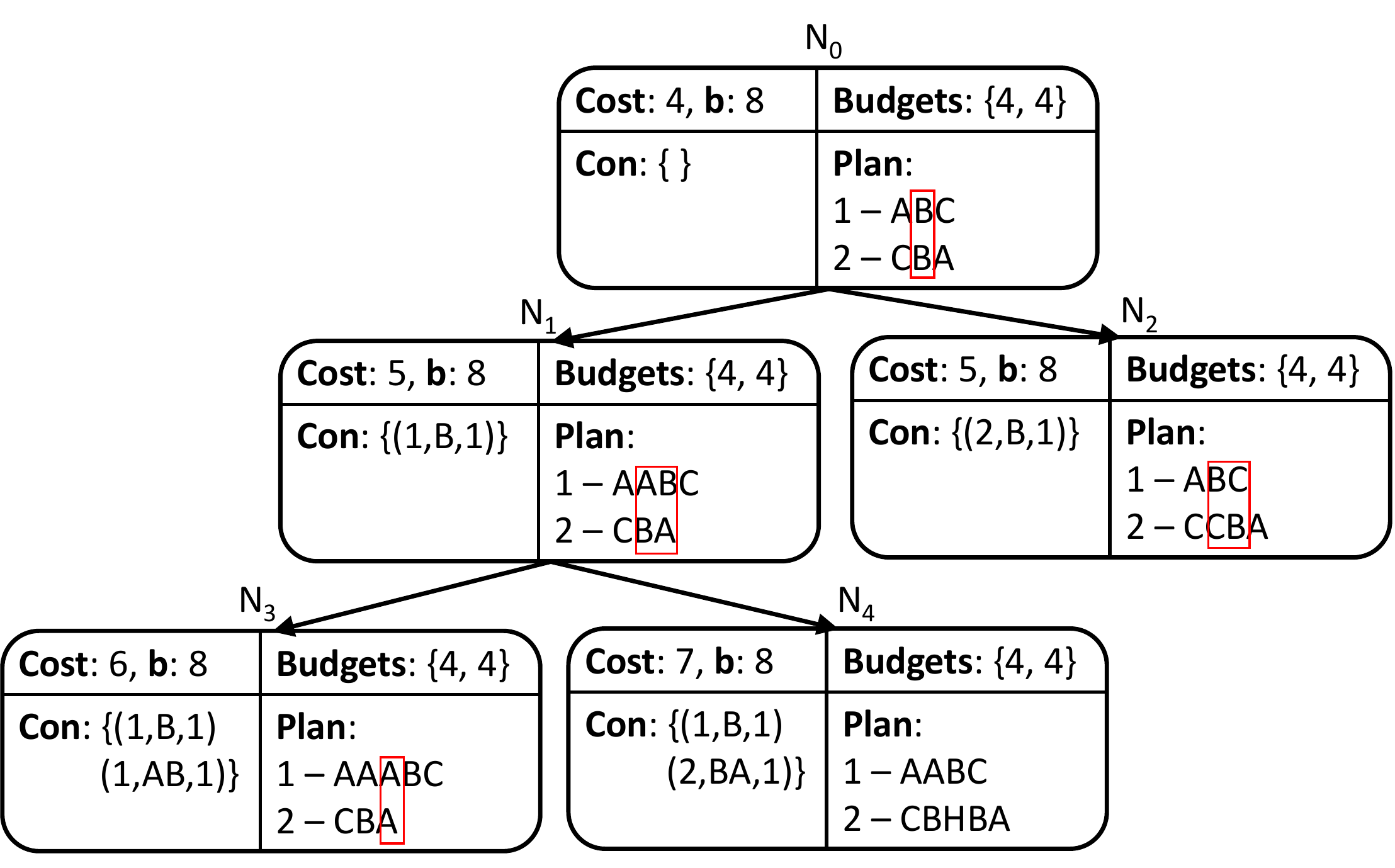} 
% 	\caption{The CT of CBSB with $w$=2.}
%     \label{cbsb:f:cbsb-example-w2-ct}
% \end{figure}
% \begin{figure}[thpb]
% 	\centering
% 		\includegraphics[width=0.6\linewidth]{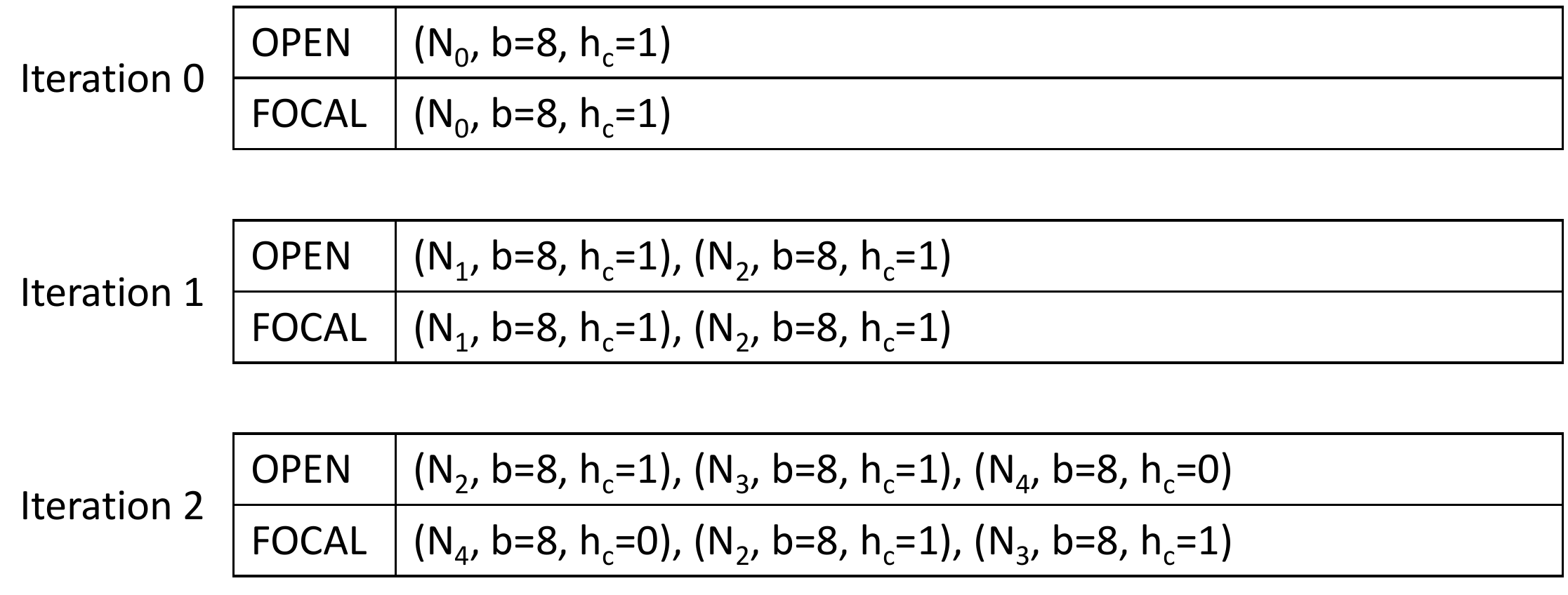} 
% 	\caption{The high level CBSB search.}
%     \label{cbsb:f:cbsb-example-w2-queue}
% \end{figure}
\begin{figure*}[t]
	\centering
	\begin{subfigure}{0.6\textwidth}
	    \includegraphics[width=1\linewidth]{figures/cbsb_env_w2_ct.pdf} 
	    \caption{The CT of CBSB with $w$=2.}
	    \label{cbsb:f:cbsb-example-w2-ct}
	\end{subfigure}\\[1ex]	
    \begin{subfigure}{0.6\textwidth}
        \includegraphics[width=1\linewidth]{figures/cbsb_env_w2_queue.pdf} 
	    \caption{The high level CBSB search iterations with $w=2$.}
	    \label{cbsb:f:cbsb-example-w2-queue}
    \end{subfigure}
    \caption{The CBSB search with $w=2.$}
    \label{cbsb:f:cbsb-example-w2}
\end{figure*}

We will first consider the CBSB search with suboptimality factor of 2 (i.e., $w$=2). 
The corresponding CT is shown in Figure~\ref{cbsb:f:cbsb-example-w2-ct}. 
The OPEN and FOCAL lists of CBSB of each iteration are shown in Figure~\ref{cbsb:f:cbsb-example-w2-queue}.
The search begins by generating the root node N$_0$, where each agent is given an initial budget $b_i$ of 4 according to its admissible heuristic path length and the given suboptimality factor, i.e., $b_i=w\,\hat{f_i}.$ bCOA* then finds the path with the given budget. Since the path found for each agent does not exceed the given budget, all the budgets stay the same after the low-level searches. The root node is inserted into OPEN and FOCAL.
OPEN sorts CT nodes according to their $b$-value which is the sum of the individual budgets. FOCAL sorts the subset of the OPEN list whose $cost$ does not exceed the minimum $b$, according to the number of conflicts, $h_c.$

During the first iteration, CBSB chooses the top node from FOCAL, namely N$_0$.
Since there exists a conflict at vertex B at timestep 1 in N$_0$, CBSB generates two children nodes to resolve the conflict.
The children CT nodes N$_1$ and N$_2$ are generated by replanning for the conflicting agents with additional constraints to prevent the conflict. 
When an agent replans, the existing paths of other agents are taken into consideration such that bCOA* finds the shortest path with minimal inclusion of conflicts or paths exceeding the given budget in length. 
At the CT node N$_1,$ the agent 1 finds a new path using the budget of agent 1 at N$_0,$
and at the CT node of N$_2,$ the agent 2 finds a new path using the budget of agent 2 at N$_0.$
Since these new path lengths do not exceed the initial budgets, the budgets stay the same after the searches. 
The generated CT nodes N$_1$ and N$_2$ are put in OPEN and FOCAL accordingly. 

At the next iteration, the CT node N$_1$ is chosen from FOCAL for expansion. The two children nodes N$_3$ and N$_4$ are generated, where each node contains an additional constraint to resolve the chosen conflict in N$_1.$ 
The budgets stay the same after the low-level searches as before. 
The generated CT nodes are put in OPEN and FOCAL accordingly.
Since the cost of N$_4$ is less than the minimum $b$-value in OPEN, N$_4$ is put in FOCAL. Also, since N$_4$ has 0 conflict, the node becomes the head of FOCAL. 
Hence, at the next iteration N$_4$ is chosen and the search halts, returning the paths in N$_4$ as a solution.
The paths in N$_4$ are guaranteed to be bounded suboptimal, that is, the cost does not exceed twice that of the optimal solution length.
In fact, the paths in N$_4$ is the optimal solution to this MAPF instance. 

\begin{figure*}[th]
	\centering
		\includegraphics[width=1\linewidth]{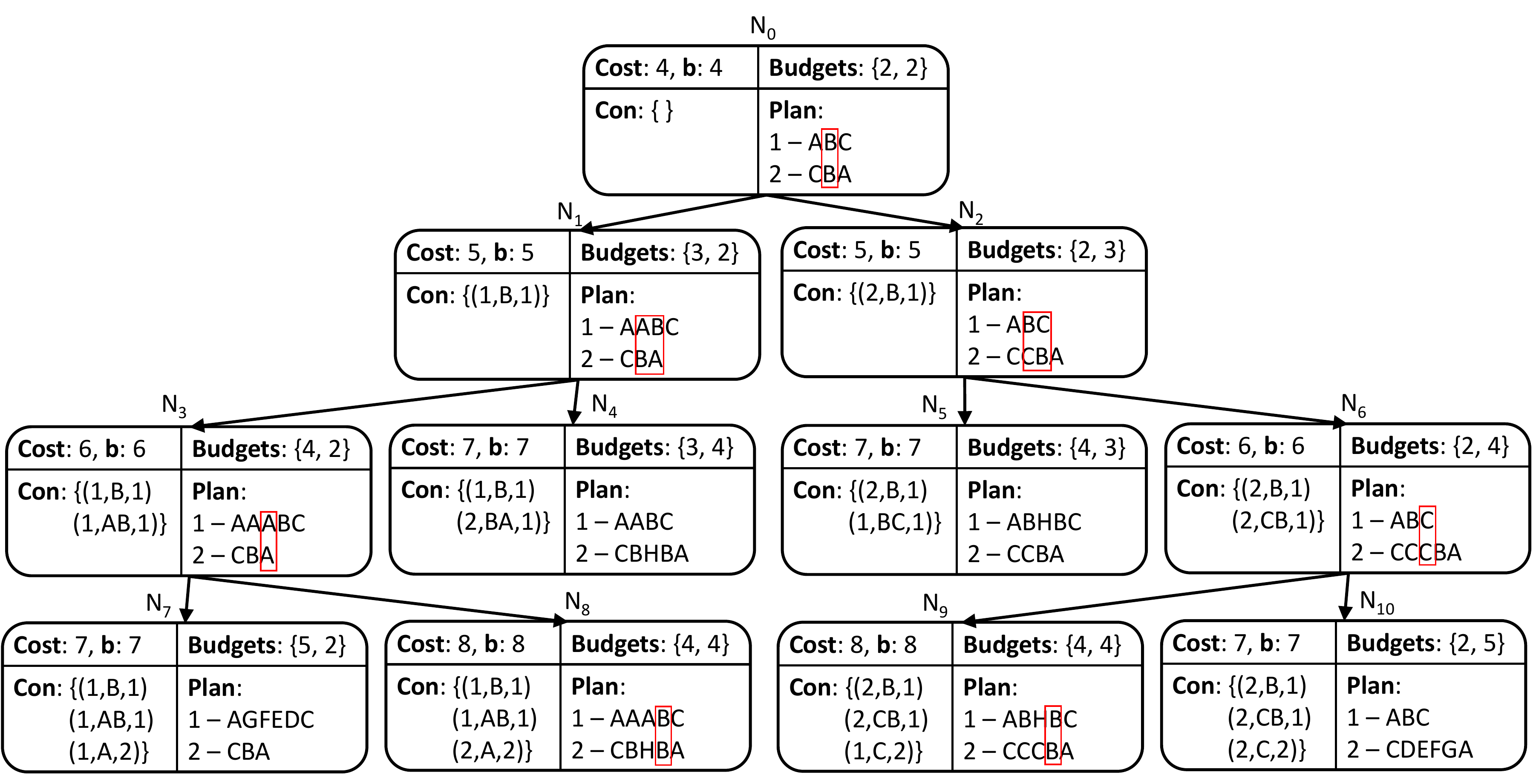} 
	\caption{The CT of CBSB with $w$=1.}
    \label{cbsb:f:cbsb-example-w1-ct}
\end{figure*}

\begin{figure*}[th]
	\centering
		\includegraphics[width=.8\linewidth]{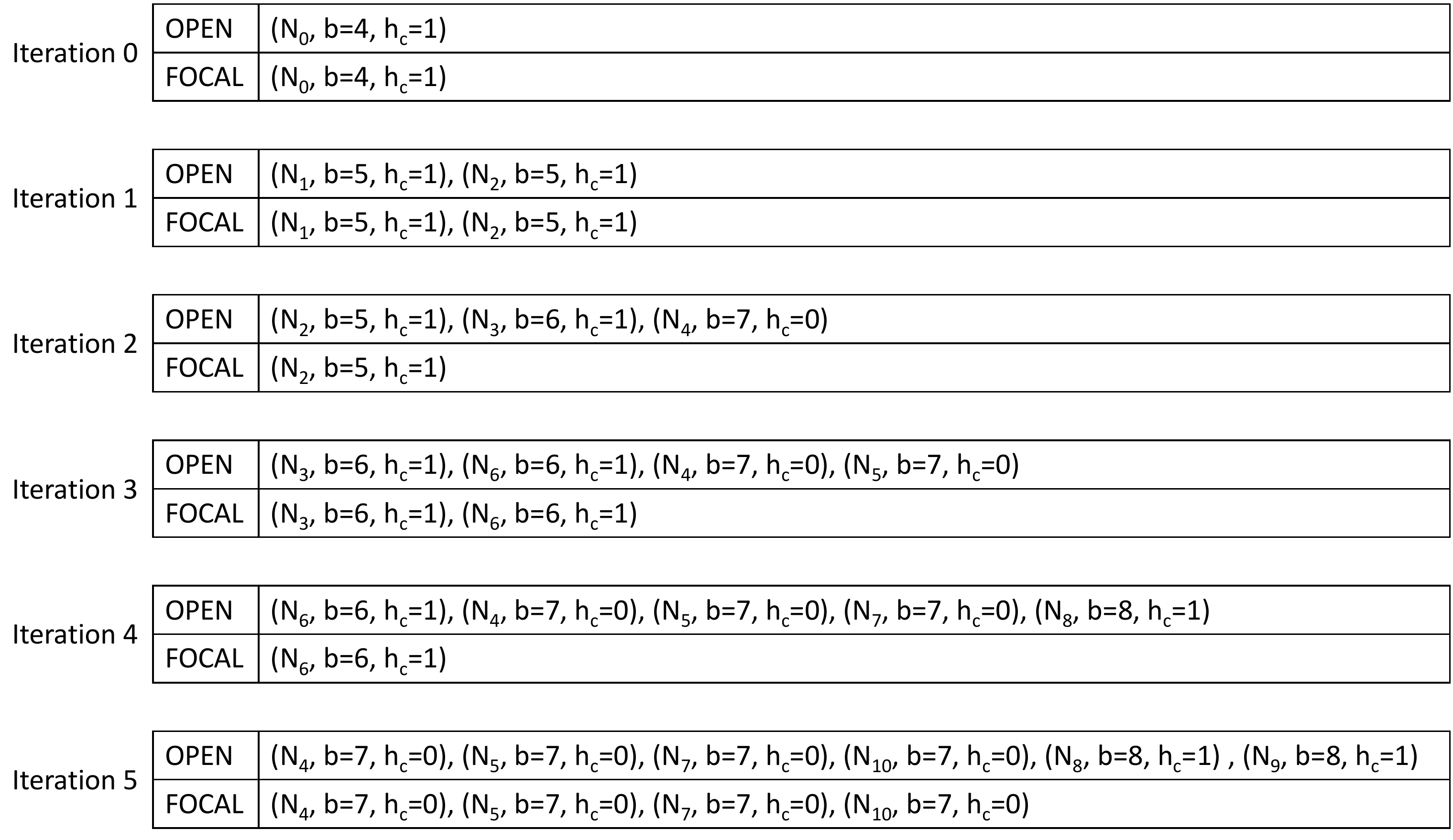} 
	\caption{The high level CBSB search iterations with $w=1$.}
    \label{cbsb:f:cbsb-example-w1-queue}
\end{figure*}

Now let us consider solving the same problem but with $w=1,$ that is, we wish to find the optimal solution. 
The purpose of this example is to show that CBSB with $w=1$ is equivalent to CBS.
The corresponding CT is shown in Figure~\ref{cbsb:f:cbsb-example-w1-ct}. 
The OPEN and FOCAL lists of CBSB of each iteration is shown in Figure~\ref{cbsb:f:cbsb-example-w1-queue}.
The search begins by generating the root node N$_0$, where each agent is given an initial budget $b_i$ 2 according to its admissible heuristic path length and the given suboptimality factor, i.e., $b_i=w\,\hat{f_i}$ and finds the path with bCOA* with the given budget. Since the path found for each agent does not exceed the given budget, all the budgets stay the same after the low-level searches at the root. The root node is inserted into OPEN and FOCAL.

% \begin{figure}[t]
% 	\centering
% 	\begin{subfigure}{1\textwidth}
% 	    \includegraphics[width=1\linewidth]{figures/cbsb_env_w1_ct.pdf} 
% 	    \caption{The CT of CBSB with $w$=1.}
% 	    \label{cbsb:f:cbsb-example-w1-ct}
% 	\end{subfigure}\\[1ex]	
%     \begin{subfigure}{0.8\textwidth}
%         \includegraphics[width=1\linewidth]{figures/cbsb_env_w1_queue.pdf} 
% 	    \caption{The high level CBSB search iterations.}
% 	    \label{cbsb:f:cbsb-example-w1-queue}
%     \end{subfigure}
%     \caption{The CBSB search with $w=1.$}
%     \label{cbsb:f:cbsb-example-w2}
% \end{figure}

During the first iteration, the CT node N$_0$ is chosen from FOCAL and expanded to resolve the conflict found at vertex B at timestep 1. 
At the CT node N$_1,$ agent~1 finds a new path with the budget of agent~1 at N$_0,$ and since the path returned (AABC) of length 3 is longer than the given budget of 2, the budget of agent~1 is updated to 3. 
Likewise, agent 2 at the CT node N$_2$ returns a path longer than the given budget, increasing the budget accordingly. 
The CT nodes N$_1$ and N$_2$ are put in OPEN and FOCAL.
The CT node N$_1$ is chosen at next iteration (Figure~\ref{cbsb:f:cbsb-example-w1-queue} at Iteration 2). Two children nodes N$_3$ and N$_4$ are generated and put in OPEN with their updated $b$ values. Although N$_4$ is conflict free it is not put in FOCAL, since its $cost$-value exceeds the minimum $b$-value in OPEN. Consequently, the CT node N$_2$ is chosen from FOCAL for expansion at iteration 3.  
Likewise, when N$_2$ is chosen for expansion, N$_5$ and N$_6$ are generated and these children CT nodes are put in OPEN. Similar to N$_4,$ N$_5$ is not put in FOCAL as the cost exceeds the minimum $b$-value in OPEN. 
The goal nodes N$_4$ or N$_5$ are not chosen until the minimum $b$-value increases later in the search. 
This is a well studied pathological case where CBS is inefficient~\cite{Sharon2015, Li2021b} that is, when there exist many infeasible plans due to conflicts whose cost is yet lower than the optimal solution.
Eventually, N$_4$ is chosen and the search halts, confirming that the paths in N$_4$ is the optimal solution.
Note that CBSB with suboptimality factor of 1 ($w=1$) is equivalent to CBS in the way a CT node is chosen for expansion, because the $cost$-value and the $b$-value of a CT node is always the same when $w=1.$

\section{Analysis of CBSB}

In this section, we analyze the properties of CBSB. 
First, we examine the properties of the bCOA* algorithm, and then
we analyze the completeness and bounded suboptimality of CBSB. 

\subsection{Properties of bCOA*} \label{cbsb:subsection:analysis}

% In this section we analyze the properties of the bCOA* algorithm.
Let $\Pi$ be the set of all paths in graph $G$,
let $\Pi_B\subseteq \Pi$ be the subset of paths that are not longer than $B$, and
let $P_B\subseteq \Pi_B$ be the subset of paths that are conflict-free. 
We call an element of $P_B$ a \textit{class}-1 path 
and an element of the complement of $P_B$ a \textit{class}-2 path.
bCOA* finds the shortest path with minimal inclusion of \textit{class}-2 path and then \textit{class}-1 path.
We first examine the properties of a general class of such algorithms, denoted by Y($B$), 
that find the shortest path with minimal inclusion of paths either in conflict 
or longer than $B$-timesteps. 
Note that bCOA* is an instance of Y($B$). 
We prove the properties Y($B$).
% Given a budget $B,$ suppose an algorithm Y($B$) finds the shortest path with minimal \textit{inclusion} of paths either in conflict or longer than $B$-timesteps.
%Y($B$) has the following properties. 

%In other words, Y($B$) finds the shortest path among the \textit{class}-1 paths if one exists, and finds the shortest path among the \textit{class}-2 paths othewise. 

\begin{restatable}{lemma}{BCOAUpperBound}\label{lemma:bcoa:upper_bound}
	Y($B$) finds a path having length at most $B$, if such a path exists. 
\end{restatable}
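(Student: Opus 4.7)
The plan is to argue by contradiction using the lexicographic ordering that Y($B$) inherits from COA*. The claim is that any path in $\Pi_B$ strictly dominates every length-greater-than-$B$ candidate under Y($B$)'s preference, so the algorithm cannot terminate at a path exceeding the budget while a witness inside $\Pi_B$ remains reachable.

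Concretely, I would fix a witness $p^*\in\Pi_B$ and suppose, for contradiction, that Y($B$) returns a path $p_Y$ with $|p_Y|>B$. I would then count the ``over-budget'' class-2 content of each path: $p_Y$ contains at least $|p_Y|-B\ge 1$ moves executed strictly after timestep $B$, each of which is a class-2 edge, whereas $p^*$ contributes none. Invoking the COA* correctness result inherited by Y($B$) --- candidate paths are dequeued in lexicographic order of class-2 content first and class-1 content second --- this strictly smaller over-budget count forces Y($B$) to surface a goal path inside $\Pi_B$ before it can ever return $p_Y$, contradicting the assumption.

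The main obstacle is justifying why the over-budget component of class-2 is compared strictly before the conflict component within the class-2 ordering, since $p^*$ may itself have conflicts and therefore not lie in $P_B$. This is a convention fixed by the bCOA* coloring scheme --- ``exceeds the budget'' is the dominant class-2 marker --- and once the ranking is stated explicitly the comparison above is unambiguous, so the reduction to the standard COA* optimality guarantee is routine. A minor bookkeeping point, which I would address only if space permits, is that a conflict-free length-$\le B$ witness makes the argument cleaner (it is class-1, not class-2), but the contradiction above handles both cases simultaneously.
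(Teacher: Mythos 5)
Your reduction to the COA* lexicographic guarantee hinges on a three-level ordering (over-budget content first, then conflict content, then length) that the paper's Y($B$) simply does not have, and the case you defer as ``minor bookkeeping'' is exactly where the argument breaks. In the paper's setup there are only two classes: a path is class-1 precisely when it is conflict-free \emph{and} of length at most $B$ (the set $P_B$), and class-2 otherwise. A within-budget path with conflicts is every bit as class-2 as an over-budget path, and nothing in the definition of Y($B$) makes ``exceeds the budget'' the dominant marker inside class-2. Consequently, when $P_B=\varnothing$ --- i.e., when your witness $p^*\in\Pi_B$ necessarily carries conflicts --- comparing over-budget edge counts ($\geq 1$ for $p_Y$ versus $0$ for $p^*$) does not show that $p^*$ is dequeued first: the conflicting portion of $p^*$ also contributes class-2 content, and could contribute more of it than the over-budget tail of $p_Y$. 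Your proposed fix, declaring over-budget the dominant class-2 marker ``by convention,'' analyzes a different algorithm rather than proving the lemma for the one defined.

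The paper avoids this entirely with a direct two-case argument. If $P_B\neq\varnothing$, Y($B$) prefers a class-1 path, which by definition has length at most $B$. If $P_B=\varnothing$, then \emph{every} path is class-2, the class ordering is inert, and Y($B$) returns the shortest path in $P_C\cup\Pi_B^\mathsf{c}$; since $\Pi_B\neq\varnothing$ and $P_B=\varnothing$ force $P_C\neq\varnothing$, and every path in $\Pi_B^\mathsf{c}$ is longer than every path in $P_C$, that shortest path lies in $P_C$ and hence has length at most $B$. Your first case (a conflict-free witness) is fine and matches the paper's in spirit; it is the second case that needs this set-level argument, not a finer edge-level ordering.
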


\begin{proof}
	Let $\Pi$ be the set of all paths. 
	Let $\Pi_B\subseteq \Pi$ be the set of all paths whose length does not exceed $B.$ 
	By assertion, $\Pi_B \neq \varnothing.$
	Let $P_B \subseteq \Pi_B$ be the subset of all conflict-free paths in $\Pi_B.$ 
	Denote by $P_C = \Pi_{B} \cap P_B^\mathsf{c},$ the subset of all conflicting paths in $\Pi_B.$
	Then $\Pi = P_B \cup P_C \cup \Pi_B^\mathsf{c},$ where $P_B,$ $P_C,$ and $\Pi_B^\mathsf{c}$ are disjoint. 
	If $P_B\neq \varnothing$, then Y($B$) finds a path in $P_B.$
	Otherwise, Y($B$) finds the shortest path in $P_C \cup \Pi_B^\mathsf{c}.$ But $P_C$ is not empty since $\Pi_B \neq \varnothing$ and $P_B = \varnothing$. 
	Hence, Y($B$) will find a path in $P_C$ since any path in $\Pi_B^\mathsf{c}$ is longer in length than all paths in $P_C.$
\end{proof}

The next lemma states that if the given budget is equal to the shortest path length, then Y($B$) finds the shortest path. 
Also, Y($B$) finds the conflict-free shortest path, if one exists. 

\begin{restatable}{lemma}{BCOABoundary}\label{lemma:bcoa:boundary}
	If $B=f^*,$ the shortest path length in the graph, then Y($B$) finds the shortest path with minimal inclusion of conflicts.
\end{restatable}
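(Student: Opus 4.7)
The plan is to reduce this to Lemma~\ref{lemma:bcoa:upper_bound} (BCOAUpperBound) and then argue the conflict-minimization part by a case split on whether a conflict-free shortest path exists. First I would observe that, by definition of $f^*$, the set $\Pi_{f^*}$ of paths of length at most $f^*$ is nonempty (it contains every shortest path). Lemma~\ref{lemma:bcoa:upper_bound} with $B=f^*$ then guarantees that Y($f^*$) returns a path $p$ with $|p|\le f^*$. Combined with the lower bound $|p|\ge f^*$ (which holds for every path in $G$), we obtain $|p|=f^*$, so Y($f^*$) indeed returns a shortest path.

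Next I would address the ``minimal inclusion of conflicts'' clause by splitting into the same two cases used in the proof of Lemma~\ref{lemma:bcoa:upper_bound}. In the case $P_{f^*}\neq\varnothing$, there exists a conflict-free shortest path; the proof of Lemma~\ref{lemma:bcoa:upper_bound} already establishes that Y($f^*$) returns an element of $P_{f^*}$, i.e., a \textit{class}-1 path, which is conflict-free and of length $f^*$. In the opposite case $P_{f^*}=\varnothing$, every path of length at most $f^*$ has a conflict; the proof of Lemma~\ref{lemma:bcoa:upper_bound} shows Y($f^*$) returns the shortest path in $P_C\cup\Pi_{f^*}^{\mathsf c}$, which must lie in $P_C$ because any element of $\Pi_{f^*}^{\mathsf c}$ is strictly longer than every element of $P_C$. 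Since no conflict-free path of length $f^*$ exists in this sub-case, any shortest path necessarily carries conflicts, so the returned one is automatically conflict-minimal among shortest paths.

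The main subtlety, rather than an obstacle, will be stating ``minimal inclusion of conflicts'' precisely: bCOA* classifies paths dichotomously as \textit{class}-1 (conflict-free and within budget) or \textit{class}-2 (everything else), so the ordering rule only guarantees a strict preference for conflict-free over conflicting, not a fine-grained count. I would therefore phrase the conclusion as ``Y($f^*$) returns a conflict-free shortest path whenever one exists, and otherwise returns a shortest path,'' and remark that this matches the intended reading of ``minimal inclusion of conflicts'' in the statement. No further machinery beyond Lemma~\ref{lemma:bcoa:upper_bound} and the definitions of $\Pi_B$, $P_B$, and $P_C$ should be required.
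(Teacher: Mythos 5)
Your proof is correct and follows essentially the same route as the paper's: a case split on whether a conflict-free shortest path exists, with the first case handled by the preference for \textit{class}-1 paths and the second by noting that every shortest path then carries a conflict. Your version is in fact slightly more careful than the paper's two-line argument, since casing on $P_{f^*}=\varnothing$ rather than on whether one particular shortest path $\pi^*$ is conflict-free correctly handles non-unique shortest paths.
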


\begin{proof}
	Let $\pi^*$ be the shortest path in the graph.
	%$\pi^*$ may be conflict-free or not.
	If $\pi^*$ is conflict-free, then Y($B$) finds $\pi^*$ since it is the shortest path without conflicts. 
	Otherwise, Y($B$) finds $\pi^*$ since it is the shortest path among the set of paths either in conflict or longer than $B.$ 	
\end{proof}

The next lemma shows that if the given budget is less than the shortest path length, then Y($B$) finds the shortest path. 
\begin{restatable}{lemma}{BCOALowerBound}\label{lemma:bcoa:lower_bound}
	If $B<f^*,$ the shortest path length in the graph, then Y($B$) finds the shortest path with length $f^*$.
\end{restatable}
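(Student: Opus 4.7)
The plan is to reuse the partition $\Pi = P_B \cup P_C \cup \Pi_B^\mathsf{c}$ introduced in the proof of Lemma~\ref{lemma:bcoa:upper_bound} and show that the hypothesis $B < f^*$ collapses the first two components to the empty set. Since every path in the graph has length at least $f^*$ and $B < f^*$, no path can have length at most $B$; hence $\Pi_B = \varnothing$, and consequently both $P_B = \varnothing$ and $P_C = \varnothing$. This leaves $\Pi = \Pi_B^\mathsf{c}$, so every path in the graph is a \textit{class}-2 path from the viewpoint of Y($B$).

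Once this reduction is in place, the second step is to read off what Y($B$) does when its class-1 bucket is empty. By construction Y($B$) orders paths lexicographically, first by minimizing the number of \textit{class}-2 components and then by minimizing length. Because every feasible candidate is tied on the primary criterion (all paths are \textit{class}-2), the ordering reduces to plain length, and Y($B$) returns the overall shortest path in $\Pi$, which by definition has length $f^*$.

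The one point I would be careful about is not conflating the two usages of ``shortest'': the lemma asserts a path of length $f^*$, not a path that additionally minimizes some conflict count. In the collapsed case the secondary ``minimal inclusion of \textit{class}-2'' criterion is vacuous — every path is equally \textit{class}-2 — so no further tie-breaking is required beyond the length comparison. I do not expect any substantial obstacle: the lemma is essentially a corollary of the partition argument already used for Lemma~\ref{lemma:bcoa:upper_bound}, combined with the observation that an infeasible budget forces the algorithm to fall back to the raw shortest path.
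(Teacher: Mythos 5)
Your proposal is correct and follows essentially the same route as the paper, which simply observes that when $B<f^*$ every path in the graph exceeds the budget, so Y($B$) degenerates to finding the shortest path over all of $\Pi$. Your version merely spells out the partition $\Pi = P_B \cup P_C \cup \Pi_B^\mathsf{c}$ collapsing to $\Pi_B^\mathsf{c}$, which the paper leaves implicit.
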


\begin{proof}
	Any path in the graph will be longer than $B$. Hence Y($B$) finds the shortest path among all paths in the graph.
\end{proof}

Given these lemmas, we have the following corollaries.
\begin{restatable}{corollary}{BCOAsufficiency}\label{corollary:bcoa:sufficiency}
	If $B\leq f^*$, the shortest path length in the graph, then Y($B$) finds the shortest path.
\end{restatable}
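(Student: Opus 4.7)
The plan is to obtain the corollary directly by case-splitting on the hypothesis $B \leq f^*$ and invoking the two lemmas that were just established, namely Lemma~\ref{lemma:bcoa:boundary} (the case $B=f^*$) and Lemma~\ref{lemma:bcoa:lower_bound} (the case $B<f^*$). No new combinatorial argument about Y($B$) is required; the work has essentially been done in proving those two lemmas, and the corollary is a packaging statement that exposes a single clean sufficient condition ($B \leq f^*$) under which Y($B$) recovers the shortest path.

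Concretely, I would first observe that the hypothesis $B \leq f^*$ is logically equivalent to the disjunction $B=f^* \; \vee \; B<f^*$. In the first case, Lemma~\ref{lemma:bcoa:boundary} applies verbatim: Y($B$) returns the shortest path (with minimal inclusion of conflicts, but the shortest path in particular). In the second case, Lemma~\ref{lemma:bcoa:lower_bound} applies: since every path in $G$ has length at least $f^*>B$, the set $\Pi_B$ is empty, and Y($B$) is forced to return the globally shortest path of length $f^*$. In both cases the returned path has length $f^*$, which is the claim of the corollary.

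There is no real obstacle here; the only thing to be careful about is making sure the statement is interpreted correctly. The corollary asserts only that Y($B$) \emph{finds the shortest path}, not that it finds a conflict-free path; indeed, when $B<f^*$ or when no conflict-free shortest path exists, the output may contain conflicts. I would state this explicitly in one sentence so that readers do not confuse this corollary with a feasibility guarantee, and then write the case split as a two-line proof. The key conceptual take-away to flag is that $B=f^*$ is the \emph{threshold} budget: below it Y($B$) degenerates to ordinary shortest-path search, while at or above $f^*$ the shortest path is always among the candidates, which is exactly what CBSB will exploit when it updates $b_i \gets w\,|R.p_i|$ upon discovering that the previously assigned budget under-approximated $f_i^*$.
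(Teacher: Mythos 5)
Your proposal is correct and matches the paper's own proof, which simply cites Lemma~\ref{lemma:bcoa:boundary} for the case $B=f^*$ and Lemma~\ref{lemma:bcoa:lower_bound} for the case $B<f^*$; you have just written out the same case split more explicitly. The additional clarifying remark that the corollary guarantees only shortest length, not conflict-freeness, is accurate and consistent with how the paper later uses this result.
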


\begin{proof}
	The result simply follows from Lemma~\ref{lemma:bcoa:boundary} and Lemma~\ref{lemma:bcoa:lower_bound}.
\end{proof}

Corollary~\ref{corollary:bcoa:sufficiency} gives a sufficient condition so that, 
given that a budget is less than or equal to the shortest path cost, 
then the algorithm Y finds the shortest path. 
The next corollary gives a necessary condition so that if the path length found by the algorithm Y is shorter than the given budget, then the path is necessarily 
the shortest path. 
This corollary is useful since the $f^*$-values are not known a priori. 

\begin{restatable}{corollary}{BCOANecessity} \label{corollary:bcoa:necessity}
	If Y($B$) finds a path of length $C$, where $B<C,$ then $C=f^*$ the shortest path length.
\end{restatable}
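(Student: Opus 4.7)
The plan is to derive this corollary essentially as the contrapositive of Lemma~\ref{lemma:bcoa:upper_bound} combined with Lemma~\ref{lemma:bcoa:lower_bound} (equivalently, Corollary~\ref{corollary:bcoa:sufficiency}). The hypothesis tells us that Y($B$) returns a path of length $C$ strictly greater than $B$, so Y($B$) did not find any path of length at most $B$. The only remaining work is to transfer this observation into an inequality on $f^*$ and then invoke the lower-bound lemma.

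First, I would apply Lemma~\ref{lemma:bcoa:upper_bound} in its contrapositive form: since Y($B$) failed to return a path of length at most $B$ (because the returned length satisfies $C>B$), no path in the graph can have length at most $B$. In particular, the shortest path cannot have length at most $B$, so $f^* > B$, i.e., $B < f^*$.

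Next, I would invoke Lemma~\ref{lemma:bcoa:lower_bound} (or, equivalently, Corollary~\ref{corollary:bcoa:sufficiency}) using the inequality $B<f^*$ just established. This lemma asserts that Y($B$) returns the shortest path, whose length is $f^*$. Since by assumption Y($B$) returned a path of length $C$, we conclude $C=f^*$, which is the desired identity.

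There is no substantive obstacle here; the proof is a two-line chaining of previously established results. The only care needed is to state the contrapositive of Lemma~\ref{lemma:bcoa:upper_bound} explicitly, so that the transition from ``no path of length at most $B$ is returned'' to ``no path of length at most $B$ exists'' is unambiguous. Once that is done, the conclusion $C=f^*$ is immediate.
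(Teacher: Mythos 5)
Your proof is correct and rests on exactly the same two prior results the paper uses (Lemma~\ref{lemma:bcoa:upper_bound} and the fact that $B<f^*$ forces Y($B$) to return the shortest path); the only difference is organizational, since you argue directly via the contrapositive of Lemma~\ref{lemma:bcoa:upper_bound} to get $B<f^*$ and then apply Lemma~\ref{lemma:bcoa:lower_bound}, whereas the paper runs a reductio on $f^*<C$ with a case split on $B$ versus $f^*$. Your direct version is, if anything, slightly cleaner, and no step is missing.
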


\begin{proof}
	Suppose, ad absurdum, that $f^* < C.$ If $B\leq f^*$ then Y($B$) finds the path of length $f^*$ by Corollary~\ref{corollary:bcoa:sufficiency}, contradicting that Y($B$) found a path of length $C > f^*.$
	So it must be that $B > f^*.$ Since $B$ is sufficiently large, by Lemma~\ref{lemma:bcoa:upper_bound} Y($B$) finds a path that is at most $B$-long. 
	But Y($B$) found a path of length $C>B,$ leading to a contradiction. 
	Hence $f^* = C.$
\end{proof}

Next, we examine how different budget values affect path quality.

\begin{restatable}{corollary}{BCOABudgetLength}\label{corollary:bcoa:budget_length}
	If $B_1 < B_2$, then Y($B_1$) finds a path that is not longer in length than Y($B_2$).
\end{restatable}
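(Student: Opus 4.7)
The plan is to split on whether the path actually returned by Y($B_1$) respects its budget, since that neatly lines up with the dichotomy already established in Lemma~\ref{lemma:bcoa:upper_bound} and Corollary~\ref{corollary:bcoa:necessity}. Throughout, write $\pi_i$ and $C_i$ for the path and length returned by Y($B_i$) for $i=1,2$, and let $f^\ast$ denote the shortest path length in the graph.

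First, if $C_1 > B_1$, I would invoke Corollary~\ref{corollary:bcoa:necessity} directly to conclude $C_1 = f^\ast$, and then observe that since $f^\ast$ lower bounds the length of every path (including $\pi_2$), we immediately get $C_2 \geq f^\ast = C_1$. This disposes of the ``budget too small'' regime in one line.

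The remaining case is $C_1 \leq B_1 < B_2$, which I would further subdivide according to whether $\pi_1$ is conflict-free. If $\pi_1 \in P_{B_1}$, then its length fits under the larger budget, so $\pi_1 \in P_{B_2}$, and the selection rule of Y on $P_{B_2}$ from the proof of Lemma~\ref{lemma:bcoa:upper_bound} yields $C_2 \leq C_1$; but then $\pi_2$ is itself conflict-free with length at most $C_1 \leq B_1$, hence $\pi_2 \in P_{B_1}$, and applying the same rule to Y($B_1$) gives $C_1 \leq C_2$. The two inequalities force $C_1 = C_2$. If instead $\pi_1$ is conflicting, the selection rule forces $P_{B_1} = \varnothing$ (otherwise Y($B_1$) would have returned a conflict-free path), so $\pi_1$ is the shortest path over $P_C \cup \Pi_{B_1}^{\mathsf{c}} = \Pi$, whence $C_1 = f^\ast$ and we reduce to the first case.

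The subtlety I anticipate is exactly the conflict-free sub-case: a naive ``larger budget admits a larger feasible set'' argument yields only $C_2 \leq C_1$, which is the wrong direction for the corollary. The fix is to use the optimality of Y($B_1$) a second time, after noting that $\pi_2$ also sits in $P_{B_1}$; this symmetric application of the selection rule collapses the inequality to equality and is the only nontrivial ingredient of the proof.
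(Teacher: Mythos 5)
Your proof is correct, and it reaches the conclusion by a genuinely different decomposition than the paper's. The paper splits on where the (unknown) optimum $f^*$ sits relative to the two budgets --- cases $B_1<B_2\leq f^*$, $B_1\leq f^*<B_2$, and $f^*<B_1<B_2$ --- dispatching the first two via Corollary~\ref{corollary:bcoa:sufficiency} and handling the last by contradiction with a four-way sub-analysis of the conflict statuses of $p_1$ and $p_2$. You instead split on the \emph{observable} outcome, namely whether the returned length $C_1$ exceeds its own budget $B_1$, which lets you invoke Corollary~\ref{corollary:bcoa:necessity} (rather than sufficiency) to collapse the ``budget too small'' regime to $C_1=f^*\leq C_2$ in one step; your remaining case analysis on whether $\pi_1$ is conflict-free then mirrors the paper's use of the selection rule (class-1 preferred, shorter preferred within a class), but runs directly rather than by contradiction. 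Both arguments lean on the same underlying semantics of Y($B$) established in Lemma~\ref{lemma:bcoa:upper_bound}. What your route buys: the cases are decidable from the algorithm's output alone, and in the conflict-free sub-case you obtain the strictly stronger conclusion $C_1=C_2$ rather than just $C_1\leq C_2$. What the paper's route buys: its case structure lines up symmetrically with the sufficiency corollary and makes the ``both budgets below $f^*$'' degenerate case explicit. You correctly flagged the one genuine subtlety --- that the naive monotonicity of feasible sets gives the inequality in the wrong direction --- and your symmetric second application of the selection rule (noting $\pi_2\in P_{B_1}$) resolves it properly.
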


\begin{proof}
	Let $f^*$ be the shortest path length in the graph. 
	We consider three cases:
	\begin{enumerate}[leftmargin=*]
		\item $B_1 < B_2 \leq f^*:$ Both Y($B_1$) and Y($B_2$) find the shortest path with length $f^*$ by Corollary~\ref{corollary:bcoa:sufficiency}.
		\item $B_1 \leq f^* < B_2:$ Y($B_1$) finds the shortest path by Corollary~\ref{corollary:bcoa:sufficiency}.
		\item $f^* < B_1 < B_2:$ Let $p_1$ and $p_2$ be the paths found by Y($B_1$) and Y($B_2$), respectively. 
		Suppose, ad absurdum, that $p_2$ is shorter than $p_1.$
% 		Then $p_2$ cannot be longer than $B_1$ as $p_1$ can be at most $B_1$-long by Lemma~\ref{lemma:bcoa:upper_bound}.
		If $p_2$ is conflict-free, then Y($B_1$) would have found $p_2$ instead of $p_1$, since $p_2$ is shorter. 
		So $p_2$ must be in conflict. 
		Now, if $p_1$ is in conflict, then Y($B_1$) would have found $p_2$ instead of $p_1$, since $p_2$ is shorter. 
		So $p_1$ must be conflict-free.
		But then Y($B_2$) would have found $p_1$ instead of $p_2$, since $p_1$ is conflict-free and shorter than $B_2,$ contradicting  the fact that Y($B_2$) found $p_2$.
		Hence, Y($B_2$) cannot find a shorter path than Y($B_1$).    
	\end{enumerate}
	In all three cases, Y($B_1$) finds a path that is no longer than a path found by Y($B_2$).
\end{proof}

The following corollary states that the algorithm Y with a lower budget finds no fewer conflicts than when using a higher budget. 

\begin{restatable}{corollary}{BCOABudgetConflict}\label{corollary:bcoa:budget_conflict}
	If $B_1 < B_2$, then Y($B_1$) finds a path that has no fewer conflicts than Y($B_2$).
\end{restatable}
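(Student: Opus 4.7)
The plan is to mirror the case-split used in Corollary~\ref{corollary:bcoa:budget_length}, splitting on whether Y($B_2$) returns a path in $P_{B_2}$ (the conflict-free, within-budget subset defined in the proof of Lemma~\ref{lemma:bcoa:upper_bound}). Writing $p_i = $ Y($B_i$) and $c_i$ for the number of conflicts in $p_i$, the goal is to establish $c_1 \geq c_2$.

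In the easy case $p_2 \in P_{B_2}$, so $p_2$ is conflict-free and $c_2 = 0$; thus $c_1 \geq 0 = c_2$ holds trivially. In the remaining case $p_2 \notin P_{B_2}$, and since the proof of Lemma~\ref{lemma:bcoa:upper_bound} establishes that Y($B$) returns a path in $P_B$ whenever $P_B \neq \varnothing$, we must have $P_{B_2} = \varnothing$. Because $B_1 < B_2$ gives $P_{B_1} \subseteq P_{B_2}$, this also yields $P_{B_1} = \varnothing$, so Y($B_1$) likewise returns a \textit{class}-2 path. The key step is the set equality
\[
    \Pi \setminus P_{B_1} = \Pi \setminus P_{B_2},
\]
which follows because any path in the difference $P_{B_2} \setminus P_{B_1}$ must be conflict-free with length in $(B_1, B_2]$, and these paths are ruled out by $P_{B_2} = \varnothing$. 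Hence both Y($B_1$) and Y($B_2$) minimize length over the same candidate set (all of $\Pi$ in this subcase), and with the deterministic tie-breaking inherent in Y they return the same path, giving $c_1 = c_2$.

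The main obstacle is the set-equality step of the second case: I need to isolate the only paths whose \textit{class} membership depends on the budget---namely, conflict-free paths of length strictly above $B_1$ and at most $B_2$---and confirm they cannot exist once $P_{B_2} = \varnothing$ is known. A secondary subtlety, mirroring Corollary~\ref{corollary:bcoa:budget_length}, is that the shortest \textit{class}-2 path need not be unique; this is handled by the same implicit deterministic tie-breaking assumption used throughout the analysis of Y.
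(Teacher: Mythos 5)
Your Case~1 is fine, but Case~2 has a genuine gap at its final step. From $P_{B_2}=\varnothing$ you conclude that Y($B_1$) and Y($B_2$) ``minimize length over the same candidate set'' and therefore, by deterministic tie-breaking, return the same path. Neither half of this is justified. First, reducing Y($B$) in the no-\textit{class}-1-path regime to a pure shortest-path search discards exactly the property the corollary is about: if ties among equal-length shortest paths were broken arbitrarily (even deterministically), the conflict counts $c_1$ and $c_2$ would be artifacts of tie-breaking, and nothing in your argument forces $c_1\geq c_2$. Second, even granting that each Y($B_i$) is individually deterministic, Y($B_1$) and Y($B_2$) are \emph{different} searches---the budget changes which partial paths are \textit{class}-2, hence the ordering of OPEN, hence which of several tied shortest paths is reached first---so determinism of each does not give you that the two agree on a single path. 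You would need a budget-independent tie-breaking rule, which the paper never assumes.

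The paper's own proof avoids all of this with a two-line contradiction that invokes the defining preference of Y directly: suppose $p_1$ had fewer conflicts than $p_2$; by Corollary~\ref{corollary:bcoa:budget_length}, $p_1$ is not longer than $p_2$; but Y($B_2$) prefers a path that is no longer and has fewer conflicts, so it would have returned $p_1$ instead of $p_2$, a contradiction. This needs no case split on $P_{B_2}$, no set equality, and no appeal to tie-breaking. If you want to keep your decomposition, the fix for Case~2 is to replace the ``same path'' claim with the conflict-preference of Y: since $p_1$ is no longer than $p_2$ (Corollary~\ref{corollary:bcoa:budget_length}), $p_1$ is an available candidate for Y($B_2$), so $p_2$ cannot have strictly more conflicts than $p_1$---which is just the paper's argument in disguise.
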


\begin{proof}
	Let $p_1$ and $p_2$ be the paths found by Y($B_1$) and Y($B_2$), respectively.
	Suppose $p_1$ has fewer conflicts than $p_2.$ 
	By Corollary~\ref{corollary:bcoa:budget_length}, $p_1$ is not longer than $p_2.$ 
	Hence, Y($B_2$) would have found $p_1$ instead of $p_2$, a contradiction.
\end{proof}

%We use Class-Ordered A* (COA*) to construct the algorithm Y. COA* finds the shortest path with minimal inclusion of inferior edges in a weighted colored graph. 
%Given a budget $B,$ we construct Y($B$) using COA* as following.
%We consider two classes, \textit{class}-1 and \textit{class}-2. When COA* expands a vertex, the path leading to the vertex is classified as \textit{class}-1 path if the path is conflict-free and not longer than $B,$ otherwise the path is classified as \textit{class}-2 path. COA* prioritizes the search of \textit{class}-1 paths over \textit{class}-2 paths. 
%COA* is gauranteed to find the shortest path in \textit{class}-1 if one exists and in \textit{class}-2 otherwise. 
Note that Y(0) is equivalent to A*, as Y(0) always finds the shortest path regardless of the path class, and Y($\infty$) is equivalent to regular COA*~\cite{Lim2021a}, as it always finds the shortest path with minimal conflicts. 
\subsection{Properties of CBSB}

Given the previous properties of the algorithm Y, we examine next the properties of the main algorithm CBSB, which uses Y($b_i$) (equivalently, bCOA*) with a budget $b_i$ at the low level search for agent $a_i$ and a focal search at the high level search of CBS.
We first prove that for a finite $b=\sum_i^m b_i$, there is a finite number of CT nodes. Then, we prove that CBSB must finish before exhausting all the CT nodes. 

\begin{restatable}{theorem}{CBSBFiniteNodes}
	\label{theorem:cbsy:finite_nodes}
	For a finite $b,$ there is a finite number of CT nodes.
\end{restatable}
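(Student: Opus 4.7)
The plan is to fix the finite value $b$ and show that the set of CT nodes $N$ with $b(N)\le b$ is finite. The strategy is to bound, in every such node, the length of every path in its plan, and from that derive an upper bound on the timestamps appearing in every constraint generated along the branch leading to that node. Once the universe of reachable constraints is shown to be finite, a binary-tree depth argument yields finiteness of the node count.

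The first step is to show by induction on the depth of $N$ in the CT that every path in $N$ satisfies $|p_i(N)|\le b_i(N)$. At the root, if bCOA* returns a path longer than the initial budget $w\hat{f_i}$, then \textsc{GenerateRoot} immediately resets the budget to $w\,|p_i|\ge |p_i|$, so the bound holds. At a child $N'$ of $N$, the replanned agent $a_k$ undergoes the same post-search reset, while the other agents inherit both their paths and their budgets unchanged from $N$, so the inductive hypothesis carries over. Consequently, for any node with $b(N)\le b$ we have $|p_i(N)|\le b_i(N)\le b$, so every conflict in the plan of $N$ occurs at some timestep $t\le b$, and every vertex or edge constraint produced by \textsc{ResolveConflict} along the branch leading to $N$ has timestamp in $\{1,\dots,b\}$. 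The second step is a counting estimate: the number of distinct (agent, vertex, timestep) and (agent, edge, timestep) constraints with timestep at most $b$ is bounded by $M:=m(|V|+|E|)(b+1)$. Since the CT is binary and each child strictly adds one new constraint to its parent's constraint set, no branch in the restricted subtree has depth greater than $M$, and therefore the subtree contains at most $2^{M+1}-1$ CT nodes, which is finite.

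The main obstacle is the inductive path-length bound, because bCOA* with a given budget $b_i$ is explicitly allowed to return a path longer than $b_i$ when no budget-compliant path satisfies the current constraints, precisely the situation captured by Corollary~\ref{corollary:bcoa:necessity}. The induction must therefore be stated for the \emph{post-update} budget computed inside \textsc{GenerateRoot} and \textsc{GenerateChild}, and it must track carefully that a child first copies its parent's budget vector, then invokes the low-level search for only one agent, and then updates only that agent's entry, while the other agents' paths and budgets are inherited verbatim and continue to satisfy the inductive bound. Once this bookkeeping is in place, the constraint counting and the binary-tree cardinality estimate follow routinely.
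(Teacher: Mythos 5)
Your proposal is correct and follows essentially the same route as the paper: bound the length of every path in a node with $b(N)\le b$, deduce that all conflicts and hence all constraints have timestamps at most that bound, count the finitely many possible constraints, and conclude finiteness from the binary structure of the CT. Your version is slightly more careful than the paper's in that it tracks the post-update budgets inductively (giving $|p_i(N)|\le b_i(N)\le b$ directly) rather than invoking $\max\{b_i,f_i^*\}$, and it makes explicit the step that each child adds a strictly new constraint so the tree depth is bounded, but the underlying argument is the same.
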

\begin{proof}
    Since $b=\sum_i^m b_i$ is finite, $b_i$ is finite for all $i=1,\ldots,m$. 
	Let $f_i^*$ be the shortest path of agent $a_i.$
	Then, bCOA* finds a path for agent $a_i$ that has length at most $M_i = \max\set{b_i, f_i^*}.$
	Let $M = \max_i{M_i},$ then
	no CT node contains a path longer than $M$.
	Hence, no conflict can occur after $M$-timestep and no constraint after $M$-timestep can be generated. 
	Since there is a finite number of such constraints 
	(at most $M  |V| k$ for vertices and $M |E| k$ for edges), there is also a finite number of CT nodes that contain such constraints. 
\end{proof}

\begin{restatable}{theorem}{CBSBCompletness}   	\label{theorem:cbsy:completness}
	CBSB returns a solution, if one exists. 
\end{restatable}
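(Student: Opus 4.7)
The plan is to split the proof into two parts: validity of any returned output, followed by termination when a MAPF solution exists. For validity, I would simply observe that the termination check at Line~\ref{cbsb:al:terminate} ensures $N.conflicts$ is empty when CBSB returns, and since bCOA* always produces valid start-to-goal paths satisfying the constraints of $N$ (by Lemma~\ref{lemma:bcoa:upper_bound}), $N.paths$ is a conflict-free plan, hence a valid MAPF solution.

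For the termination part, the plan is to fix a MAPF solution $\pi^{\text{sol}} = (p_1^{\text{sol}},\ldots,p_m^{\text{sol}})$ of cost $C^*$ and prove the invariant $b(\text{head(OPEN)}) \leq w C^*$ throughout execution, then apply Theorem~\ref{theorem:cbsy:finite_nodes}. To prove the invariant, I would inductively maintain a path in the CT from the root to a conflict-free node, along which each node $N$ satisfies $b_i(N) \leq w|p_i^{\text{sol}}|$ for every agent. The base case follows from admissibility: $b_i(N_0) \leq \max(w\hat{f_i}, w f_i^*) = w f_i^* \leq w|p_i^{\text{sol}}|$ after the root's low-level searches. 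For the inductive step at a conflicting CT node $N$, I would invoke the standard CBS argument that any MAPF solution avoids the conflict via at least one of the two conflicting agents, so at least one child $N'$ has constraints consistent with $\pi^{\text{sol}}$. At $N'$, only the replanning agent $a_k$'s budget can change; either bCOA* returns a path within the existing budget, so $b_k$ is unchanged and the invariant trivially holds, or Corollary~\ref{corollary:bcoa:necessity} forces the returned path length to equal the shortest feasible path length under the new constraints, which is bounded by $|p_k^{\text{sol}}|$ since $p_k^{\text{sol}}$ remains feasible, giving $b_k(N') = w|p_k(N')| \leq w|p_k^{\text{sol}}|$. Since at every iteration before termination at least one node on this solution-compatible path lies in OPEN, the minimum $b$-value in OPEN stays bounded by $w C^*$.

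To close the argument, I would use the invariant to bound the $b$-values of all expanded nodes. Any expanded $N$ is drawn from FOCAL, so $cost(N) \leq cost(N)+h(N) \leq b(\text{head(OPEN)}) \leq w C^*$, making each $|p_i(N)| \leq w C^*$. Since every ancestor of an expanded node was itself expanded at an earlier iteration, the same bound applies to its ancestors, and the budget-update rule inductively yields $b_i(N) \leq w \cdot w C^* = w^2 C^*$, so $b(N) \leq m w^2 C^*$. Theorem~\ref{theorem:cbsy:finite_nodes} applied with this finite bound gives that only finitely many distinct CT nodes can ever be expanded, so the main loop must terminate in finitely many iterations, and by the validity part the returned plan is a MAPF solution.

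The hard part will be the inductive construction of the solution-compatible path: it combines the standard CBS-completeness argument (some child of a conflicting CT node remains consistent with any feasible solution) with careful bookkeeping of the bCOA* budget semantics, particularly the case analysis from Corollary~\ref{corollary:bcoa:necessity} that pins down exactly when the returned path can exceed the current budget. I also want to ensure that FOCAL is nonempty whenever OPEN is nonempty during the argument, which follows because $b(N) \geq cost(N)$ for every generated node $N$ and $h$ is a nonnegative admissible heuristic.
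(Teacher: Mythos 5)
Your proof is correct, but it is considerably more detailed than, and structurally different from, the paper's own argument. The paper disposes of this theorem in three sentences: CBSB only expands nodes whose cost does not exceed the minimum $b$-value, the $b$-values of CT nodes are monotonically increasing, and by Theorem~\ref{theorem:cbsy:finite_nodes} there are only finitely many CT nodes for each finite $b$, so a solution must be found after finitely many expansions. It does not construct a solution-compatible branch of the CT, nor does it bound the minimum $b$-value in OPEN inside the completeness proof --- that bound appears only later, in Theorem~\ref{theorem:cbsy:minimumB}. Your route instead front-loads that bound: you fix a feasible solution of cost $C^*$, build the standard CBS solution-compatible branch, and prove the invariant $b(\mathrm{head(OPEN)}) \leq w\,C^*$ via the budget-update case analysis from Corollary~\ref{corollary:bcoa:necessity}; you then convert this into a uniform bound $b(N)\leq m w^2 C^*$ on expanded nodes before invoking Theorem~\ref{theorem:cbsy:finite_nodes}. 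What your approach buys is rigor: it supplies exactly the two facts the paper's sketch leaves implicit --- that a goal node actually enters the CT (the branching argument) and that the minimum $b$ in OPEN cannot race past the solution cost before the goal node becomes selectable --- and it largely subsumes the proof of Theorem~\ref{theorem:cbsy:minimumB}. What the paper's approach buys is brevity, at the cost of leaving those gaps to the reader. One small caveat on your closing remark: to show FOCAL is nonempty you argue from $b(N)\geq cost(N)$ plus nonnegativity of $h$, but that only gives $cost(N)\leq b(N)$, not $cost(N)+h(N)\leq b(N)$; for $h\equiv 0$ this is fine, and for nonzero $h$ you would want the slightly stronger observation that after the low-level search $b_i(N)\geq f_i^*(N)$ for every agent, so $b(N)\geq \sum_i f_i^*(N)$, which is the quantity a high-level admissible $h$ is measured against.
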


\begin{proof}
    CBSB expands all nodes whose cost does not exceed the minimum $b$-value.
    The $b$-value of the CT nodes are monotonically increasing.
    Since for each $b$ there is a finite number of CT nodes, 
    a solution whose cost does not exceed $b$ must be found after expanding a finite number of CT nodes. 
% 	Since all budgets are finite, there is a finite number of CT nodes.
% 	CBSB expands the CT nodes with the minimum number of conflicts, and a solution with length $C \leq \sum_i M_i$ will be eventually expanded. 
\end{proof}

Finally, we prove that a solution returned by CBSB is bounded suboptimal. 
First, we show that the minimum $b$-value in OPEN does not overestimate the optimal solution cost by more than $w,$ a user-specified suboptimality factor. 
Hence, when a goal CT node is expanded whose cost does not exceed the minimum $b$-value in OPEN, then the goal node will contain a bounded suboptimal solution.

\begin{restatable}{theorem}{CBSBminimumB}   	\label{theorem:cbsy:minimumB}
    A CT node in OPEN with minimum $b$-value does not over-approximate the optimal solution cost by more than $w,$ that is,
    $b(best_b) \leq w\, cost^*.$
\end{restatable}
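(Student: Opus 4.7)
The plan is to follow the standard best-first-on-CT argument used for CBS and ECBS, but adapted to the new $b$-value in place of the usual $lb$. I would proceed by (i) showing that every individual budget in any CT node is at most $w$ times the shortest path for that agent subject to the constraints of that node, and then (ii) exhibiting, at every moment of the search, some CT node in OPEN whose constraints are compatible with an optimal MAPF solution; the minimum $b$-value in OPEN must be no larger than the $b$-value of that witness node.

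For step (i), fix a CT node $N$ and an agent $a_i$, and let $f_i^*(N)$ denote the length of the shortest path for $a_i$ satisfying the constraints of $N$. I would argue by induction on the depth of $N$ in the CT. At the root, $b_i = w\,\hat{f_i} \leq w\,f_i^*$ since $\hat{f_i}$ is admissible. For a child $N'$ of $N$ generated by adding a constraint for some agent $a_k$, either $b_k$ is left unchanged (and the inductive bound carries over because adding constraints only increases $f_k^*$), or $b_k$ is reset to $w\,|p_k|$ where $|p_k| > b_k$. In the latter case Corollary~\ref{corollary:bcoa:necessity}, applied to the low-level bCOA* search for $a_k$ under the constraints of $N'$, forces $|p_k| = f_k^*(N')$, so $b_k(N') = w\,f_k^*(N')$. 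Summing over agents gives $b(N) \leq w \sum_{i=1}^m f_i^*(N)$.

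For step (ii), I would mimic the ``surviving optimal'' argument from~\cite{Sharon2015}. Fix any optimal solution $\pi^* = \{p_1^*,\dots,p_m^*\}$ of cost $cost^*$. Call a CT node $N$ \emph{compatible with} $\pi^*$ if every $p_i^*$ satisfies the constraints of $N$. The root is trivially compatible. Whenever CBSB expands a compatible node $N$ and splits on some conflict between agents $a_j,a_k$, the two children impose a constraint on $a_j$ respectively $a_k$ at the conflicting vertex/edge/time; since $p_j^*$ and $p_k^*$ are conflict-free, at least one of these two constraints is satisfied by $\pi^*$, so at least one child is also compatible with $\pi^*$. Bypassing only replaces the paths of a compatible node $N$ with other paths of the same cost bound while inheriting the same constraint set, so it preserves compatibility. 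Therefore, until a goal is returned, OPEN always contains some compatible node $N^*$. For such a node, $f_i^*(N^*) \leq |p_i^*|$ for each $i$, hence $\sum_{i=1}^m f_i^*(N^*) \leq cost^*$.

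Combining the two steps, $b(best_b) \leq b(N^*) \leq w \sum_{i=1}^m f_i^*(N^*) \leq w\,cost^*$, which is the desired inequality. I expect the main obstacle to be step (i): the update rule $b_k \gets w\,|p_k|$ fires only when $|p_k| > b_k$, so I must be careful that after the update the inequality $b_k \leq w\,f_k^*(N')$ actually becomes equality via Corollary~\ref{corollary:bcoa:necessity}, and that in the no-update branch the bound transfers correctly despite the addition of a constraint (which cannot decrease $f_k^*$). A secondary subtlety is verifying that the bypass rule in Algorithm~\ref{cbsb:a:high-level} does not break compatibility or the per-agent bound, since it silently swaps in paths generated under a child's constraints while keeping the parent's constraint set.
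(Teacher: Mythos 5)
Your step (i) is essentially the paper's own argument: the same induction down the CT, with the same case split on whether the low-level search returns a path exceeding its budget, and the same appeal to Corollary~\ref{corollary:bcoa:necessity} to turn the update $b_k \gets w\,|p_k|$ into the equality $b_k(N') = w\,f_k^*(N')$. Where you genuinely diverge is step (ii). The paper does not use the ``surviving optimal solution'' invariant; instead it argues that CBSB never generates a constraint that plain CBS would not generate --- by comparing the low-level search Y($\hat{b}_j$) against Y($0$) (i.e.\ A*) via Corollary~\ref{corollary:bcoa:budget_conflict} and claiming the same conflict $F$ would be discovered --- and from this infers that OPEN always contains a node $N$ with $cost^*(N)\leq cost^*$. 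Your compatibility argument is more self-contained and, frankly, tighter: Corollary~\ref{corollary:bcoa:budget_conflict} only bounds the \emph{number} of conflicts, not which conflicts occur, so the paper's claim that Y($0$) ``would also result in the conflict $F$'' is a weaker link that your route avoids entirely. Your version also makes explicit the two details the paper glosses over --- that a standard CBS split preserves at least one child compatible with any fixed optimal solution, and that the bypass only swaps paths while keeping the parent's constraint set, hence preserves both compatibility and the per-agent budget bound. The trade-off is that the paper's route, had it been made rigorous, would additionally establish that CBSB's CT is embedded in the CT that CBS would build, which is a slightly stronger structural statement than what the theorem needs; your argument proves exactly the inequality $b(best_b)\leq b(N^*)\leq w\sum_i f_i^*(N^*)\leq w\,cost^*$ and nothing more.
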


\begin{proof}
Let $\hat{b}_i(N)$ be a budget for agent $a_i$ at a CT node $N$ before the low-level search and $b_i(N)$ be the updated budget for agent $a_i$ after the low-level search. 
By definition, $b(N)=\sum_i^m b_i(N).$
We first prove by induction that $b(N)\leq w\, cost^*(N),$ where $cost^*(N)$ is the optimal solution cost satisfying the constraints in $N.$ 

At the root node $R,$ all $\hat{b}_i$'s are initialized with $\hat{b}_i(R) = w\, \hat{f_i}(R),$ where $\hat{f_i}(R)$ is an admissible heuristic cost estimate of agent $a_i$'s path, that is, $\hat{f_i}(R) \leq f_i^*(R),$ where $f_i^*(R)$ is the shortest path satisfying the constraints which is empty in $R.$
Suppose Y($\hat{b}_i(R)$) finds a path with length $f_i(R).$ 
If $f_i(R) \leq \hat{b}_i(R),$ then $b_i(R)$ is set to $\hat{b}_i(R)=w\, \hat{f_i}(R) \leq w\, f_i^*(R).$
If $f_i(R) > \hat{b}_i(R),$
then $b_i(R)$ is set to $w\,f_i(R) = w\, f_i^*(R),$
where $f_i(R)=f_i^*(R)$ holds by corollary~\ref{corollary:bcoa:necessity}.
In either case, $b_i(R) \leq w\, f_i^*(R).$ 
This holds for all $a_i$'s, and hence
$b(R)=\sum_i^m b_i(R) \leq w\, f_i^*(R) \leq w\, cost^*(R).$
% where $cost^*(R)$ is the optimal solution cost satisfying the empty constraints in $R.$
%
Now assume that $b_i(N) \leq w\, f_i^*(N)$ holds for all $a_i$ in a CT node $N$.
We will first show that for a child CT node $N',$ $b_i(N')\leq w\, f_i^*(N')$ also holds for all agents.
Note that for any $i,$ $f_i^*(N) \leq f_i^*(N')$ holds since $N.constraints \subseteq N'.constraints.$
Therefore, for each $i,$ $\hat{b}_i(N') \leq w\, f_i^*(N')$ holds, since $\hat{b}_i(N')$ is set to $b_i(N) \leq w\, f_i^*(N) \leq w\, f_i^*(N').$
Suppose Y($\hat{b}_i(N')$) finds a path with length $f_i(N')$ for agent $a_i$ in $N'$.
If $f_i(N') > \hat{b}_i(N'),$ then 
$b_i(N')$ is updated with $w\,f_i(N')$
so that $b_i(N')= w\, f_i^*(N'),$ 
for $f_i(N')=f_i^*(N')$ by Corollary~\ref{corollary:bcoa:necessity}.
Hence, $b_i(N') \leq w\, f_i^*(N')$ for all $i,$ and $b(N')\leq w\, cost^*(N')$ follows as before.
Therefore, $b(N) \leq w\, cost^*(N)$ for any CT node $N$.
Next, we show that a CT node of CBSB does not contain a constraint that CBS would not generate.

Let $T$ be a constraint on agent $a_i$ in a CT node $N'$ and 
let $N$ be a precedent node of $N',$ where the conflict $F$ that resulted in the constraint $T$ was first found. 
Note that $F$ was found by, say, agent $a_j$ with Y($\hat{b}_j(N)$).
Now, suppose Y(0) was used instead in $N$ for agent $a_j.$
Since $\hat{b}_j(N)>0,$ Y(0) would not find fewer conflicts than Y($B_j(N)$) according to Corollary~\ref{corollary:bcoa:budget_conflict}.
Hence, Y(0) would also result in the conflict $F$ in $N$. 
But Y(0) is equivalent to A*, so A* would have found the conflict $F$ in $N$ for the agent $a_j.$
Therefore, at least one CT node $N$ exists in OPEN whose $cost^*(N)\leq cost^*.$
But $b(N)\leq w\, cost^*(N),$ so the minimum $b$ in OPEN does not overestimate the true optimal solution $cost^*$ by more than $w.$
\end{proof}

\begin{restatable}{theorem}{CBSBBoundedSuboptimality}   	\label{theorem:cbsy:boundedsuboptimality}
	CBSB returns a solution whose cost does not exceed the optimal solution cost by more than $w$. 
\end{restatable}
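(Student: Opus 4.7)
The plan is to deduce the bound directly from Theorem~\ref{theorem:cbsy:minimumB} together with the admission rule for FOCAL. First I would observe that whenever CBSB terminates (Algorithm~\ref{cbsb:a:high-level}, Line~\ref{cbsb:al:terminate}), the returned node $N$ is the head of FOCAL at that iteration, and by the conflict-free termination test its heuristic-to-go satisfies $h(N)=0$, so that $cost(N)+h(N)=cost(N)$.

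Next I would invoke the FOCAL membership invariant maintained by \textsc{PushNode} and by the re-filtering loop in \textsc{SelectNode}: every node $N\in$~FOCAL satisfies $cost(N)+h(N)\leq b(\mathrm{head}(\mathrm{OPEN}))=b(best_b)$ evaluated with the current minimum $b$-value. Since $best_b$ is monotone non-decreasing (it only changes inside \textsc{SelectNode}, where $b_{\min}$ is updated upward), this inequality still holds at the moment of expansion. Combining with the previous step yields $cost(N)\leq b(best_b)$.

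Finally, I would apply Theorem~\ref{theorem:cbsy:minimumB}, which asserts $b(best_b)\leq w\,cost^*$, to conclude
\[
cost(N)\;\leq\; b(best_b)\;\leq\; w\,cost^{*}.
\]
Completeness (Theorem~\ref{theorem:cbsy:completness}) guarantees that such a returning $N$ exists whenever a solution exists, so the bound is vacuously satisfied otherwise.

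The only subtle step is the second one: one must be careful that the FOCAL membership condition remains valid at the instant $N$ is popped, because $b_{\min}$ may have risen between the time $N$ was inserted and the time it is expanded. This is exactly what the re-filtering loop at the top of \textsc{SelectNode} is designed to handle — it sweeps OPEN and inserts any newly qualifying nodes into FOCAL whenever $b_{\min}$ grows, while nodes already in FOCAL remain admissible because their stored $cost+h$ only got compared against a smaller bound previously. So no special argument beyond a brief appeal to this invariant is needed, and the main theorem follows as a one-line corollary of Theorem~\ref{theorem:cbsy:minimumB}.
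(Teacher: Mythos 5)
Your proposal is correct and follows essentially the same route as the paper: the paper's proof is a two-sentence appeal to the fact that any expanded node satisfies $cost(N)\leq b(best_b)$ together with Theorem~\ref{theorem:cbsy:minimumB}, which is exactly your chain $cost(N)\leq b(best_b)\leq w\,cost^*$. Your additional care about $h(N)=0$ at a conflict-free node and the monotonicity of $b_{\min}$ under re-filtering just makes explicit the invariant the paper leaves implicit.
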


\begin{proof}
    CBSB expands a CT node whose $cost$ does not exceed the minimum $b$ in OPEN. 
    Hence, when a goal CT node is expanded from CBSB, 
    the cost does not exceed the optimal solution cost by more than $w$ by Theorem~\ref{theorem:cbsy:minimumB}.
\end{proof}

\begin{figure*}[ht]
	\centering
	\def\figuresize{0.33}
	\begin{subfigure}{1\textwidth}
        \begin{tikzpicture}
        \node(a){\includegraphics[width=1\linewidth,trim={17mm 30mm 10mm 9mm},clip]{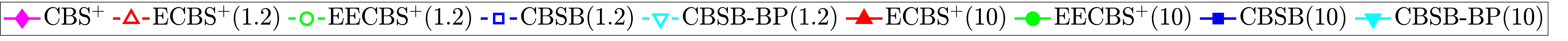}};
        \end{tikzpicture}
	\end{subfigure}\\[-1.5ex]	
	\begin{subfigure}{\figuresize\textwidth}
		\begin{tikzpicture}
			\node(a){\includegraphics[width=\myLineScale\linewidth,trim={2mm 1mm 8mm 0mm},clip]{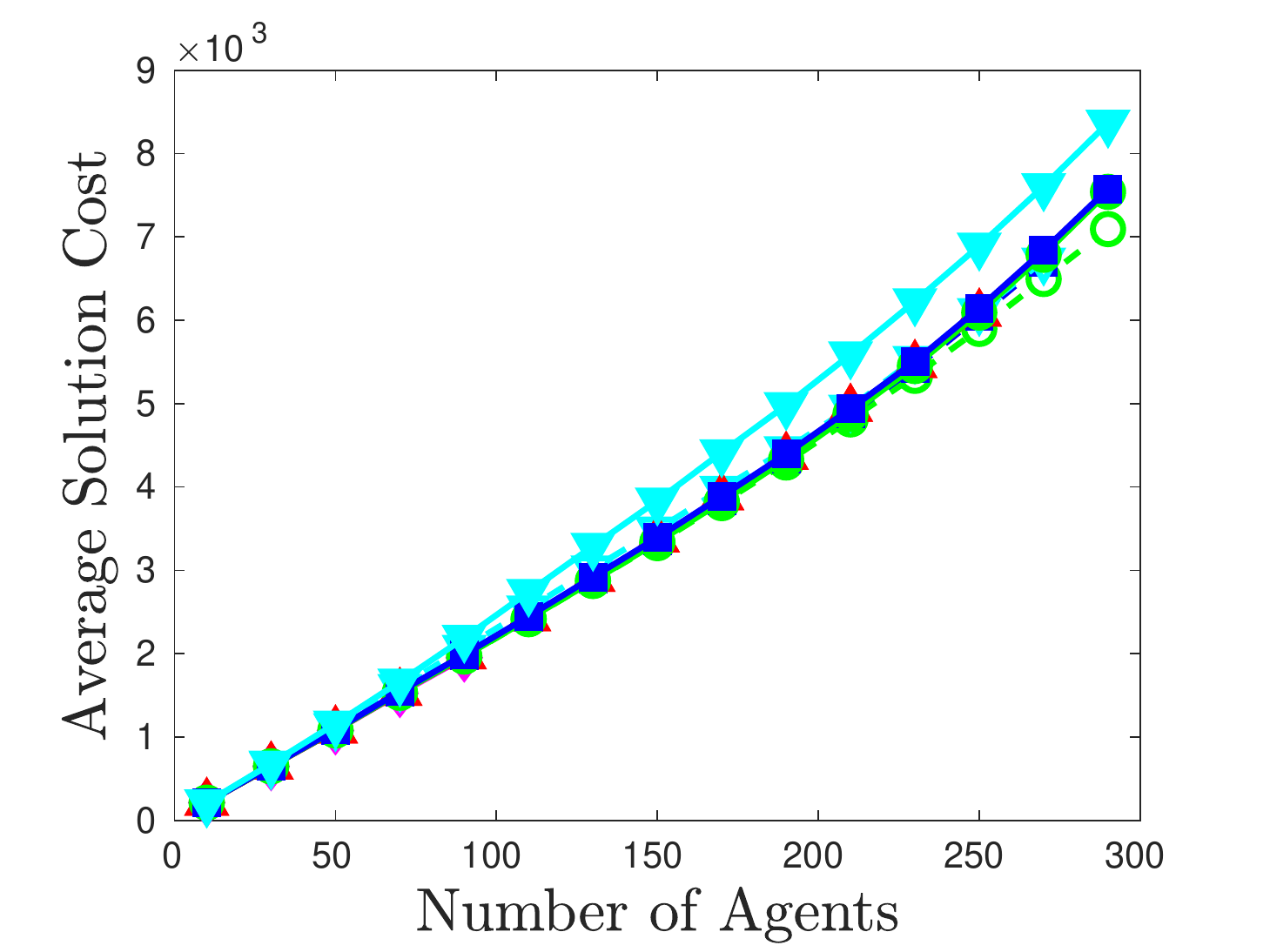}};
			\node at (a.north west)
			[
			anchor=center,
			xshift=48.5mm,
			yshift=-33.5mm
			]
			{
				\includegraphics[width=0.3\linewidth]{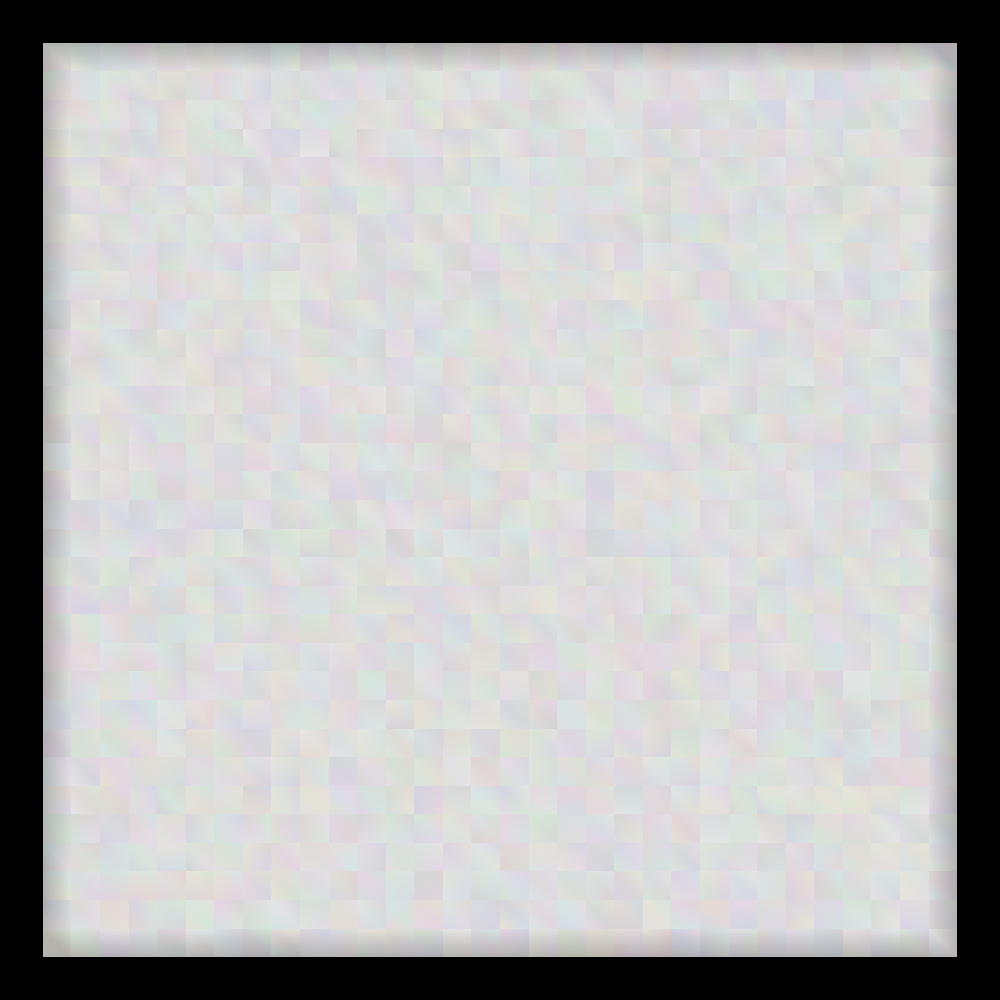}
			};
			\node at (a.north west)
			[
			anchor=center,
			xshift=48.5mm,
			yshift=-23mm
			]
			{
				\tiny{empty-32-32}
			};
		\end{tikzpicture}
	\end{subfigure}\hfill
	\begin{subfigure}{\figuresize\textwidth}
		\begin{tikzpicture}
			\node(a){\includegraphics[width=\myLineScale\linewidth,trim={2mm 1mm 8mm 0mm},clip]{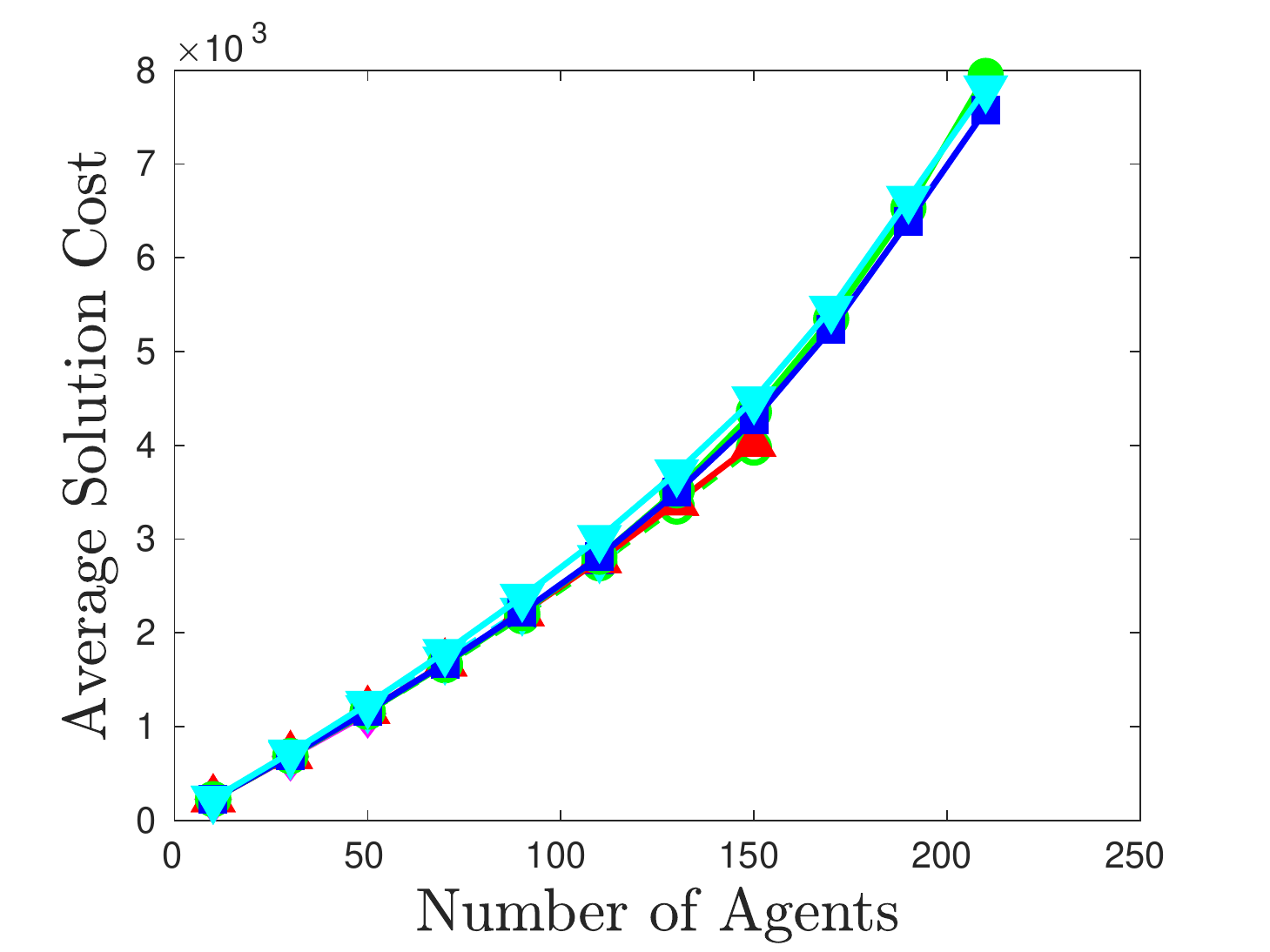}};
			\node at (a.north west)
			[
			anchor=center,
			xshift=48.5mm,
			yshift=-33.5mm
			]
			{
				\includegraphics[width=0.3\linewidth]{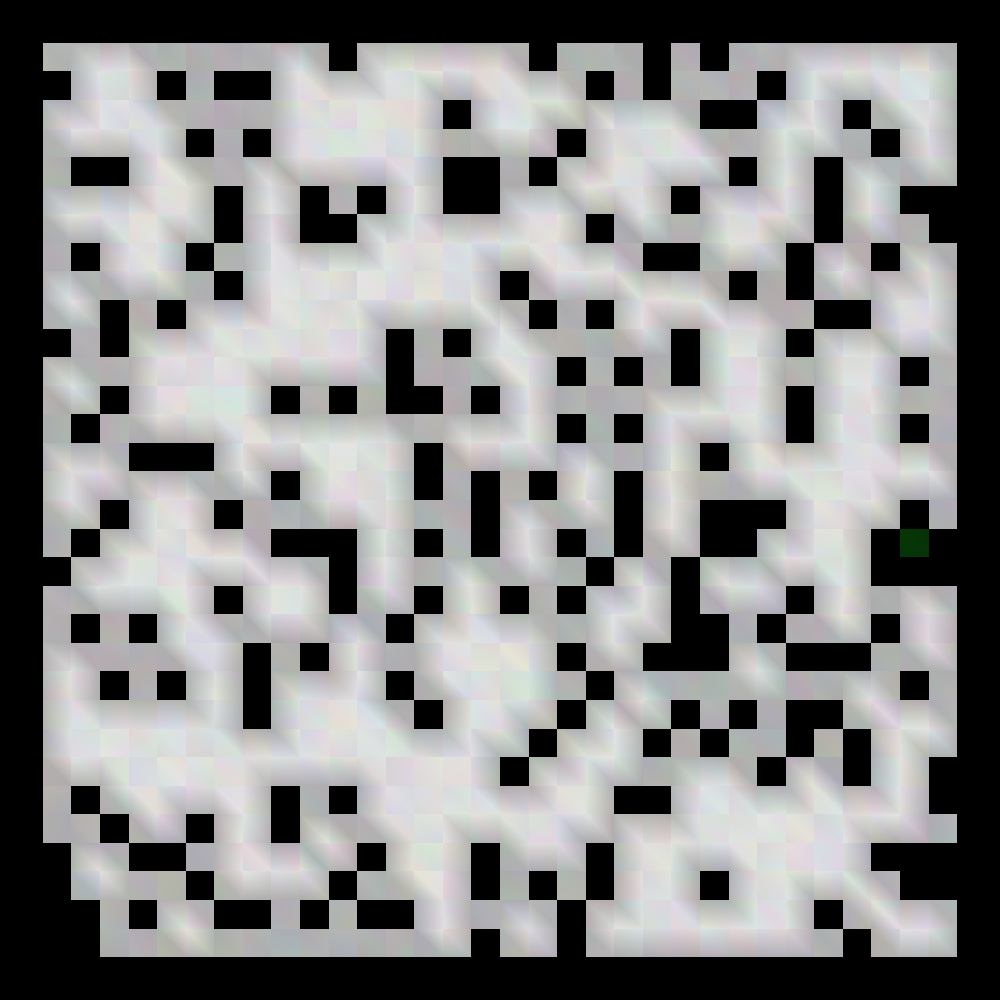}
			};
			\node at (a.north west)
			[
			anchor=center,
			xshift=48.5mm,
			yshift=-23mm
			]
			{
				\tiny{radom-32-32-20}
			};
		\end{tikzpicture}
	\end{subfigure}
	\begin{subfigure}{\figuresize\textwidth}
		\begin{tikzpicture}
			\node(a){\includegraphics[width=\myLineScale\linewidth,trim={0mm 1mm 8mm 0mm},clip]{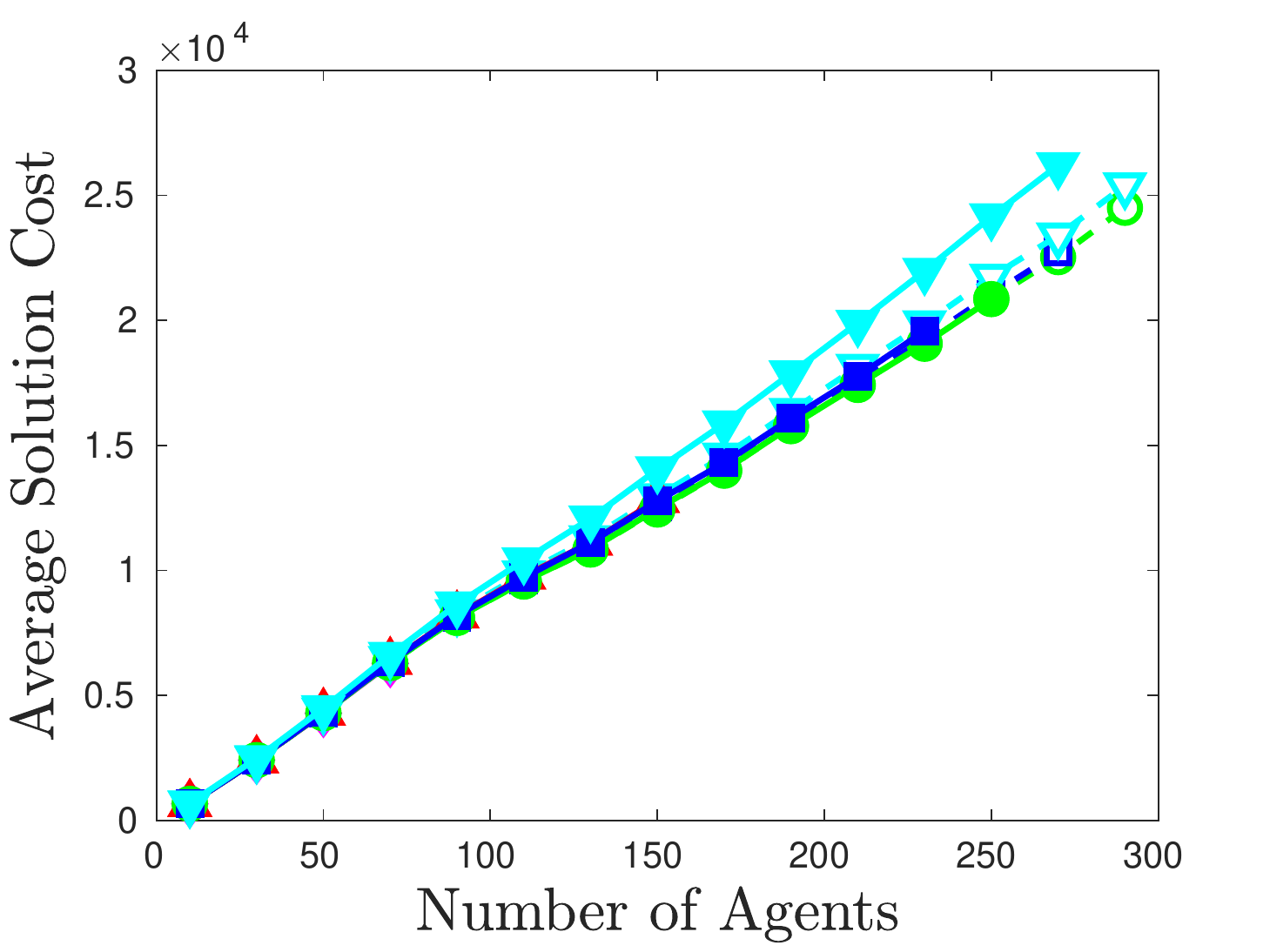}};
			\node at (a.north west)
			[
			anchor=center,
			xshift=48.5mm,
			yshift=-33mm
			]
			{
				\includegraphics[width=0.3\linewidth]{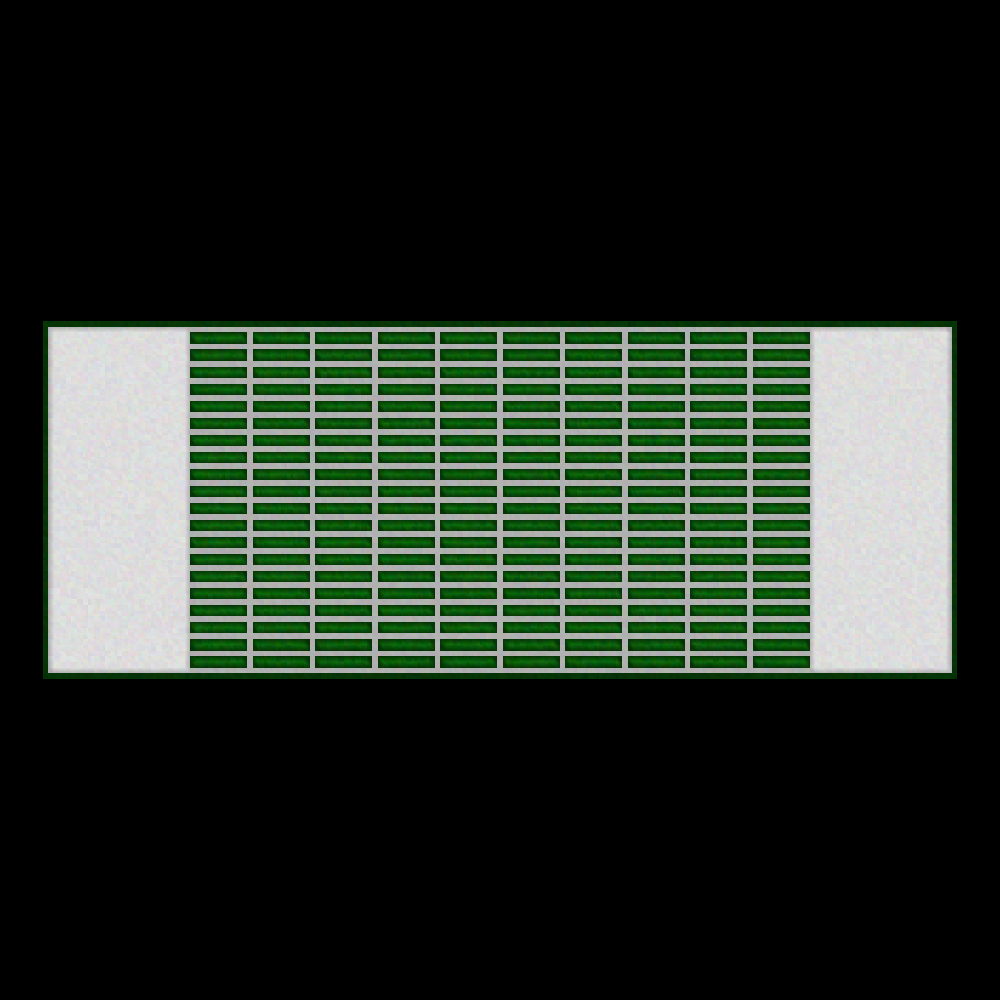}
			};
			\node at (a.north west)
			[
			anchor=center,
			xshift=48.5mm,
			yshift=-23mm
			]
			{
				\tiny{warehouse-10-20-10-2-1}
			};
		\end{tikzpicture}
	\end{subfigure}\\[-1.5ex]
	\begin{subfigure}{\figuresize\textwidth}
		\begin{tikzpicture}
			\node(a){\includegraphics[width=\myLineScale\linewidth,trim={2mm 1mm 8mm 0mm},clip]{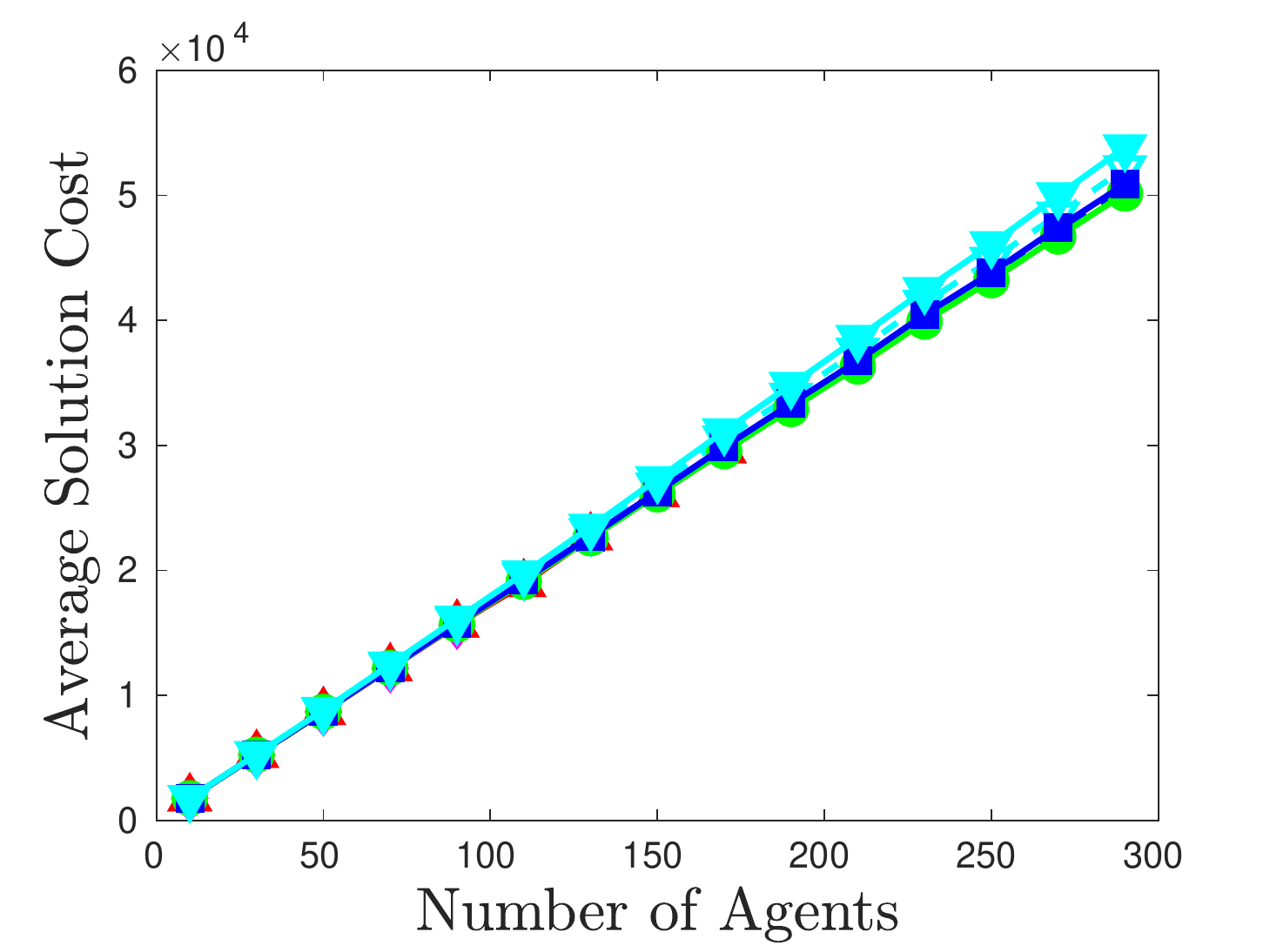}};
			\node at (a.north west)
			[
			anchor=center,
			xshift=48.5mm,
			yshift=-33.5mm
			]
			{
				\includegraphics[width=0.3\linewidth]{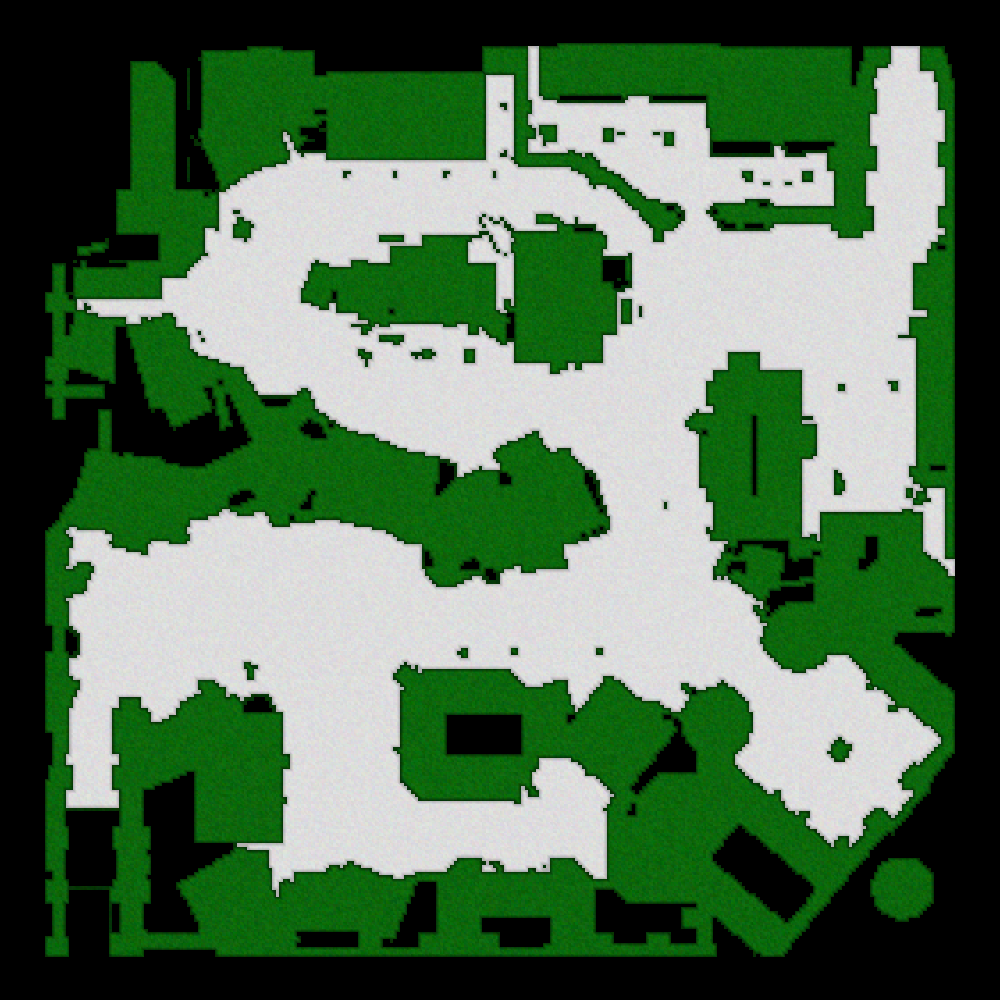}
			};
			\node at (a.north west)
			[
			anchor=center,
			xshift=48.5mm,
			yshift=-23mm
			]
			{
				\tiny{den520d}
			};
		\end{tikzpicture}
	\end{subfigure}
	\begin{subfigure}{\figuresize\textwidth}
		\begin{tikzpicture}
			\node(a){\includegraphics[width=\myLineScale\linewidth,trim={2mm 1mm 8mm 0mm},clip]{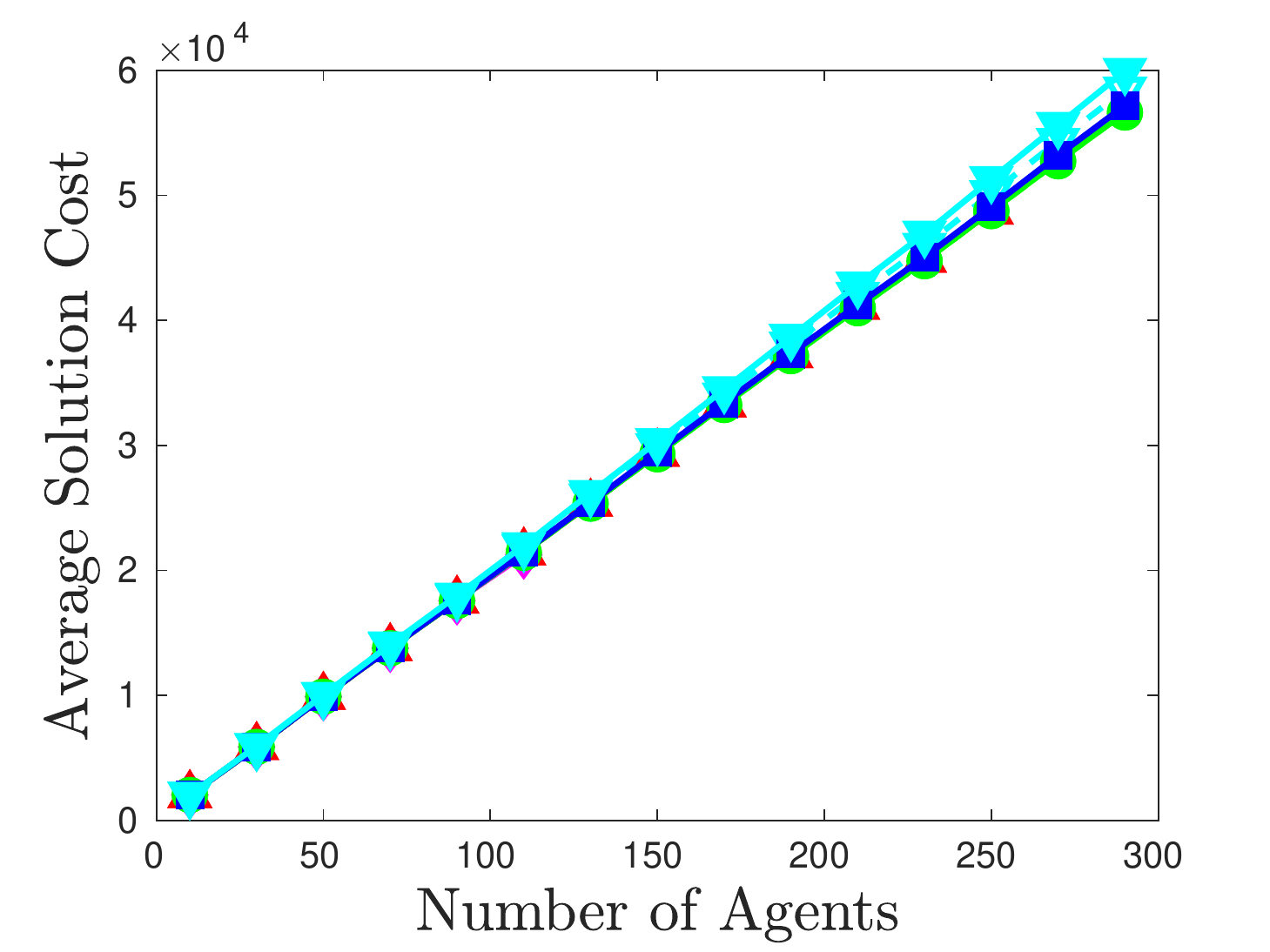}};
			\node at (a.north west)
			[
			anchor=center,
			xshift=48.5mm,
			yshift=-33.5mm
			]
			{
				\includegraphics[width=0.3\linewidth]{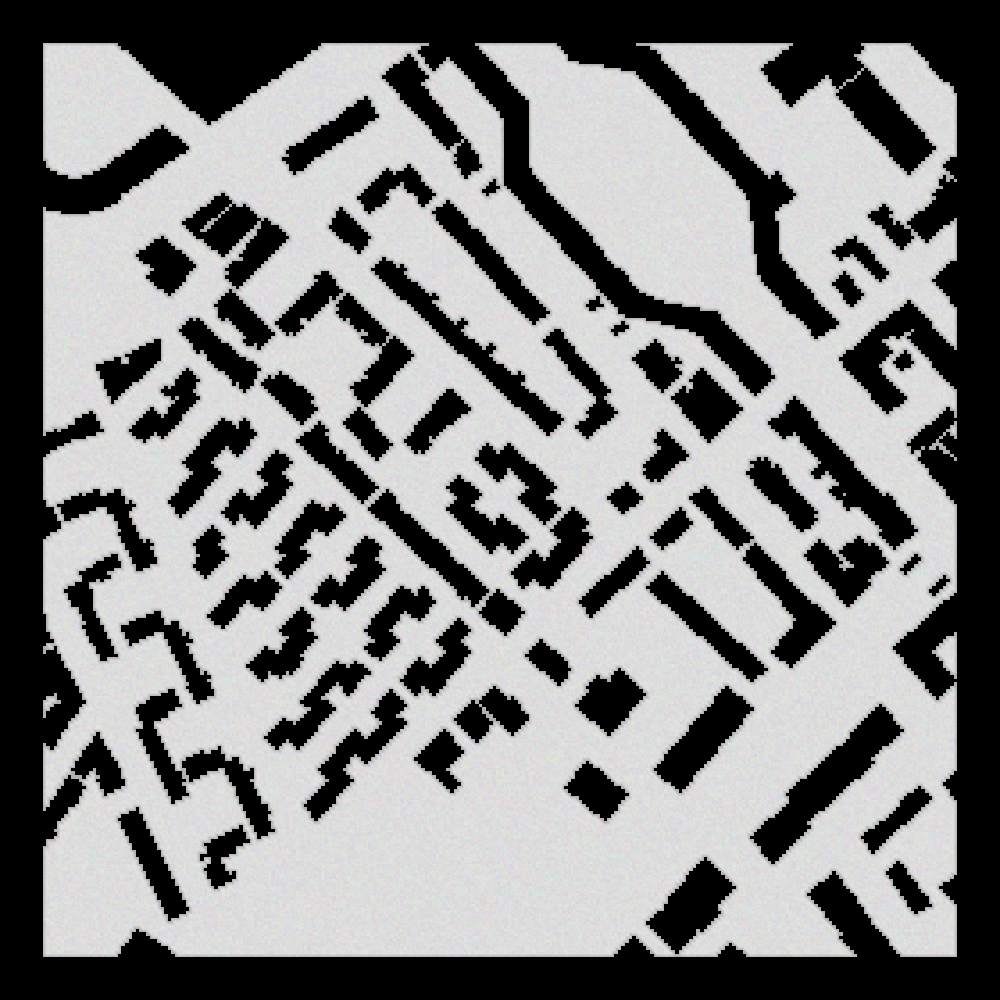}
			};
			\node at (a.north west)
			[
			anchor=center,
			xshift=48.5mm,
			yshift=-23mm
			]
			{
				\tiny{Boston 0 256}
			};
		\end{tikzpicture}
	\end{subfigure}\hfill
	\begin{subfigure}{\figuresize\textwidth}
		\begin{tikzpicture}
			\node(a){\includegraphics[width=\myLineScale\linewidth,trim={2mm 1mm 8mm 0mm},clip]{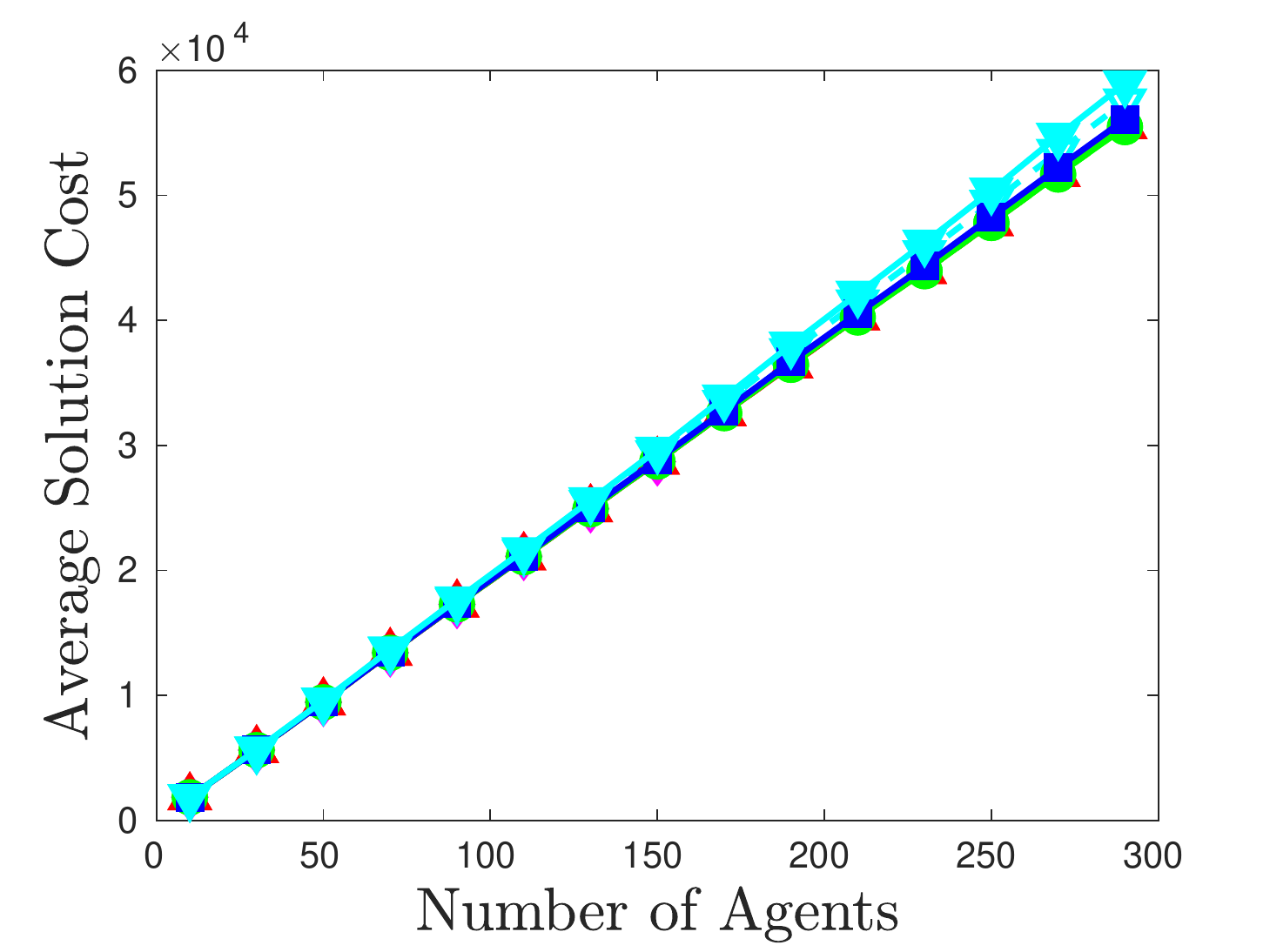}};
			\node at (a.north west)
			[
			anchor=center,
			xshift=48.5mm,
			yshift=-33.5mm
			]
			{
				\includegraphics[width=0.3\linewidth]{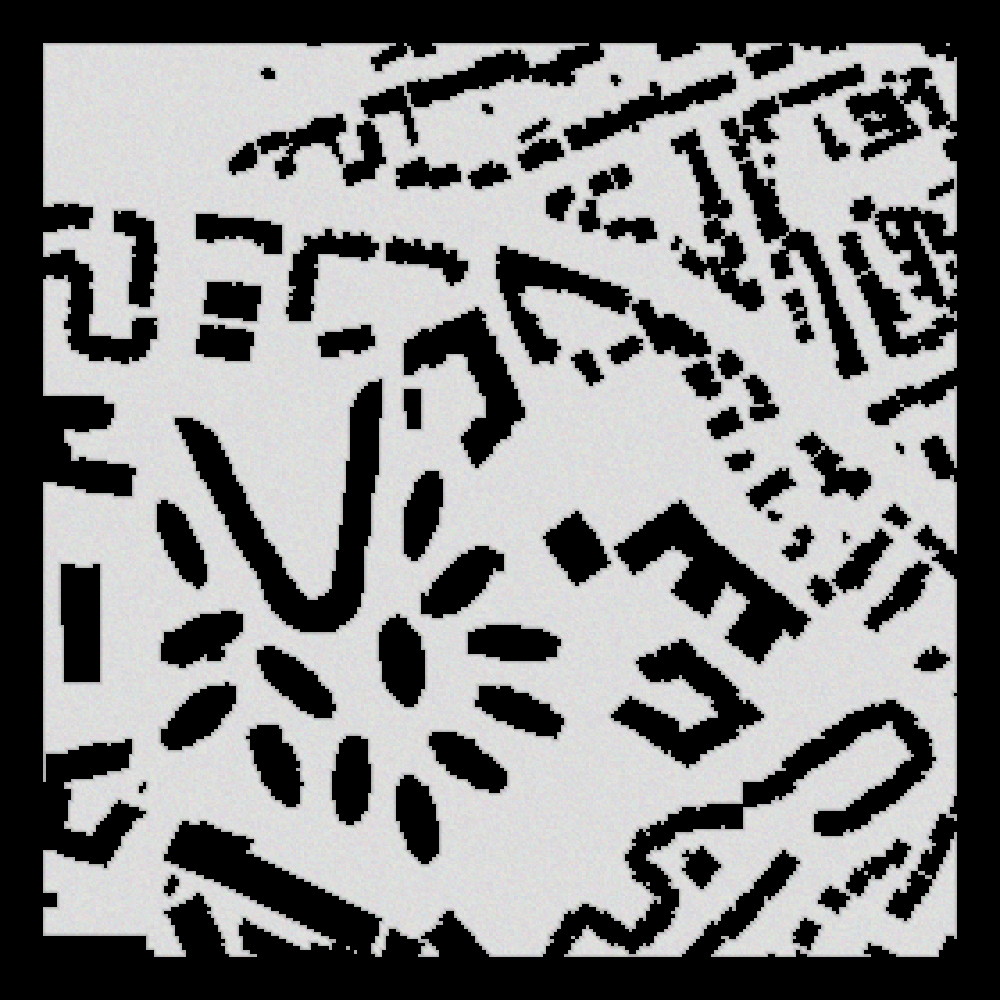}
			};
			\node at (a.north west)
			[
			anchor=center,
			xshift=48.5mm,
			yshift=-23mm
			]
			{
				\tiny{Paris 1 256}
			};
		\end{tikzpicture}
	\end{subfigure}\hfill
	\caption{Cost of solution found by CBS$^+$, ECBS$^+$, EECBS$^+$, and CBSB(-BP), averaged over 25 random scenarios with different number of agents in 6 different maps. The suboptiamlity factor for ECBS$^+$, EECBS$^+$, and CBSB are set to 1.2 and 10. In all the instances, ECBS$^+$, EECBS$^+$, and CBSB finds a near-optimal solution. 
	A missing marker indicates that the corresponding solver could not find a solution in more than 75\% of the 25 random instances given a time limit of 10 seconds.
% 	The blue square markers on the top right is hidden behind the green circle markers.
    }
	\label{cbsb:f:cost_comparison}
\end{figure*}

\begin{figure*}[ht]
	\centering
	\begin{subfigure}{1\textwidth}
        \begin{tikzpicture}
        \node(a){\includegraphics[width=\myLineScale\linewidth,trim={10mm 30mm 10mm 9mm},clip]{figures/legend2.pdf}};
        \end{tikzpicture}
	\end{subfigure}\\[-1ex]	
	\begin{subfigure}{\myMSFigureScale\textwidth}
        \begin{tikzpicture}
        \node(a){\includegraphics[width=\myLineScale\linewidth,trim={1mm 0 2mm 6mm},clip]{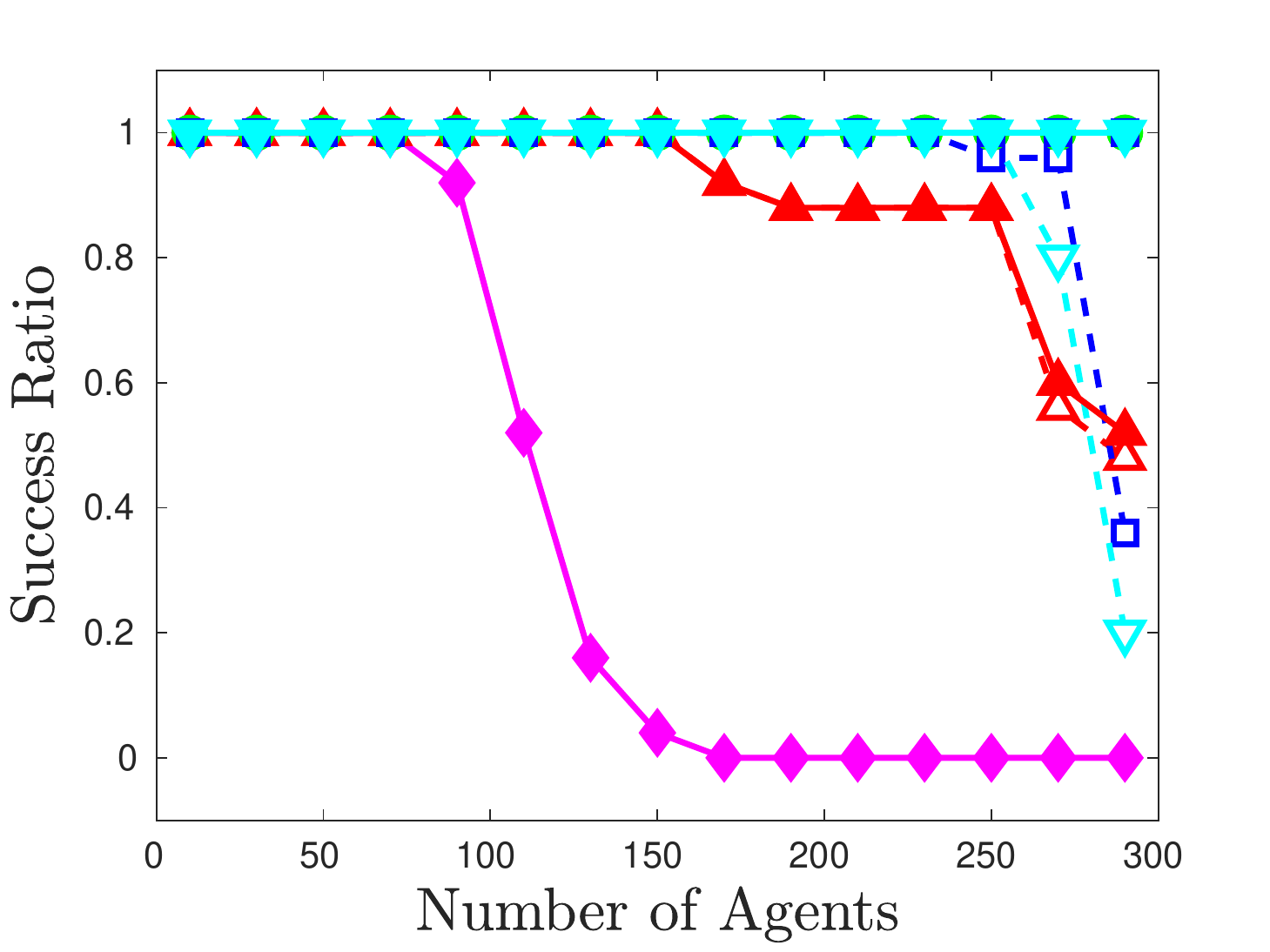}};
        \end{tikzpicture}
		% 		\caption{}
	\end{subfigure}\hfill
	\begin{subfigure}{\myMSFigureScale\textwidth}
        \begin{tikzpicture}
        \node(a){\includegraphics[width=\myLineScale\linewidth,trim={1mm 0 2mm 6mm},clip]{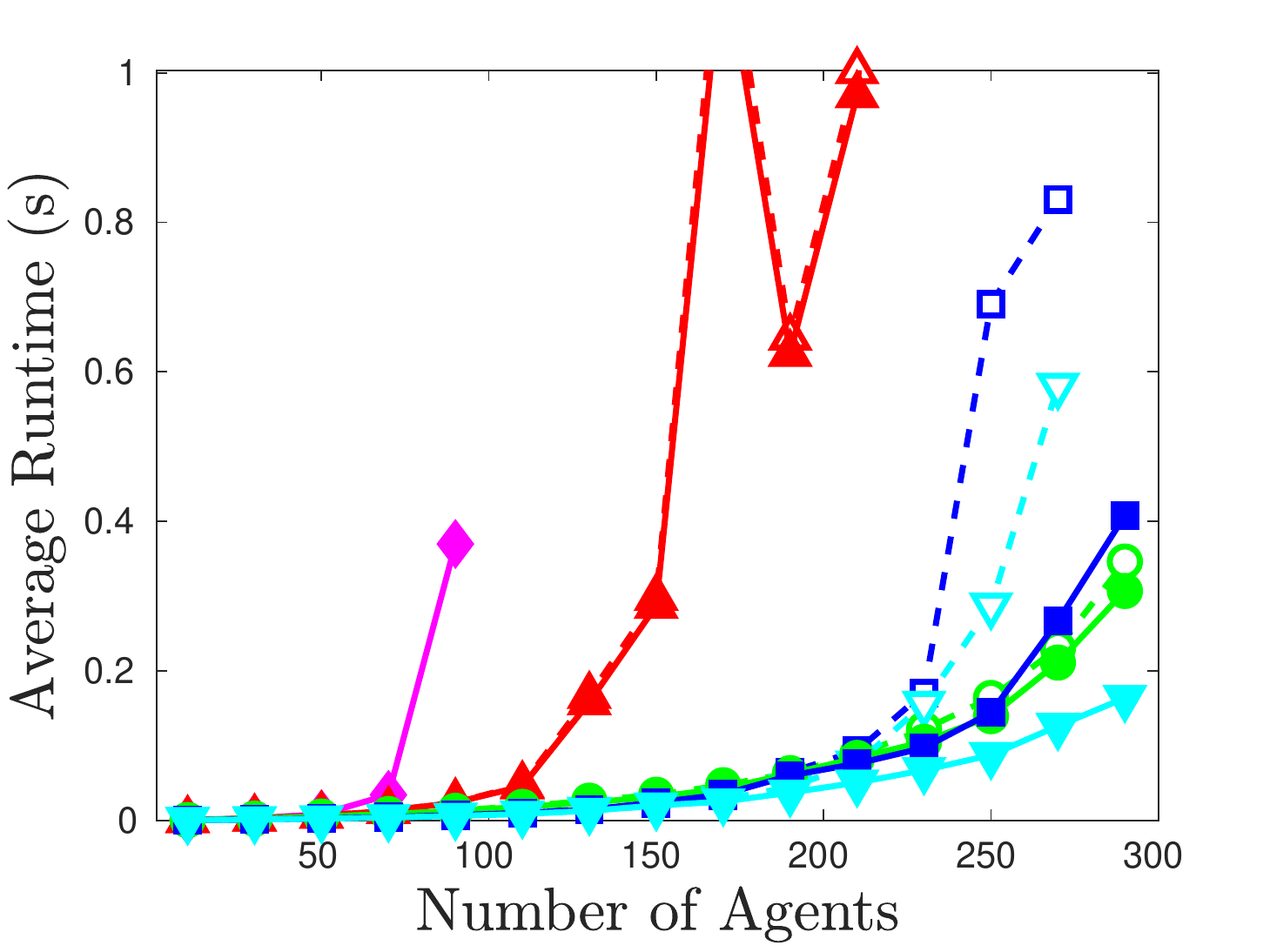}};
        \node at (a.north west)
        [
        anchor=center,
        xshift=12mm,
        yshift=-8mm
        ]
        {
            \includegraphics[width=0.3\linewidth]{figures/empty}
        };
         \node at (a.north west)
	    [
	    anchor=center,
	    xshift=12mm,
	    yshift=-16mm
	    ]
	    {
	    	\tiny{empty-32-32}
	    };
        \end{tikzpicture}
		% 		\caption{}
	\end{subfigure}\hfill
	\begin{subfigure}{\myMSFigureScale\textwidth}
        \begin{tikzpicture}
        \node(a){\includegraphics[width=\myLineScale\linewidth,trim={1mm 0 2mm 6mm},clip]{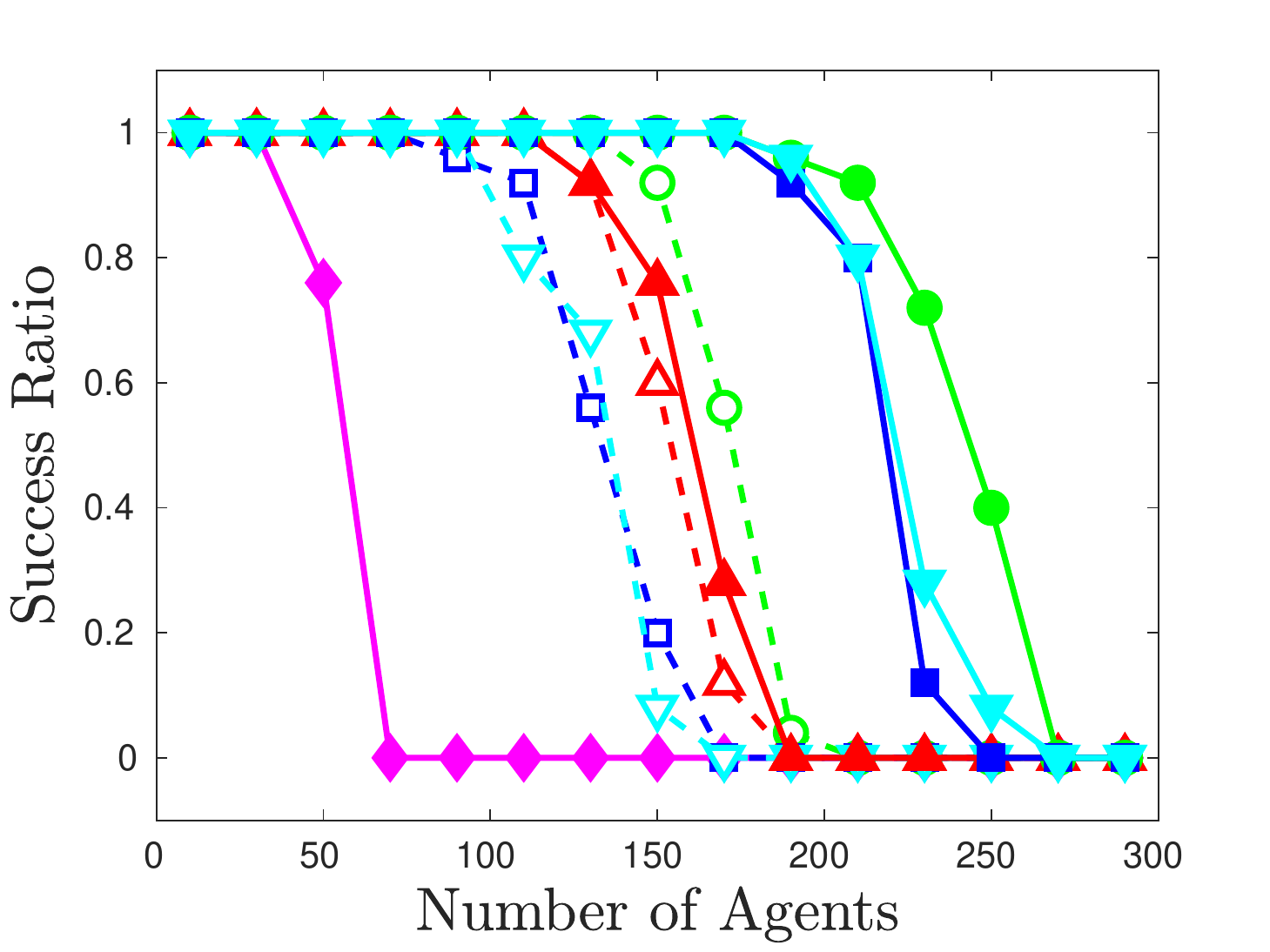}};
        \end{tikzpicture}
		% 		\caption{}
	\end{subfigure}\hfill
	\begin{subfigure}{\myMSFigureScale\textwidth}
        \begin{tikzpicture}
        \node(a){\includegraphics[width=\myLineScale\linewidth,trim={1mm 0 2mm 6mm},clip]{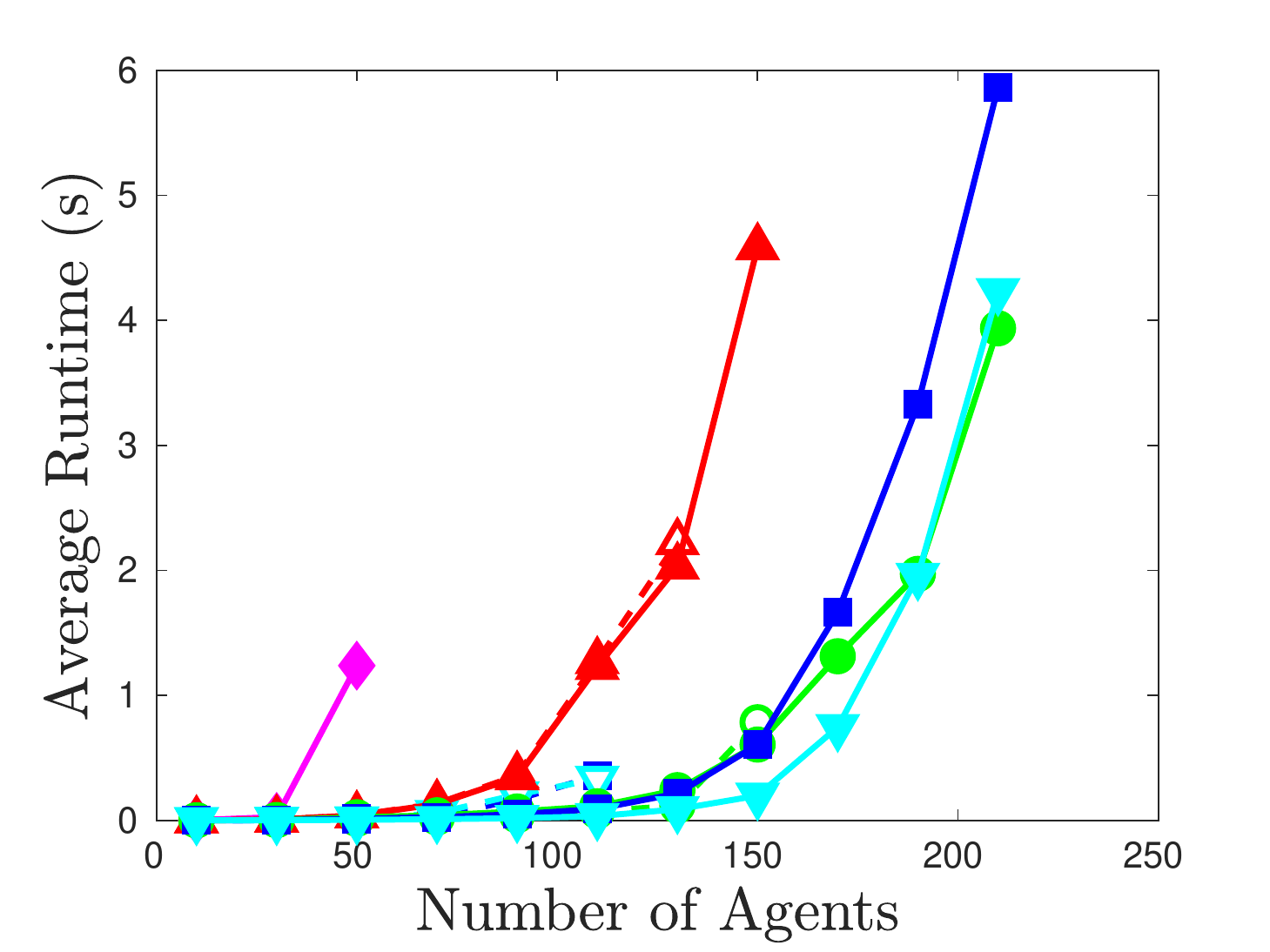}};
                \node at (a.north west)
        [
        anchor=center,
        xshift=12mm,
        yshift=-8mm
        ]
        {
            \includegraphics[width=0.3\linewidth]{figures/random}
        };
             \node at (a.north west)
	    [
	    anchor=center,
	    xshift=12mm,
	    yshift=-16mm
	    ]
	    {
	    	\tiny{random-32-32-20}
	    };
        \end{tikzpicture}
		% 		\caption{}
	\end{subfigure}\\[-1ex]
	\begin{subfigure}{\myMSFigureScale\textwidth}
        \begin{tikzpicture}
        \node(a){\includegraphics[width=\myLineScale\linewidth,trim={1mm 0 2mm 6mm},clip]{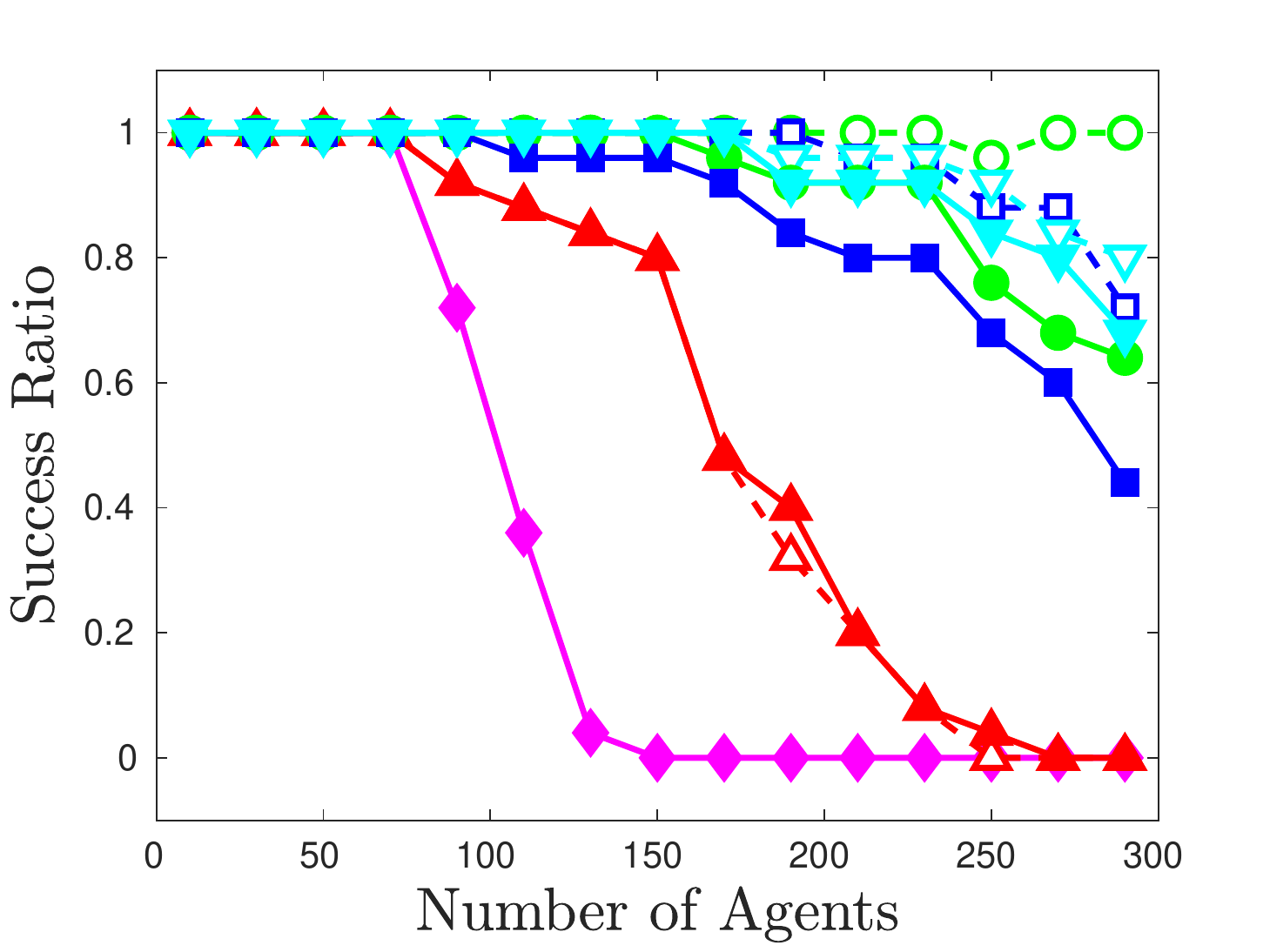}};
        \end{tikzpicture}
		% 		\caption{}
	\end{subfigure}\hfill
	\begin{subfigure}{\myMSFigureScale\textwidth}
        \begin{tikzpicture}
        \node(a){\includegraphics[width=\myLineScale\linewidth,trim={1mm 0 2mm 6mm},clip]{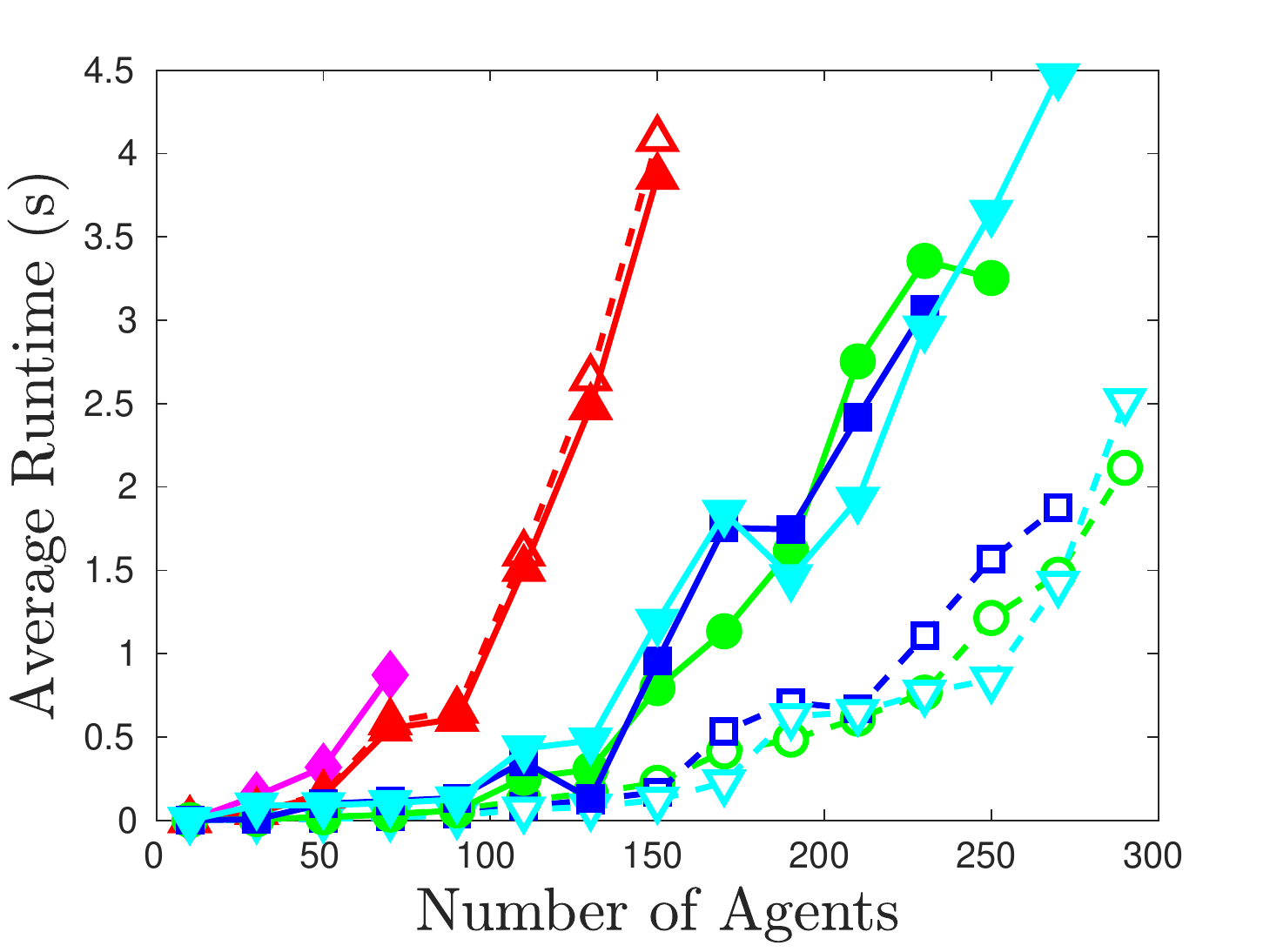}};
                \node at (a.north west)
        [
        anchor=center,
        xshift=12mm,
        yshift=-8mm
        ]
        {
            \includegraphics[width=0.3\linewidth]{figures/warehouse}
        };
             \node at (a.north west)
	    [
	    anchor=center,
	    xshift=12mm,
	    yshift=-16mm
	    ]
	    {
	    	\tiny{warehouse-10-20-10-2-1}
	    };
        \end{tikzpicture}
		% 		\caption{}
	\end{subfigure}\hfill
	\begin{subfigure}{\myMSFigureScale\textwidth}
        \begin{tikzpicture}
        \node(a){\includegraphics[width=\myLineScale\linewidth,trim={1mm 0 2mm 6mm},clip]{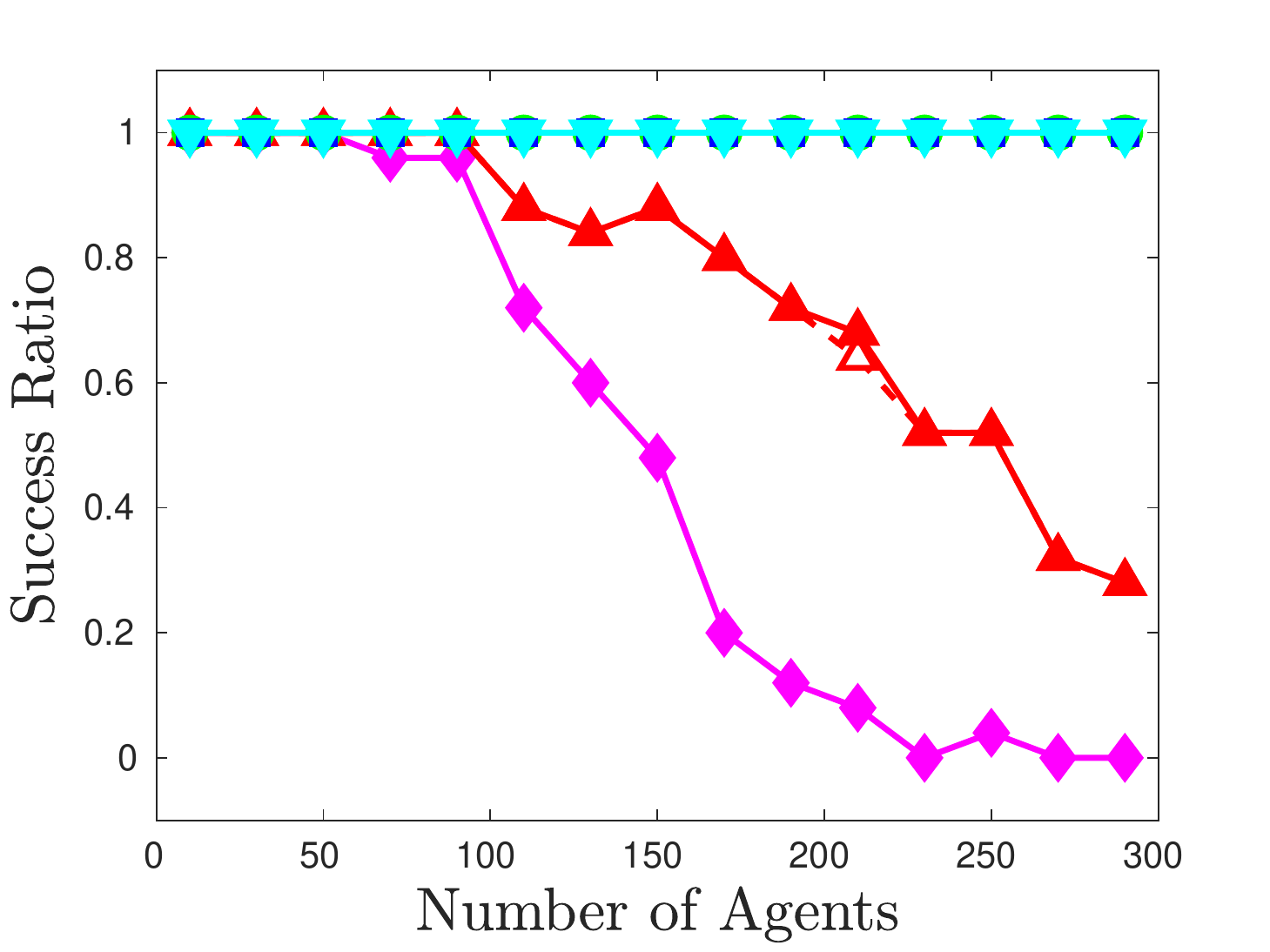}};
        \end{tikzpicture}
		% 		\caption{}
	\end{subfigure}\hfill
	\begin{subfigure}{\myMSFigureScale\textwidth}
        \begin{tikzpicture}
        \node(a){\includegraphics[width=\myLineScale\linewidth,trim={1mm 0 2mm 6mm},clip]{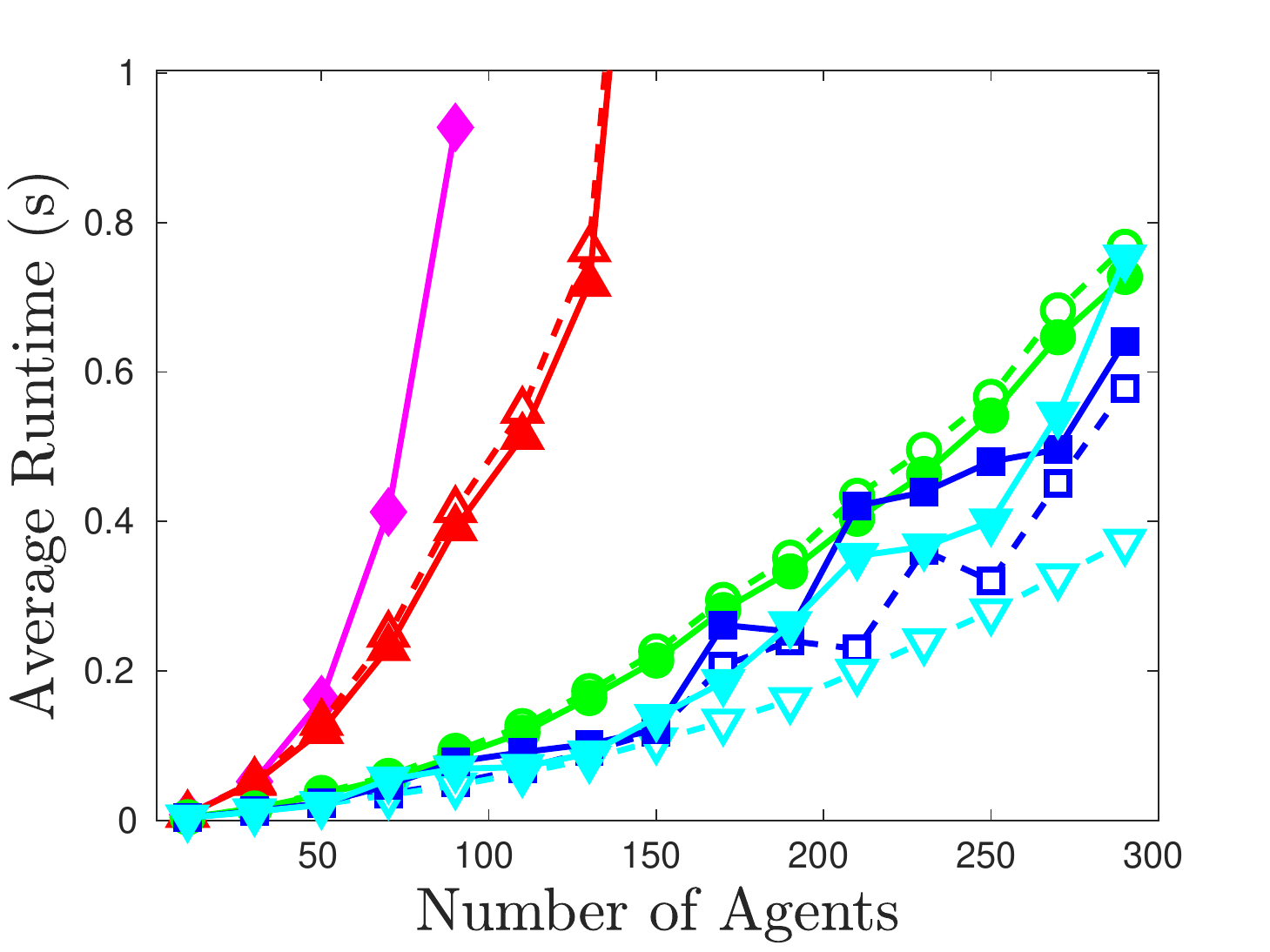}};
                \node at (a.north west)
        [
        anchor=center,
        xshift=12mm,
        yshift=-8mm
        ]
        {
            \includegraphics[width=0.3\linewidth]{figures/den}
        };
             \node at (a.north west)
	    [
	    anchor=center,
	    xshift=12mm,
	    yshift=-16mm
	    ]
	    {
	    	\tiny{den520d}
	    };
        \end{tikzpicture}
		% 		\caption{}
	\end{subfigure}\\[-1ex]
	\begin{subfigure}{\myMSFigureScale\textwidth}
        \begin{tikzpicture}
        \node(a){\includegraphics[width=\myLineScale\linewidth,trim={1mm 0 2mm 6mm},clip]{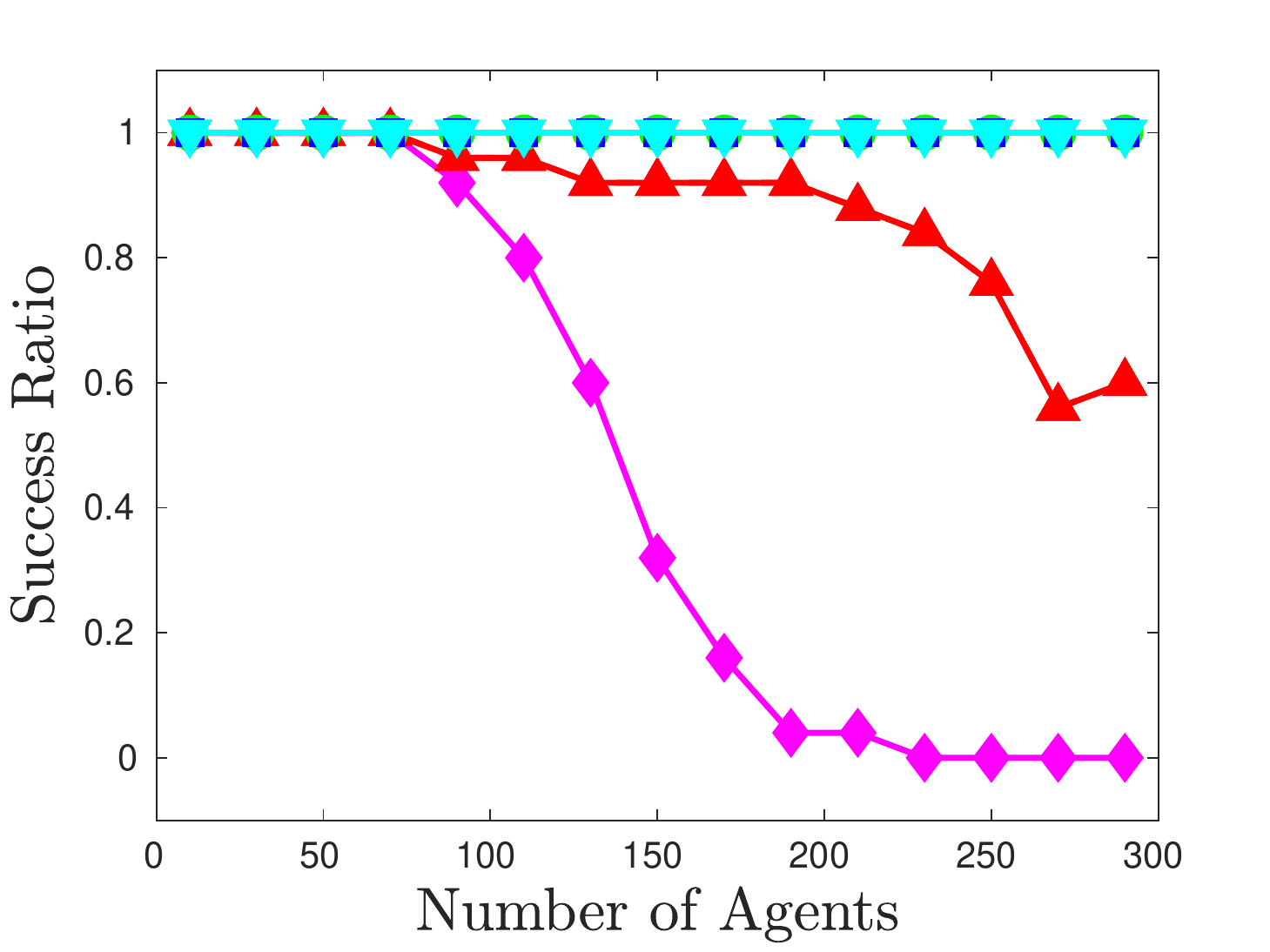}};
        \end{tikzpicture}
		% 		\caption{}
	\end{subfigure}\hfill
	\begin{subfigure}{\myMSFigureScale\textwidth}
        \begin{tikzpicture}
        \node(a){\includegraphics[width=\myLineScale\linewidth,trim={1mm 0 2mm 6mm},clip]{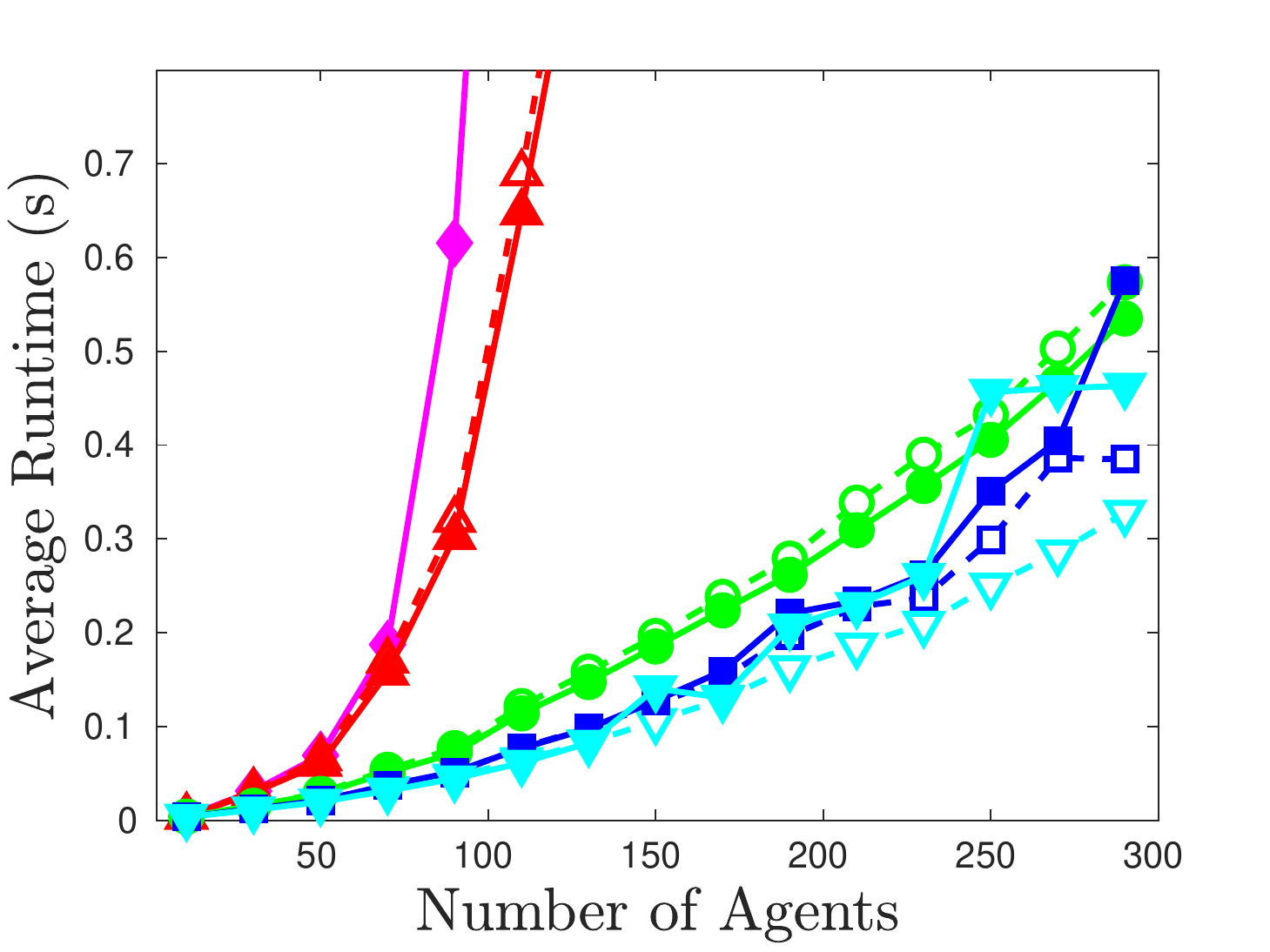}};
                \node at (a.north west)
        [
        anchor=center,
        xshift=12mm,
        yshift=-8mm
        ]
        {
            \includegraphics[width=0.3\linewidth]{figures/boston}
        };
             \node at (a.north west)
	    [
	    anchor=center,
	    xshift=12mm,
	    yshift=-16mm
	    ]
	    {
	    	\tiny{Boston 0 256}
	    };
        \end{tikzpicture}
		% 		\caption{}
	\end{subfigure}\hfill
	\begin{subfigure}{\myMSFigureScale\textwidth}
        \begin{tikzpicture}
        \node(a){\includegraphics[width=\myLineScale\linewidth,trim={1mm 0 2mm 6mm},clip]{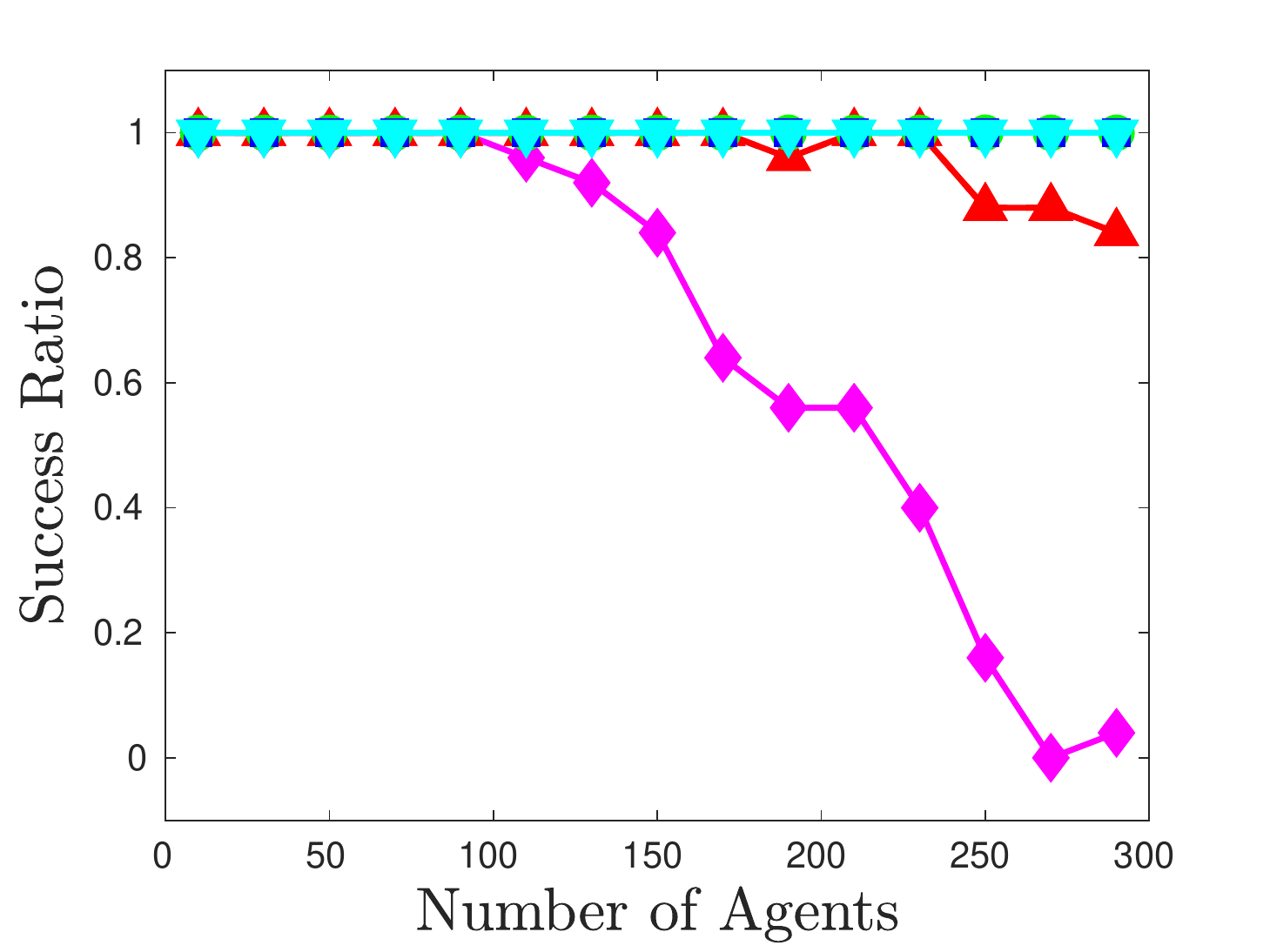}};
        \end{tikzpicture}
		% 		\caption{}
	\end{subfigure}\hfill
	\begin{subfigure}{\myMSFigureScale\textwidth}
        \begin{tikzpicture}
        \node(a){\includegraphics[width=\myLineScale\linewidth,trim={1mm 0 2mm 6mm},clip]{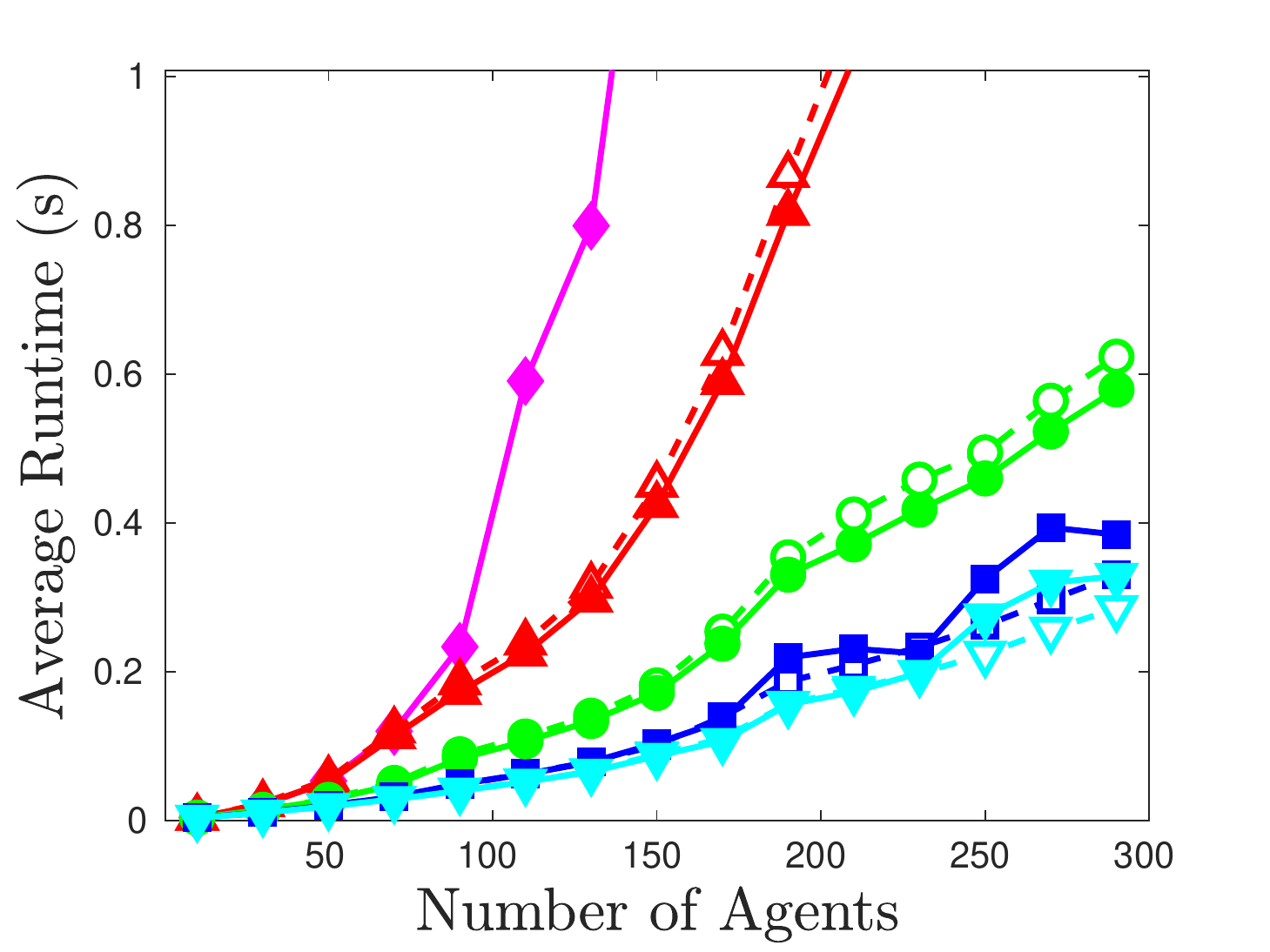}};
                \node at (a.north west)
        [
        anchor=center,
        xshift=12mm,
        yshift=-8mm
        ]
        {
            \includegraphics[width=0.3\linewidth]{figures/paris}
        };
             \node at (a.north west)
	    [
	    anchor=center,
	    xshift=12mm,
	    yshift=-16mm
	    ]
	    {
	    	\tiny{Paris 1 256}
	    };
        \end{tikzpicture}
		% 		\caption{}
	\end{subfigure}	
	\caption{Success ratio and average runtime over 25 random scenarios of CBS$^+$, ECBS$^+$, EECBS$^+$, and CBSB(-BP) with different numbers of agents in 6 different maps, given a time limit of 10 seconds and a memory limit of 16GB. Average runtime is plotted only if the corresponding solver found a solution in more than 75\% of the 25 random scenarios.}
	\label{cbsb:f:benchmark}
\end{figure*}

%%%%%%%%%%%%%%%%%%%%%%%%%%%%%%%%%%%%%%%%%%%%%%%%%%%%%%%%%%%%%%%%%%%%%%%%%%%%%%%

\begin{figure}[ht]
	\centering
	\def\breakdownScale{0.24}
	\begin{subfigure}{\breakdownScale\textwidth}
        \begin{tikzpicture}
        \node(a){\includegraphics[width=\myLineScale\linewidth, trim={2mm 0 2mm 6mm},clip]{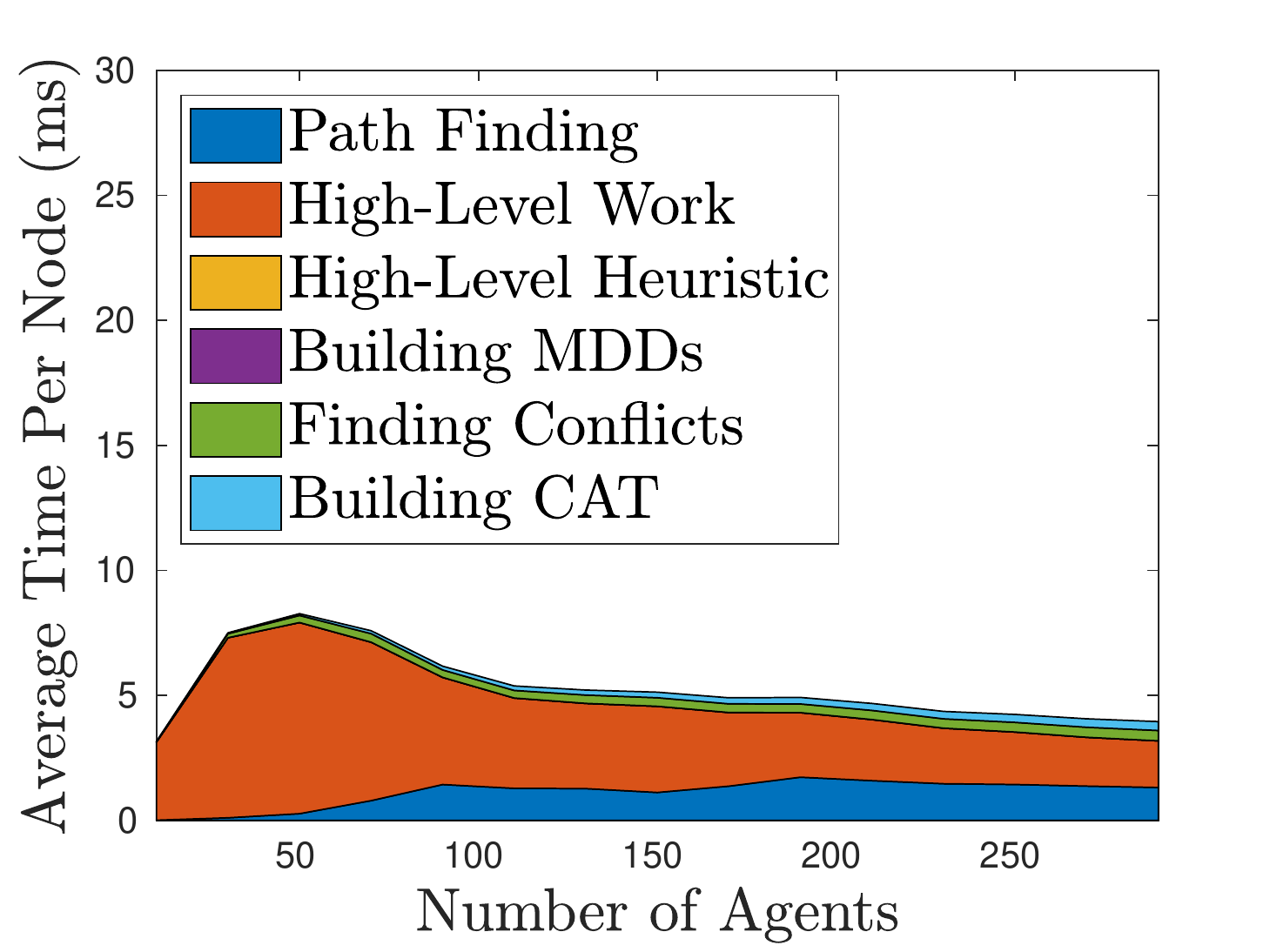}};
        \end{tikzpicture}
 		\caption{CBSB}
	\end{subfigure}
	\begin{subfigure}{\breakdownScale\textwidth}
        \begin{tikzpicture}
        \node(a){\includegraphics[width=\myLineScale\linewidth, trim={2mm 0 2mm 6mm},clip]{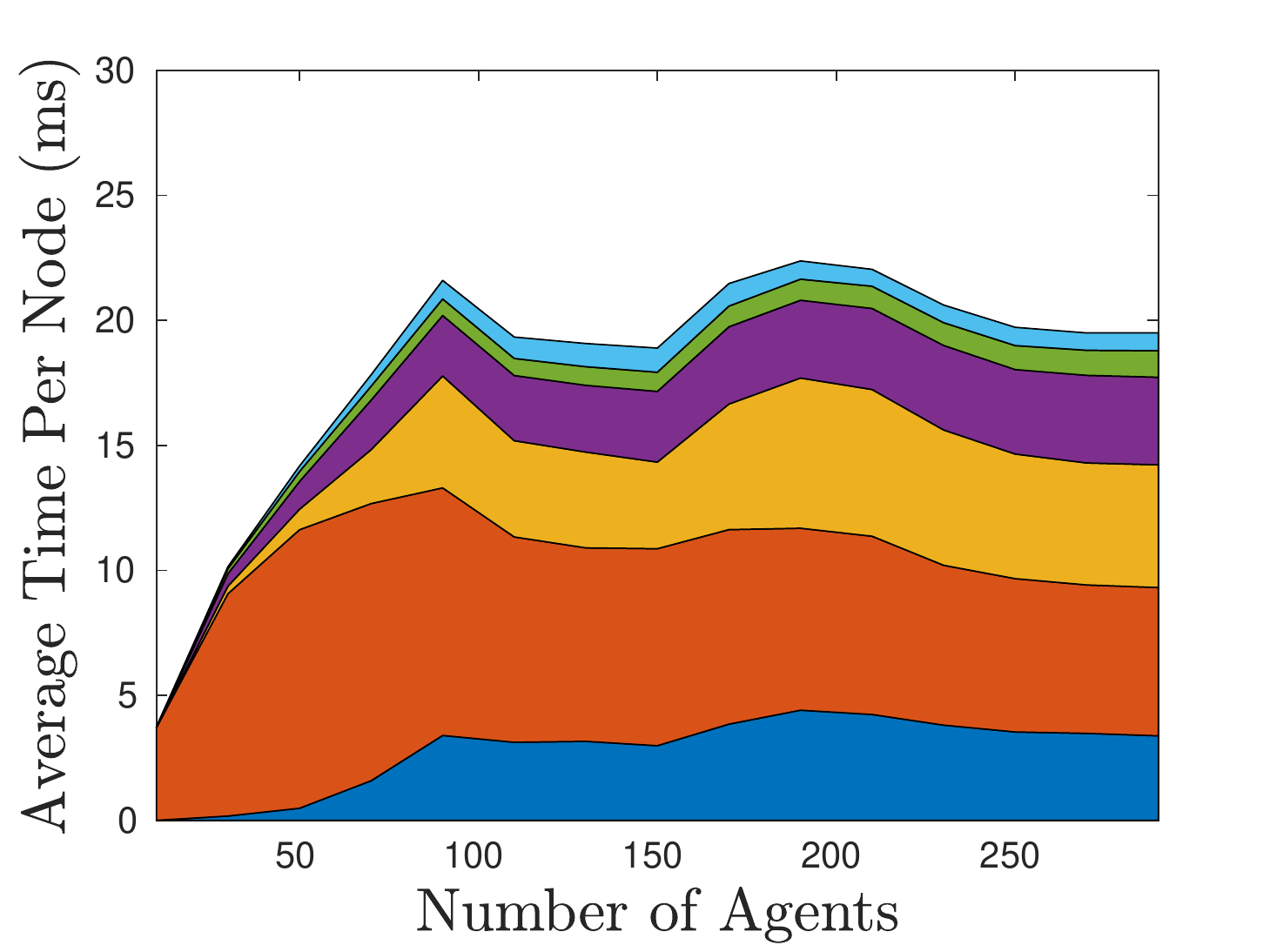}};
        \node at (a.north west)
        [
        anchor=center,
        xshift=12mm,
        yshift=-8mm
        ]
        {
            \includegraphics[width=0.23\linewidth]{figures/paris}
        };
         \node at (a.north west)
	    [
	    anchor=center,
	    xshift=12mm,
	    yshift=-15mm
	    ]
	    {
	    	\tiny{Paris 1 256}
	    };
        \end{tikzpicture}
 		\caption{EECBS$^+$}
	\end{subfigure}
% 	\\[-1ex]
% 	\begin{subfigure}{\breakdownScale\textwidth}
%         \begin{tikzpicture}
%         \node(a){\includegraphics[width=\myLineScale\linewidth,trim={2mm 0 2mm 6mm},clip]{}};
%         \end{tikzpicture}
% 		\caption{CBSB}
% 	\end{subfigure}\hfill
% 	\begin{subfigure}{\breakdownScale\textwidth}
%         \begin{tikzpicture}
%         \node(a){\includegraphics[width=\myLineScale\linewidth,trim={2mm 0 2mm 6mm},clip]{}};
%                 \node at (a.north west)
%         [
%         anchor=center,
%         xshift=12mm,
%         yshift=-8mm
%         ]
%         {
%             \includegraphics[width=0.23\linewidth]{figures/warehouse}
%         };
%              \node at (a.north west)
% 	    [
% 	    anchor=center,
% 	    xshift=12mm,
% 	    yshift=-14mm
% 	    ]
% 	    {
% 	    	\tiny{warehouse-10-20-10-2-1}
% 	    };
%         \end{tikzpicture}
% 		\caption{EECBS$^+$}
% 	\end{subfigure}
	\caption{Average time per node over problem instances solved by both CBSB (left) and EECBS$^+$ (right) with the suboptimality factor of 1.2 in two different environments, namely, \texttt{Paris 1 256} 
% 	(top) and \texttt{warehouse- 10-20-10-2-1} (bottom).
	.}
	\label{cbsb:f:breakdown}
\end{figure}
%%%%%%%%%%%%%%%%%%%%%%%%%%%%%%%%%%%%%%%%%%%%%%%%%%%%%%%%%%%%%%%%%%%%%%%%%%%%%%%
\section{Numerical Experiments}

In this section, we evaluate CBSB on the standard MAPF benchmark suite~\cite{Stern2019} using six maps of different sizes with 15 different numbers of agents per map. 
For each map, we use 25 ``randomly" generated scenarios of the benchmark suite for a given number of agents, where each scenario contains different start and target locations of agents.
We compare five different CBS variants: CBS, ECBS, EECBS, CBSB, and CBSB-BP with 
suboptimality factors of 1.2 and 10 (i.e., $w=1.2$ and $w=10.0$) except for CBS, which finds the optimal solution. 
We use the modern implementation of CBS, ECBS, and EECBS found in \url{https://github.com/Jiaoyang-Li/EECBS},
and denote these algorithms CBS$^+$, ECBS$^+$, and EECBS$^+$, respectively. 
These algorithms are equipped with WDG heuristic~\cite{Li2019}, Prioritizing Conflicts~\cite{Boyarski2015a}, Bypassing Conflicts~\cite{Boyarski2015b}, and symmetry reasoning~\cite{Li2020}. 
All the experiments were given a time limit of 10 seconds, with a memory limit of 16GB. 
The algorithms were implemented in C++ and all the experiments were conducted on Ubuntu 18.04 LTS on an Intel Core i7-8750H with 16 GB of RAM.

Figure~\ref{cbsb:f:cost_comparison} plots the average solution cost over 25 random scenarios. The average solution cost is plotted only if more than 75\% of the 25 random scenarios are solved. 
In all cases, ECBS$^+$, EECBS$^+$, and CBSB find a near-optimal solution.
In Figure~\ref{cbsb:f:benchmark}, the success ratio and the average runtime over 25 random scenarios are plotted for different number of agents in six different maps. 
The average runtime is plotted only if more than 75\% of the 25 random scenarios are solved. 

As shown in Figure~\ref{cbsb:f:benchmark}, CBSB(-BP), and EECBS$^+$ outperform the other two algorithms, both in terms of the success rate and the average runtime to find a solution.
CBSB-BP always performs better than CBSB.
CBSB(-BP), and EECBS$^+$ show similar performances, and there is a no clear winner for all maps. 
CBSB(-BP) performed better in easier problem instances, where the success rates are high, whereas EECBS$^+$ performed better in harder problem instances.
EECBS$^+$ performs better than CBSB if finding a conflict minimal path is difficult due to many conflicts. 
CBSB(-BP) performs better than EECBS$^+$ if conflicts are relatively easy to be resolved at the expense of path length.   

Figure~\ref{cbsb:f:breakdown} shows the average runtime per generated CT node between CBSB and EECBS in two different environments, where CBSB performs better than EECBS.
\textsf{Path Finding} indicates the average time of the low-level search, and \textsf{High-Level Work} is the time it took at the high-level search.
\textsf{High-Level Heuristic} is the time it took for building a WDG heuristic graph and solving the edge-weighted minimum vertex cover.
\textsf{Building MDDs} indicates the time it took for building MDDs, and \textsf{Finding Conflicts} is the time to find conflicts among the agents in a CT node. 
\textsf{Building CAT} indicates the time it took for building the conflict avoidance table for each of the planning agents at a CT node in order to find any conflicting nodes at the low-level search.

In the \texttt{Paris 1 256} map (Figure~\ref{cbsb:f:breakdown}), CBSB spent less time than EECBS$^+$ both at the low-level and the high-level search.
At the high-level search, EECBS$^+$ spent more time to maintain the additional priority queues.
Also EECBS$^+$ has a noticeable overhead in computing the admissible high-level heuristic and building MDDs due to a large map size. 
% In the \texttt{warehouse-10-20-10-2-1} map (Figure~\ref{cbsb:f:breakdown} bottom row), CBSB spends less time per node at the low-level search and at the high-level search.
% In this map, CBSB generates more CT nodes compared to EECBS+,
% resulting in a slightly slower computation time.
% The overhead of computing WDG heuristic and building MDDs is negligible for EECBS$^+$ since the map is small.

We omit comparisons with the many MAPF solvers that have already been shown to perform worse than CBS~\cite{Wagner2011, Felner2012} or worse than ECBS~\cite{Barer2014} or worse than EECBS~\cite{Lam2020, Surynek2019}. 
We also omit comparisons with MAPF solvers that have no completeness and bounded-suboptimality guarantees~\cite{Li2022,Silver2021, Ma2019}. 
In summary, CBSB(-BP) shows state-of-the-art performance using a much simpler implementation, namely, 
without using the WDG heuristic~\cite{Li2019}, Prioritizing Conflicts~\cite{Boyarski2015a}, or Symmetry Reasoning~\cite{Li2020}, enhancements of CBS.

\section{Conclusion}

We have proposed a new bounded-cost MAPF algorithm, called CBSB, which uses a budgeted COA* (bCOA*) search at the low-level and a modified focal search at the high-level.
bCOA* can find a more informed heuristic plan, which is still bounded above, using a single priority queue.
We prove that, given a budget $B$, bCOA* returns a solution having a minimal number of conflicts, 
which is at most $B$-long, if one exists. 
If no solution exists that is shorter than $B,$ then bCOA* returns the shortest path.
Based on this property of bCOA*, a more informed suboptimality upper bound can be provided for the high-level search, allowing more exploration of minimal conflict plans.
We also prove that CBSB is complete, and returns a bounded-suboptimal solution, if one exists.
% CBSB is simple to implement as it maintains a single priority queue. 
In the benchmark experiments, CBSB showed state-of-the-art performance despite its simple implementation.  
CBSB was able to compute a near-optimal solution within a fraction of second for hundreds of agents.

%\section*{Acknowledgement}
%We would like to thank Sven Koenig for his feedback on the manuscript.
%This work has been supported by ARL under DCIST CRA W911NF-17-2-0181 and NSF under award IIS-2008686.

% \bibliographystyle{elsarticle-num} 
\bibliographystyle{plainnat}
\bibliography{bib/references.bib}

\begin{thebibliography}{10}
\expandafter\ifx\csname url\endcsname\relax
  \def\url#1{\texttt{#1}}\fi
\expandafter\ifx\csname urlprefix\endcsname\relax\def\urlprefix{URL }\fi
\expandafter\ifx\csname href\endcsname\relax
  \def\href#1#2{#2} \def\path#1{#1}\fi

\bibitem{Wurman2008}
P.~R. Wurman, R.~D'Andrea, M.~Mountz, Coordinating {Hundreds} of {Cooperative},
  {Autonomous} {Vehicles} in {Warehouses}, AI Magazine 29~(1) (2008) 9--9,
  number: 1.
\newblock \href {https://doi.org/10.1609/aimag.v29i1.2082}
  {\path{doi:10.1609/aimag.v29i1.2082}}.

\bibitem{Okoso2019}
A.~Okoso, K.~Otaki, T.~Nishi, Multi-agent path finding with priority for
  cooperative automated valet parking, IEEE Intelligent Transportation Systems
  Conference (ITSC) (2019) 2135--2140.

\bibitem{Silver2005}
D.~Silver, Cooperative pathfinding, in: Proceedings of the First AAAI
  Conference on Artificial Intelligence and Interactive Digital Entertainment,
  AIIDE'05, AAAI Press, 2005, p. 117–122.

\bibitem{bennewitz2002}
M.~Bennewitz, W.~Burgard, S.~Thrun, Finding and optimizing solvable priority
  schemes for decoupled path planning techniques for teams of mobile robots,
  Robotics and Autonomous Systems 41~(2) (2002) 89--99, ninth International
  Symposium on Intelligent Robotic Systems.
\newblock \href {https://doi.org/https://doi.org/10.1016/S0921-8890(02)00256-7}
  {\path{doi:https://doi.org/10.1016/S0921-8890(02)00256-7}}.

\bibitem{dresner2008}
K.~Dresner, P.~Stone, A {Multiagent} {Approach} to {Autonomous} {Intersection}
  {Management}, Journal of Artificial Intelligence Research 31 (2008) 591--656.
\newblock \href {https://doi.org/10.1613/jair.2502}
  {\path{doi:10.1613/jair.2502}}.

\bibitem{Ho2019}
F.~Ho, A.~Goncalves, A.~Salta, M.~Cavazza, R.~Geraldes, H.~Prendinger,
  Multi-agent path finding for {UAV} traffic management: {Robotics} track,
  University of Greenwich (2019) 131--139.

\bibitem{ryan2008}
M.~R. Ryan, Exploiting subgraph structure in multi-robot path planning, Journal
  of Artificial Intelligence Research 31 (2008) 497 – 542, cited by: 145; All
  Open Access, Bronze Open Access, Green Open Access.
\newblock \href {https://doi.org/10.1613/jair.2408}
  {\path{doi:10.1613/jair.2408}}.

\bibitem{standley2010}
T.~Standley, Finding optimal solutions to cooperative pathfinding problems, in:
  Proceedings of the {Twenty}-{Fourth} {AAAI} {Conference} on {Artificial}
  {Intelligence}, {AAAI}'10, AAAI Press, Atlanta, Georgia, 2010, pp. 173--178.

\bibitem{felner2017}
A.~Felner, R.~Stern, S.~Shimony, E.~Boyarski, M.~Goldenberg, G.~Sharon,
  N.~Sturtevant, G.~Wagner, P.~Surynek, Search-{Based} {Optimal} {Solvers} for
  the {Multi}-{Agent} {Pathfinding} {Problem}: {Summary} and {Challenges},
  Proceedings of the International Symposium on Combinatorial Search 8~(1)
  (2017) 29--37, number: 1.

\bibitem{Wagner2011}
G.~Wagner, H.~Choset, M*: A complete multirobot path planning algorithm with
  performance bounds, in: IEEE/RSJ International Conference on Intelligent
  Robots and Systems, 2011, pp. 3260--3267.
\newblock \href {https://doi.org/10.1109/IROS.2011.6095022}
  {\path{doi:10.1109/IROS.2011.6095022}}.

\bibitem{Felner2012}
A.~Felner, M.~Goldenberg, G.~Sharon, R.~Stern, T.~Beja, N.~Sturtevant,
  J.~Schaeffer, R.~Holte, Partial-{Expansion} {A}* with {Selective} {Node}
  {Generation}, Proceedings of the AAAI Conference on Artificial Intelligence
  26~(1) (2012) 471--477, number: 1.

\bibitem{Standley2011}
T.~S. Standley, R.~Korf, Complete {Algorithms} for {Cooperative} {Pathﬁnding}
  {Problems}, in: Twenty-{Second} {International} {Joint} {Conference} on
  {Artificial} {Intelligence}, 2011.

\bibitem{Solovey2016}
K.~Solovey, O.~Salzman, D.~Halperin, Finding a needle in an exponential
  haystack, Int. J. Rob. Res. 35~(5) (2016) 501–513.
\newblock \href {https://doi.org/10.1177/0278364915615688}
  {\path{doi:10.1177/0278364915615688}}.

\bibitem{shome2020}
R.~Shome, K.~Solovey, A.~Dobson, D.~Halperin, K.~E. Bekris, {dRRT}*: {Scalable}
  and informed asymptotically-optimal multi-robot motion planning, Autonomous
  Robots 44~(3) (2020) 443--467.
\newblock \href {https://doi.org/10.1007/s10514-019-09832-9}
  {\path{doi:10.1007/s10514-019-09832-9}}.

\bibitem{Sharon2015}
G.~Sharon, R.~Stern, A.~Felner, N.~R. Sturtevant, Conflict-based search for
  optimal multi-agent pathfinding, Artificial Intelligence 219 (2015) 40--66.

\bibitem{Boyarski2015a}
E.~Boyarski, A.~Felner, R.~Stern, G.~Sharon, D.~Tolpin, O.~Betzalel,
  E.~Shimony, {ICBS}: Improved conflict-based search algorithm for multi-agent
  pathfinding, in: Proceedings of the International Conference on Artificial
  Intelligence, Buenos Aires, Argentina, 2015, pp. 740--746.

\bibitem{Li2020}
J.~Li, G.~Gange, D.~Harabor, P.~J. Stuckey, H.~Ma, S.~Koenig, New techniques
  for pairwise symmetry breaking in multi-agent path finding, Proceedings of
  the International Conference on Automated Planning and Scheduling 30~(1)
  (2020) 193--201.

\bibitem{Zhang2020}
H.~Zhang, J.~Li, P.~Surynek, S.~Koenig, T.~K.~S. Kumar, Multi-{Agent} {Path}
  {Finding} with {Mutex} {Propagation}, Proceedings of the International
  Conference on Automated Planning and Scheduling 30 (2020) 323--332.

\bibitem{Li2019}
J.~Li, A.~Felner, E.~Boyarski, H.~Ma, S.~Koenig, Improved {Heuristics} for
  {Multi}-{Agent} {Path} {Finding} with {Conflict}-{Based} {Search}, in:
  Proceedings of IJCAI, 2019, pp. 442--449.

\bibitem{Boyarski2020b}
E.~Boyarski, P.~{Le Bodic}, D.~Harabor, P.~Stuckey, A.~Felner, F-cardinal
  conflicts in conflict-based search, in: D.~Harabor, M.~Vallati (Eds.),
  Proceedings of the Twelfth International Symposium on Combinatorial Search,
  Association for the Advancement of Artificial Intelligence (AAAI), USA, 2020,
  pp. 123--124.

\bibitem{Felner2018}
A.~Felner, J.~Li, E.~Boyarski, H.~Ma, L.~Cohen, T.~K.~S. Kumar, S.~Koenig,
  Adding heuristics to conflict-based search for multi-agent path finding,
  Proceedings of the International Conference on Automated Planning and
  Scheduling 28~(1) (2018) 83--87.

\bibitem{Yu2013}
J.~Yu, S.~M. LaValle, Structure and intractability of optimal multi-robot path
  planning on graphs, in: Proceedings of the AAAI Conference on Artificial
  Intelligence, Bellevue, WA, 2013, pp. 1443--1449.

\bibitem{Barer2014}
M.~Barer, G.~Sharon, R.~Stern, A.~Felner, Suboptimal {Variants} of the
  {Conflict}-{Based} {Search} {Algorithm} for the {Multi}-{Agent} {Pathfinding}
  {Problem}, in: Seventh {Annual} {Symposium} on {Combinatorial} {Search},
  2014.

\bibitem{Pearl1982}
J.~Pearl, J.~H. Kim, Studies in semi-admissible heuristics, IEEE Transactions
  on Pattern Analysis and Machine Intelligence 4~(4) (1982) 392--399.
\newblock \href {https://doi.org/10.1109/TPAMI.1982.4767270}
  {\path{doi:10.1109/TPAMI.1982.4767270}}.

\bibitem{Li2021}
J.~Li, W.~Ruml, S.~Koenig, {EECBS}: {A} {Bounded}-{Suboptimal} {Search} for
  {Multi}-{Agent} {Path} {Finding}, Proceedings of the AAAI Conference on
  Artificial Intelligence 35~(14) (2021) 12353--12362, number: 14.

\bibitem{Thayer2011}
J.~Thayer, A.~Dionne, W.~Ruml, Learning {Inadmissible} {Heuristics} {During}
  {Search}, Proceedings of the International Conference on Automated Planning
  and Scheduling 21 (2011) 250--257.

\bibitem{Boyarski2020}
E.~Boyarski, A.~Felner, D.~Harabor, P.~J. Stuckey, L.~Cohen, J.~Li, S.~Koenig,
  Iterative-{Deepening} {Conflict}-{Based} {Search}, Vol.~4, 2020, pp.
  4084--4090, iSSN: 1045-0823.
\newblock \href {https://doi.org/10.24963/ijcai.2020/565}
  {\path{doi:10.24963/ijcai.2020/565}}.

\bibitem{Wooden2006}
D.~Wooden, M.~Egerstedt, On finding globally optimal paths through weighted
  colored graphs, in: Proceedings of the 45th IEEE Conference on Decision and
  Control, San Diego, CA, 2006, pp. 1948--1953.
\newblock \href {https://doi.org/10.1109/CDC.2006.377172}
  {\path{doi:10.1109/CDC.2006.377172}}.

\bibitem{Lim2021a}
J.~Lim, P.~Tsiotras, A generalized {A*} algorithm for finding globally optimal
  paths in weighted colored graphs, in: IEEE International Conference on
  Robotics and Automation (ICRA), 2021, pp. 7503--7509.
\newblock \href {https://doi.org/10.1109/ICRA48506.2021.9561135}
  {\path{doi:10.1109/ICRA48506.2021.9561135}}.

\bibitem{Lim2021b}
J.~Lim, O.~Salzman, P.~Tsiotras, Class-{O}rdered {LPA}*: An incremental-search
  algorithm for weighted colored graphs, in: IEEE/RSJ International Conference
  on Intelligent Robots and Systems (IROS), 2021, pp. 6907--6913.
\newblock \href {https://doi.org/10.1109/IROS51168.2021.9636736}
  {\path{doi:10.1109/IROS51168.2021.9636736}}.

\bibitem{Stern2019}
R.~Stern, N.~R. Sturtevant, A.~Felner, S.~Koenig, H.~Ma, T.~T. Walker, J.~Li,
  D.~Atzmon, L.~Cohen, T.~K.~S. Kumar, R.~Barták, E.~Boyarski, Multi-{Agent}
  {Pathfinding}: {Definitions}, {Variants}, and {Benchmarks}, in: Twelfth
  {Annual} {Symposium} on {Combinatorial} {Search}, 2019.

\bibitem{surynek2017}
P.~Surynek, Time-expanded graph-based propositional encodings for
  makespan-optimal solving of cooperative path finding problems, Annals of
  Mathematics and Artificial Intelligence 81~(3) (2017) 329--375.
\newblock \href {https://doi.org/10.1007/s10472-017-9560-z}
  {\path{doi:10.1007/s10472-017-9560-z}}.

\bibitem{Sharon2013}
G.~Sharon, R.~Stern, M.~Goldenberg, A.~Felner, The increasing cost tree search
  for optimal multi-agent pathfinding, Artificial Intelligence 195 (2013)
  470--495.
\newblock \href {https://doi.org/https://doi.org/10.1016/j.artint.2012.11.006}
  {\path{doi:https://doi.org/10.1016/j.artint.2012.11.006}}.

\bibitem{Boyarski2015b}
E.~Boyarski, A.~Felner, G.~Sharon, R.~Stern, Don't {Split}, {Try} {To} {Work}
  {It} {Out}: {Bypassing} {Conflicts} in {Multi}-{Agent} {Pathfinding},
  Proceedings of the International Conference on Automated Planning and
  Scheduling 25 (2015) 47--51.

\bibitem{Li2021b}
J.~Li, D.~Harabor, P.~J. Stuckey, H.~Ma, G.~Gange, S.~Koenig, Pairwise symmetry
  reasoning for multi-agent path finding search, Artificial Intelligence 301
  (2021) 103574.
\newblock \href {https://doi.org/https://doi.org/10.1016/j.artint.2021.103574}
  {\path{doi:https://doi.org/10.1016/j.artint.2021.103574}}.

\bibitem{Lam2020}
E.~Lam, P.~L. Bodic, New {Valid} {Inequalities} in
  {Branch}-and-{Cut}-and-{Price} for {Multi}-{Agent} {Path} {Finding},
  Proceedings of the International Conference on Automated Planning and
  Scheduling 30 (2020) 184--192.

\bibitem{Surynek2019}
P.~Surynek, Lazy compilation of variants of multi-robot path planning with
  satisfiability modulo theory {(SMT)} approach, in: IEEE/RSJ International
  Conference on Intelligent Robots and Systems (IROS), 2019, pp. 3282--3287.
\newblock \href {https://doi.org/10.1109/IROS40897.2019.8967962}
  {\path{doi:10.1109/IROS40897.2019.8967962}}.

\bibitem{Li2022}
J.~Li, Z.~Chen, D.~Harabor, P.~Stuckey, S.~Koenig, {MAPF-LNS2}: Fast repairing
  for multi-agent path finding via large neighborhood search, in: Proceedings
  of the AAAI Conference on Artificial Intelligence, 2022.

\bibitem{Silver2021}
D.~Silver, Cooperative pathfinding, Proceedings of the AAAI Conference on
  Artificial Intelligence and Interactive Digital Entertainment 1~(1) (2021)
  117--122.

\bibitem{Ma2019}
H.~Ma, D.~Harabor, P.~J. Stuckey, J.~Li, S.~Koenig, Searching with consistent
  prioritization for multi-agent path finding, Proceedings of the AAAI
  Conference on Artificial Intelligence 33~(01) (Jul. 2019).
\newblock \href {https://doi.org/10.1609/aaai.v33i01.33017643}
  {\path{doi:10.1609/aaai.v33i01.33017643}}.

\end{thebibliography}

\end{document}